%% file: arxiv-main.tex
\documentclass{article}
\usepackage{graphicx} 
\usepackage{amsmath}
\usepackage{amssymb}
\usepackage{mathtools}
\usepackage{algorithm}
\usepackage{algpseudocode}
\usepackage{geometry}
\usepackage{color}
\usepackage{amsthm}
\usepackage[backend=bibtex, style=alphabetic]{biblatex}
\usepackage{tikz}
\usetikzlibrary{arrows.meta,positioning,calc}

\usepackage{hyperref}
\usepackage{cleveref}

\usepackage{enumitem}
\usepackage{bbm}
\usepackage{bm}
\usepackage{float}

\bibliography{ref}

\newcommand{\1}{\mathbbm{1}}

\newlist{assumptions}{enumerate}{1}
\setlist[assumptions,1]{
  label=(A\arabic*),
  ref=A\arabic*,
  leftmargin=*,   
  align=left
}

\crefname{assumption}{Assumption}{Assumptions}
\Crefname{assumption}{Assumption}{Assumptions}

\newcommand{\norm}[1]{\left\lVert#1\right\rVert}

\theoremstyle{plain}
\newtheorem{theorem}{Theorem}[section]
\newtheorem{proposition}[theorem]{Proposition}
\newtheorem{lemma}[theorem]{Lemma}

\theoremstyle{definition}
\newtheorem{definition}[theorem]{Definition}

\newtheorem{example}[theorem]{Example}
\newtheorem{claim}[theorem]{Claim}

\theoremstyle{remark}
\newtheorem{remark}[theorem]{Remark}

\newcommand{\eps}{\varepsilon}
\DeclareMathOperator*{\Eop}{\mathbb{E}}

\allowdisplaybreaks

\newcommand{\E}{\mathbb{E}}

\newcommand{\calA}{\mathcal{A}}

\newcommand{\Var}{\mathsf{Var}}

\newcommand{\Law}{\mathrm{Law}}
\newcommand{\bfu}{\boldsymbol{u}}
\newcommand{\bfy}{\boldsymbol{y}}
\newcommand{\bfv}{\boldsymbol{v}}

\newcommand{\bfh}{\boldsymbol{h}}
\newcommand{\bfb}{\boldsymbol{b}}
\newcommand{\bfalpha}{\boldsymbol{\alpha}}

\newcommand{\bftheta}{\boldsymbol{\theta}}
\newcommand{\bfw}{\boldsymbol{w}}

\newcommand{\Cov}{\mathsf{Cov}}

\newcommand{\R}{\mathbb{R}}
\newcommand{\N}{\mathbb{N}}

\newcommand{\brac}[1]{\left\langle #1\right\rangle}

\newcommand{\bfx}{\bm{x}}

\title{Rigorous Asymptotics for First-Order Algorithms
Through the Dynamical Cavity Method}
\author{Yatin Dandi\thanks{Statistical Physics of Computation Laboratory, École polytechnique fédérale de Lausanne (EPFL) CH-1015 Lausanne. Email: yatin.dandi@epfl.ch.}, David Gamarnik\thanks{Sloan School of Management, 
Operations Research Center and Institute of Data,  Systems and Society (IDSS), MIT. Email: gamarnik@mit.edu. }, Francisco Pernice\thanks{CSAIL and LIDS, MIT. Email: fpernice@mit.edu.}, Lenka Zdeborová\thanks{Statistical Physics of Computation Laboratory, École polytechnique fédérale de Lausanne (EPFL) CH-1015 Lausanne. Email: lenka.zdeborova@epfl.ch.}}
\date{}

\begin{document}

\maketitle

\begin{abstract}%
Dynamical Mean Field Theory (DMFT) provides an asymptotic description of the dynamics of macroscopic observables in certain disordered systems. Originally pioneered in the context of spin glasses by \cite{sompolinsky1982relaxational}, it has since been used to derive asymptotic dynamical equations for a wide range of models in physics, high-dimensional statistics and machine learning. One of the main tools used by physicists to obtain these equations is the \emph{dynamical cavity method}, which has remained largely non-rigorous. In contrast, existing mathematical formalizations have relied on alternative approaches, including Gaussian conditioning, large deviations over paths, or Fourier analysis. In this work, we formalize the dynamical cavity method and use it to give a new proof of the DMFT equations for General First Order Methods, a broad class of dynamics encompassing algorithms such as Gradient Descent and Approximate Message Passing.
\end{abstract}

\noindent\textbf{Keywords:} Dynamical Mean Field Theory, Spin Glass Dynamics, Exact Asymptotics.

\section{Introduction}

Dynamical Mean Field Theory (DMFT) is a framework for analyzing the high-dimensional behavior of dynamics in disordered systems, such as those coming from high-dimensional statistics and machine learning. Originating in the physics of spin glasses \cite{sompolinsky1982relaxational, cugliandolo1993analytical}, DMFT provides closed, finite-dimensional equations for the evolution of macroscopic observables ---empirical and population risks, correlation with a planted signal, etc.--- along trajectories that are themselves high-dimensional. In recent years, these ideas have been used in a wide variety of statistical and machine learning applications, where algorithms such as Gradient Descent, Approximate Message Passing (AMP) and Power Iteration are often described, in suitable asymptotic regimes, by DMFT-type equations.

Despite its success, the methodological approaches to DMFT of different communities remain disparate. In the physics literature, a major derivation technique is the dynamical cavity method, where one isolates a coordinate and does a perturbative expansion of the rest of the system with respect to the effect of this coordinate to calculate how the system ``reacts'' to its presence \cite{mezard1987spin}. This approach is powerful and flexible but non-rigorous. By contrast, existing mathematical approaches have largely relied on different formalisms: Gaussian conditioning arguments \cite{bolthausen2014iterative, bayati2011dynamics, celentano2020estimation, celentano2021high, gerbelot2024rigorous}, pathwise large deviations \cite{arous1995large, arous1997symmetric, guionnet1997averaged, ben2006cugliandolo, grunwald1996sanov}, or Fourier methods \cite{bayati2015universality, jones2024fourier}. Notably, \cite{bao2025leave, han2025entrywise} follow an approach that is conceptually similar to the dynamical cavity method, but their proof works by first analyzing the dynamics of AMP and then reducing other DMFT equations to this case, an approach also used in \cite{celentano2020estimation, celentano2021high, dudeja2024spectral}. The dynamical cavity method instead
treats both of these cases in a unified way: the only difference is that in the case of AMP, the perturbative expansion vanishes after the zeroth order term, while in the general case the first-order term also survives.

This paper gives a rigorous version of the dynamical cavity method and uses it to re-derive the DMFT equations for a broad algorithmic class known as General First Order Methods (GFOM) \cite{celentano2020estimation}. GFOM provides a unified template that captures a wide range of iterative procedures, including (variants of) Gradient Descent, AMP, Power Iteration, and related algorithms that alternate between multiplication by a random matrix and its transpose, and coordinate-wise nonlinearities and external randomness.

\subsection{Gradient Descent for Generalized Linear Models Example}

To introduce the model that we study, we begin by presenting the special case of Gradient Descent for a Generalized Linear Model (GLM), which will motivate the general definition.

\paragraph{GLM Setup.} The setting is the standard one for supervised learning: we have covariate-label pairs $(\bfx_i,y_i)$ for $i=1,\dots,N,$ where $\bfx_i \in \R^d$ and $y_i \in \R.$ We assume that the pairs $(\bfx_i,y_i)$ are i.i.d. across $i$ and that $y_i$ depends on $\bfx_i$ only through its inner product with a planted signal vector $\bftheta \in \R^d$. Without loss of generality, we can assume that there exists a measurable function $g :\R^2\to\R$ (on which we will later  impose regularity conditions) and a noise vector $\bfw \sim N(0, I_N)$ such that
\begin{align*}
    \bfy &= g(X\bftheta ; \bfw).
\end{align*}
Above, $\bfy = (y_1,\dots,y_N)$, $X\in \R^{N\times d}$ is the data matrix with rows $\bfx_i$, and it's understood that $g$ is applied coordinate-wise, i.e., $g(X\bftheta ; \bfw) \in \R^N$ with $g(X\bftheta ; \bfw)_i = g(\brac{\bfx_i,\bftheta}; w_i).$

The goal is to find some $\bfv\in \R^d$ such that the map $\bfx\mapsto \brac{\bfx, \bfv}$ is ``close'' to $\bfx\mapsto \brac{\bfx, \bftheta}.$ For this purpose, given some (appropriately regular) loss function $\ell:\R^2\to\R,$ we aim to minimize the empirical risk
\begin{align*}
    R(\bfv) &= \sum_{i=1}^N \ell(\brac{\bfx_i, \bfv},y_i).
\end{align*}
We do this by Gradient Descent (GD), starting from a random $\bfv^0 \in \R^d$ and iterating
\begin{align}\label{eq:gradient-descent-iteration}
    \bfv^t &= \bfv^{t-1} - \eta \nabla R(\bfv^{t-1})
\end{align}
for some learning rate $\eta>0.$ Note that the gradient of the risk can be written as
\begin{align*}
    \nabla R(\bfv) &= \sum_{i=1}^N \partial_1 \ell(\brac{\bfx_i,\bfv}, y_i) \bfx_i \\
    &= X^T \partial_1 \ell(X\bfv, \bfy).
\end{align*}
Above, $\partial_1\ell$ denotes the partial derivative of $\ell$ with respect to the first coordinate, which maps $\R^2\to\R,$ and it's being applied coordinate-wise to each pair $((X\bfv)_i, y_i)$ for $i=1,\dots,N$ to obtain a vector in $\R^N.$ For the rest of the paper, we use this coordinate-wise notation without explicit mention.

\paragraph{What we Aim to Describe.} Under suitable assumptions (see \Cref{sec:main-result}), the DMFT equations describe the evolution of various macroscopic observables in the iteration \eqref{eq:gradient-descent-iteration}. For example, they can be used to compute the limiting normalized empirical risk at step $t$
\begin{align*}
    r^t&= \lim_{N\to\infty}\frac{1}{N} R(\bfv^t),
\end{align*}
or the overlap between any two GD iterates
\begin{align*}
    C_{s,t}&=\lim_{N\to\infty}\frac{1}{N} \brac{\bfv^s,\bfv^t}
\end{align*}
as well as many other quantities of interest along the GD trajectory. To do this, the key objects are joint empirical measures of the vectors $\bfv^t, \bfy$ and
\begin{align*}
    \bfb^t &= X \bfv^{t-1}.
\end{align*}
Indeed, note that $r^t$ is a functional of the empirical measure $\frac{1}{N}\sum_{i=1}^N\delta_{(b^{t+1}_i,y_i)}$ and $\{C_{s,t}\}_{1\leq s,t\leq T}$ is a functional of the empirical measure $\frac{1}{d}\sum_{j=1}^d\delta_{(v_j^1,\dots,v_j^T)}$. The DMFT equations will give exact asymptotic descriptions of these empirical measures, and hence of $r^t,C_{s,t}$, and any other functional. However, in order for the equations to close, it's convenient to work in a larger space of variables, describing a richer empirical measure object, and then obtaining these more specialized quantities as projections. This is what motivates the formulation of the general model that we analyze. We first write the iteration \eqref{eq:gradient-descent-iteration} in this language, and in the next section present the general model abstractly.

\paragraph{GFOM Formulation of Gradient Descent.} We introduce the additional vectors
\begin{align*}
    \bfu^t &= \partial_1 \ell(X\bfv^{t-1}, \bfy) \\
    &= \partial_1 \ell(\bfb^t, \bfy)  \in \R^N
\end{align*}
and
\begin{align*}
    \bfh^t &= X^T \bfu^t =\nabla R(\bfv^{t-1})\in \R^d.
\end{align*}
Now recall that we have $\bfy = g(X\bftheta,\bfw).$ Defining $\bfb^* = X\bftheta$ and absorbing $\partial_1\ell$ and $g$ into one function, there exists a function $F:\R^3 \to\R$ such that
\begin{align*}
    \bfu^t &= F(\bfb^*, \bfb^t; \bfw).
\end{align*}
Moreover, we have
\begin{align*}
    \bfv^t &= \bfv^{t-1} - \eta \bfh^t,
\end{align*}
so there exists a function $G_t:\R^t\to\R$ such that
\begin{align*}
    \bfv^t &= G_t(\bfh^1,\dots,\bfh^t;\bfv^0).
\end{align*}
Overall, we can re-write the iteration \eqref{eq:gradient-descent-iteration} in the following form.
\begin{itemize}
    \item Initialize $\bftheta,\bfv^0 \in \R^d,\bfb^* = X\bftheta \in \R^N.$
    \item For $t=1,\dots,$ let
    \begin{align}\label{eq:GD-iteration}
    &
    \begin{cases}
        \bfb^t = X\bfv^{t-1} &\in \R^N \\
        \bfu^t = F(\bfb^*,\bfb^t;\bfw) &\in \R^N \\
        \bfh^t = X^T \bfu^t &\in \R^d\\
        \bfv^t =G_t(\bfh^1,\dots,\bfh^t;\bfv^0) &\in \R^d.
    \end{cases}
    \end{align}
\end{itemize}
To close the DMFT equations, it will be useful for us to describe the full joint empirical measures of $\bftheta, \{\bfh^t\}_{t=1}^T, \{\bfv^t\}_{t=0}^T$ and $\bfb^*,\{\bfb^t\}_{t=1}^T, \{\bfu^t\}_{t=1}^T,$ respectively.

\subsection{General First Order Methods}
We now present the General First Order Methods (GFOM) model, first introduced in \cite{celentano2020estimation}, which is a generalization of the iteration \eqref{eq:GD-iteration} that also captures algorithms like Approximate Message Passing, Power Iteration, and many others.
\begin{definition}\label{defn:GFOM}
    Let $X \in \R^{N\times d}, \bftheta, \bfv^0 \in \R^d,\bfb^* = X\bftheta \in \R^N.$ For $t=1,\dots,$ suppose $\bfw^t \in \R^N, \tilde{\bfw}^t \in \R^d$, and $F_t, G_t:\R^{2t + 1}\to \R$. The \emph{GFOM orbit} is the iteration
    \begin{align}\label{eq:GFOM-orbit}
        \begin{cases}
            \bfb^t = X \bfv^{t-1} &\in \R^N \\
            \bfu^t = F_t(\bfb^*, \bfb^1,\dots,\bfb^t; \bfw^1,\dots,\bfw^t) & \in \R^N \\
            \bfh^t = X^T \bfu^t &\in \R^d \\
            \bfv^t = G_t(\bfh^1,\dots,\bfh^t;\bfv^0,\tilde{\bfw}^1, \dots, \tilde{\bfw}^t) & \in \R^d.
        \end{cases}
    \end{align}
    In our notation, we will often write $\bfu^t = F_t(\bfb^*, \bfb^1,\dots,\bfb^t)$ and $\bfv^t = G_t(\bfh^1,\dots,\bfh^t),$ making the dependence on $\bfw,\tilde{\bfw}$ and $\bfv^0$ implicit.
\end{definition}

\subsection{DMFT Equations}\label{sec:dmft-equations}
In this section we present the DMFT equations for the GFOM model of \Cref{defn:GFOM}. These equations will hold under some assumptions on the distribution of the data matrix $X$, the noise vectors $\bfw,\tilde{\bfw},$ the initializations $\bftheta,\bfv^0,$ and the functions $F_t,G_t.$ For now, we state the equations, mentioning these assumptions only as they are needed for the statement. We refer the reader to \Cref{sec:main-result} for a formal statement with all assumptions explicitly stated.

We assume throughout that $\alpha := d/N$ is a fixed constant. Given a time horizon $T \geq 0,$ the DMFT equations will describe a pair of probability measures $\nu_T$ and $\mu_T$ in $\R^{2T+2}$ and $\R^{2T+1},$ respectively. Under appropriate assumptions, we will prove that they are limits of the empirical measures
\begin{align}
    \nu_{T,N} &= \frac{1}{d}\sum_{j=1}^d \delta_{(\theta_j, \{h^t_j\}_{t=1}^T, \{v^t_j\}_{t=0}^T )} \label{eq:nu-T-finite-N}\\
    \mu_{T,N} &= \frac{1}{N}\sum_{i=1}^N \delta_{(b^*_i, \{b^t_i\}_{t=1}^T, \{u^t_i\}_{t=1}^T )},\label{eq:mu-T-finite-N}
\end{align}
respectively (see \Cref{sec:main-result}). We will define $\nu_T$ and $\mu_T$ as the laws of explicit \emph{effective processes} $(\theta, \{h^t\}_{t=1}^T,\{v^t\}_{t=0}^T)$ and $(b^*, \{b^t\}_{t=1}^T,\{u^t\}_{t=1}^T)$, respectively, which we will specify via sampling procedures. Moreover, these processes will be defined inductively in $T.$

We will assume that the entries $(\theta_j,v^0_j)$ are sampled i.i.d. across $j=1,\dots,d$ from a fixed, known measure. Hence, $\nu_0$ must be set equal to this measure for it to be a limit of \eqref{eq:nu-T-finite-N}. Then we set
\begin{align}\label{eq:mu-0-base-case}
    b^* \sim N(0,\alpha \E_{(\theta,v^0)\sim \nu_0}[\theta^2])
\end{align}
and $\mu_0 = \Law(b^*).$ This determines the base case $T=0.$ For $T\geq 1$, to define $\mu_T$ we assume $\nu_{T-1}$ has already been defined. We sample a centered Gaussian vector $(b^*, z^1,\dots,z^T)$ with covariance
\begin{align}\label{eq:nu-T-covariance}
    \Cov(b^*, z^1,\dots,z^T) &= \alpha\cdot \Cov_{\nu_{T-1}} (\theta, v^0,\dots,v^{T-1}),
\end{align}
where the right-hand side means the covariance of $(\theta, v^0,\dots,v^{T-1})$ when $(\theta, \{h^t\}_{t=1}^{T-1},\{v^t\}_{t=0}^{T-1})\sim \nu_{T-1}.$ Independently of $(b^*, z^1,\dots,z^T)$, we sample $(w^1,\dots,w^T) \sim N(0,I_T)$, and for $t=1,\dots,T,$ we set
\begin{align}\label{eq:mu-T-recursion}
\begin{cases}
    b^t = z^t + \alpha \sum_{s=1}^{t-1} \E_{\nu_{T-1}}[\partial_{h^s} v^{t-1}]u^s \\
    u^t = F_t(b^*,b^1,\dots,b^t;w^1,\dots,w^t).
\end{cases}
\end{align}
Finally, we set $\mu_T = \Law(b^*, \{b^t\}_{t=1}^T, \{u^t\}_{t=1}^T).$

To define $\nu_T,$ we assume that $\mu_T$ has already been defined. Here we sample $(v^0,\theta) \sim \nu_0$ and, independently, a centered Gaussian vector $(g^1,\dots,g^T)$ with covariance
\begin{align}\label{eq:mu-T-covariance}
    \Cov(g^1,\dots,g^T) &= \Cov_{\mu_T}(u^1,\dots,u^T).
\end{align}
Then, independently of $v^0,\theta,g^1,\dots,g^T,$ we sample $(\tilde{w}^1,\dots,\tilde{w}^T)\sim N(0,I_T),$ and for $t=1,\dots,T,$ we set
\begin{align}\label{eq:nu-T-recursion}
    \begin{cases}
        h^t = g^t + \E_{\mu_T}[\partial_{b^*}u^t] \theta + \sum_{s=1}^t \E_{\mu_T}[\partial_{b^s}u^t] v^{s-1} \\
        v^t = G_t(h^1,\dots,h^t;v^0,\tilde{w}^1,\dots,\tilde{w}^t).
    \end{cases}
\end{align}
Finally, we set $\nu_T = \Law(\theta, \{h^t\}_{t=1}^T, \{v^t\}_{t=0}^T)$.

\begin{remark}\label{rem:meaning-of-field-derivatives-in-dmft}
    In \eqref{eq:mu-T-recursion} and \eqref{eq:nu-T-recursion}, the expressions $\partial_{h^s} v^{t-1},\partial_{b^*}u^t$ and $\partial_{b^s}u^t$ are defined recursively. For instance $\partial_{z^s}u^t$ is recursively defined as follows:
    \begin{align*}
        \partial_{b^s}u^t &= \sum_{k=1}^t \partial_{k+1}F_t(b^*,b^1,\dots,b^t) \partial_{b^s} b^k \\
        \partial_{b^s} b^k &= \1\{s=k\} + \alpha \sum_{r=s}^{k-1} \E_{\nu_{T-1}}[\partial_{h^{r}} v^{k-1}]\partial_{b^s} u^{r}
    \end{align*}
    where $\partial_{k+1}$ means derivative with respect to the $k+1$'th input variable to $F_t.$ The expressions $\partial_{h^s} v^{t-1}$ and $\partial_{b^*}u^t$ are defined analogously.
\end{remark}

\subsection{Main Result}\label{sec:main-result}
We begin by stating the assumptions needed for our main theorem. Below, we assume $\sigma>0$ is a constant independent of $N.$
\begin{assumptions}
  \item\label{assump:1}
  The entries of $X$ are sampled i.i.d.\ from a $\sigma/\sqrt{N}$-subgaussian measure with mean zero and
  variance $1/N$.
  \item\label{assump:2} The entries of $\bfv^0,\bftheta \in \R^d$ are i.i.d. $(\theta_j,v^0_j)\sim\nu_0$ and independent of $X$, where $\nu_0$ is $\sigma$-subgaussian.
  \item \label{assump:3} The functions $F_t,G_t$ are $L$-Lipschitz, where $L$ is a constant independent of $N.$
  \item \label{assump:4} The noise vectors $\bfw^t,\tilde{\bfw}^t$ are independent of each other and of everything else, with entries i.i.d. $N(0,1).$
\end{assumptions}
As it turns out, the last assumption is without loss of generality. Assuming the noise to be Gaussian is done for convenience. We can now state our main theorem.
\begin{theorem}\label{thm:main}
    Suppose assumptions \eqref{assump:1}-\eqref{assump:4} hold in \Cref{defn:GFOM}. Let $T\geq 0$, and fix $L$-Lipschitz functions $\phi:\R^{2T+2}\to \R$ and $\psi:\R^{2T+1}\to\R.$ Let $\nu_{T,N},\mu_{T,N}$ be as in \eqref{eq:nu-T-finite-N}, \eqref{eq:mu-T-finite-N}, respectively and $\nu_T,\mu_T$ be defined by DMFT equations \eqref{eq:mu-0-base-case}-\eqref{eq:nu-T-recursion}. Then as $N,d\to\infty$ with $d/N=\alpha$ fixed, we have
    \begin{align*}
        \int \phi d\nu_{N,T} &\to \int \phi d\nu_{T} \\
        \int \psi d\mu_{N,T} &\to \int \psi d\mu_{T}
    \end{align*}
    in probability.
\end{theorem}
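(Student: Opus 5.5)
We will prove \Cref{thm:main} by induction on $T$, implementing the dynamical cavity method: we single out one coordinate (a column $j$ of $X$ for $\nu_T$, a row $i$ for $\mu_T$), run the dynamics without it, and expand the full dynamics to first order in the perturbation it causes. The induction hypothesis at stage $T$ is stronger than the theorem. For every fixed $j$, the law of the row $(\theta_j,\{h^t_j\},\{v^t_j\})$ converges in Wasserstein-2 to $\nu_T$, with the analogous statement for rows $i$ and $\mu_T$. In addition, all iterates have normalized $\ell_2$ norms bounded with high probability, and they are stable under deletion of a single row or column: deleting one index changes every other coordinate by $O(N^{-1/2})$ in aggregate $\ell_2$. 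Convergence of the empirical measures then follows from concentration. Given marginal convergence, we will show that $\frac1d\sum_j\phi(\cdot_j)$ has vanishing variance by an Efron--Stein/leave-one-out argument; the Lipschitz assumption \eqref{assump:3} and the subgaussianity in \eqref{assump:1}--\eqref{assump:2} make each coordinate's influence small. Equivalently, we will show that two distinct coordinates $j\neq j'$ are asymptotically independent, which comes out of the same cavity construction applied to two columns.

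\textbf{The cavity step for $\nu_T$.} Write $X=[X_{-j}\,|\,\bm{x}_j]$, where $\bm{x}_j\in\R^N$ is column $j$. We define the cavity orbit $(\bfb^t_{(j)},\bfu^t_{(j)},\bfh^t_{(j)},\bfv^t_{(j)})$ obtained by running \eqref{eq:GFOM-orbit} with $\theta_j=v^0_j=0$ and column $j$ removed. This orbit is independent of $\bm{x}_j$. In the full system, $\bfb^t=X_{-j}\bfv^{t-1}_{-j}+\bm{x}_j v^{t-1}_j$, so coordinate $j$ enters the $N$-side as a rank-one source term of size $O(N^{-1/2})$ per entry. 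We linearize: $\bfb^t-\bfb^t_{(j)}\approx \bm{x}_j v^{t-1}_j+\sum_{s}(\text{response})\,\bm{x}_j v^{s-1}_j+\bm{x}_j\theta_j(\cdots)$. The response is the linear response of the cavity system, namely the Jacobian of the rest of the dynamics, and the second-order remainder is controlled by Lipschitzness together with operator-norm bounds $\|X\|_{op}=O(1)$.

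We then compute
\[
h^t_j=\langle \bm{x}_j,\bfu^t\rangle=\langle \bm{x}_j,\bfu^t_{(j)}\rangle+\langle \bm{x}_j,\bfu^t-\bfu^t_{(j)}\rangle .
\]
The first term is Gaussian conditional on the cavity orbit, because $\bm{x}_j$ is independent of it; this is exact when $X$ is Gaussian, and follows from a CLT for subgaussian entries. Its covariance is $\frac1N\langle\bfu^s_{(j)},\bfu^t_{(j)}\rangle\to\Cov_{\mu_T}$ by the induction hypothesis, which gives the $g^t$ of \eqref{eq:mu-T-covariance}. The second term, by the chain rule, equals
\[
\sum_s \frac1N\sum_i \partial_{b^s_i}u^t_i\,v^{s-1}_j+(\text{analogous }\theta_j\text{ term})+o(1),
\]
since $\|\bm{x}_j\|^2\approx1$ and the off-diagonal contributions from the response average out. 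The empirical average of $\partial_{b^s}u^t$ converges to $\E_{\mu_T}[\partial_{b^s}u^t]$, which yields the Onsager terms in \eqref{eq:nu-T-recursion}. The derivation of $\mu_T$ from $\nu_{T-1}$ is symmetric. There the factor $\alpha$ arises because $\langle \bm{x}_i^{\mathrm{row}},\bfv\rangle$ sums over $d=\alpha N$ coordinates, and the memory term involves $\partial_{h^s}v^{t-1}$ only for $s\le t-1$.

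\textbf{Main obstacle.} The difficult step is justifying that the diagonal part of the response is deterministic. Concretely, we must show that
\[
\langle \bm{x}_j,\bfu^t-\bfu^t_{(j)}\rangle = \sum_s\Big(\tfrac1N\mathrm{tr}\,\tfrac{\partial\bfu^t}{\partial\bfb^s}\Big)v^{s-1}_j+o_P(1),
\]
which requires both that the off-diagonal quadratic form $\sum_{i\neq i'}x_{ij}J_{ii'}x_{i'j}$ is small and that the traces of the Jacobians concentrate on their DMFT values. The Jacobians are products of diagonal matrices ($F_t'$) and copies of $X_{-j}X_{-j}^T$, so the problem amounts to a quadratic-form concentration for a matrix independent of $\bm{x}_j$ plus an induction on the structure of $\partial\bfu^t/\partial\bfb^s$. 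Because $F_t$ is only Lipschitz, the derivatives may not exist everywhere. We will therefore first prove the theorem for smooth $F_t,G_t$, where the recursion of \Cref{rem:meaning-of-field-derivatives-in-dmft} is well defined. We then pass to Lipschitz functions by mollification, using the Gaussian noise of \eqref{assump:4} to smooth and checking that both sides of the DMFT equations are continuous in the nonlinearities with uniform-in-$N$ error bounds.
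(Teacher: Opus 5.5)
Your plan is a viable proof, but by a different route from the paper's. You delete a whole column. Then $\langle \bm x_j,\bfu^t_{(j)}\rangle$ is, conditionally on the cavity orbit, a sum of independent terms, so Gaussianity follows from a conditional CLT. The Onsager term comes from the quadratic form $\bm x_j^{T}R^{t,s}\bm x_j$, where $R^{t,s}$ is the $N\times N$ response matrix of the cavity dynamics and is independent of $\bm x_j$; you handle it by Hanson--Wright plus convergence of $\frac1N\mathrm{tr}\,R^{t,s}$.

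The paper instead deletes a single entry. It sets only $X_{ij}=0$ and Taylor-expands each $u_i^t$ in the one scalar $X_{ij}$, so the Onsager coefficient appears directly as $\sum_i X_{ij}^2\partial_{b_i^s}u_i^t(X^{ij})$. No Jacobian matrices or operator norms enter the main argument; only the smoothing step of \Cref{prop:smooth-approximation} uses $\|X\|_{op}=O(1)$. The price is that $g_j^t=\sum_i X_{ij}u_i^t(X^{ij})$ uses a different cavity for each $i$ and is not conditionally Gaussian, so the paper needs a Lindeberg swap. The paper reduces every error term to the single moment bound of \Cref{thm:derivative-bounds}: the Lindeberg errors, the Taylor remainders, and the concentration of the coefficients. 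That bound says $\partial_P$ of a variable that is not all-reaching in $G(P)$ is $O(N^{-1/2})$ in every $L^p$, and its combinatorial proof is the technical core.

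Your route uses standard random-matrix tools and is closer to the physicists' cavity. However, removing a column is an $O(1)$ perturbation in $\ell_2$, so you must carry whole response matrices. You also need a separate time-induction, itself requiring a row cavity, that identifies $\frac1N\mathrm{tr}\,R^{t,s}$ with the recursively defined $\E_{\mu_T}[\partial_{b^s}u^t]$. This plays the role of the two simplification steps in the proof of \Cref{lem:inductive-case-lemma1}.

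Three of your justifications need repair.

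\textbf{The linearization remainder.} Lipschitzness and $\|X\|_{op}=O(1)$ give only $\|\bfu^t-\bfu^t_{(j)}\|_2=O(1)$, and hence only an $O(1)$ bound on $\langle\bm x_j,\mathrm{rem}\rangle$. You need bounded second derivatives together with delocalization of the perturbation, $\|\delta\|_\infty=\tilde O(N^{-1/2})$, so that $\|\mathrm{rem}\|_2\lesssim\|\delta\|_\infty\|\delta\|_2\to0$. In your setup delocalization holds: each first-order term has the form $R\bm x_j$ times a scalar, with $R$ independent of $\bm x_j$ and of bounded operator norm, so each coordinate is a conditionally subgaussian sum. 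This is exactly what part (2) of \Cref{thm:derivative-bounds} supplies in the paper.

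\textbf{The variance bound.} A crude Efron--Stein bound fails. Resampling one column moves $(\bfh,\bfv)$ by $O(1)$ in $\ell_2$, hence moves $\frac1d\sum_j\phi(\cdot_j)$ by $O(d^{-1/2})$, and summing the squares over $d$ columns gives $O(1)$. The variance bound needs the cavity cancellation. Your two-column decoupling provides it, and it is what the paper does: covariance decoupling plus Chebyshev, \Cref{lem:main-strong-implies-main}.

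\textbf{The smoothing step.} The noise of \eqref{assump:4} enters $F_t,G_t$ as separate arguments, not additively in the fields, so it does not smooth them. You need an explicit mollification, as in \Cref{prop:smooth-approximation}.
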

\begin{remark}
Note that we are imposing only Lipschitzness on the non-linearities $F_t,G_t,$ yet in equations \eqref{eq:mu-T-recursion}, \eqref{eq:nu-T-recursion} there are expectations of derivatives of these functions. These derivatives can be understood in the weak (distributional) sense. Equivalently, one can consider a mollifier $\zeta_{t,\eps}:\R^{2t+1}\to \R$ like the centered Gaussian density in $\R^{2t+1}$ with covariance $\eps I_{2t+1},$ and replace the functions $F_t,G_t$ by $F_{t,\eps}= F_t*\zeta_{t,\eps},G_{t,\eps} = G_t* \zeta_{t,\eps},$ respectively. The equations \eqref{eq:mu-0-base-case}-\eqref{eq:nu-T-recursion} can then be solved to obtain measures $\nu_{T,\eps},\mu_{T,\eps}.$ Finally, the measures $\nu_T,\mu_T$ can be defined as (weak) limits of $\nu_{T,\eps},\mu_{T,\eps},$ respectively, as $\eps\to 0$. The fact that this is well-defined and doesn't depend on the choice of mollifier is a standard exercise.
\end{remark}

\subsection{Notation}

We use bold font to denote quantities of diverging dimension as $N\to\infty$, which we call \emph{high-dimensional}, to distinguish them from quantities of constant (in $N$) dimension, which we call \emph{low-dimensional}. For instance, the vectors of \Cref{defn:GFOM} are high-dimensional, while the effective processes defined in \Cref{sec:dmft-equations} are low-dimensional. Following spin glass terminology, we call the matrix $X$ the \emph{disorder} and its entries the \emph{disorder variables.} We refer to the entries of the vectors in the iteration of \Cref{defn:GFOM} as \emph{dynamical variables}, and specifically to the entries of $\bfb$ and $\bfh$ as \emph{fields.} We use $C^{\infty}_{1b}$ to denote the class of functions from $\R^k\to \R$ for some $k$ which are $C^\infty$ with all partial derivatives of non-zero order uniformly bounded. We use the notation $\lesssim$ to denote inequality up to a multiplicative constant that's independent of $N$. Finally, given a random variable $X$ and $p\geq 1,$ we use the notation $\norm{X}_p := \E[|X|^p]^{1/p}.$

\section{Proof Approach: Dynamical Cavity Method}
In this section we describe our proof approach, which is the main technical novelty of the paper. Our proof can be viewed as a formalization of the dynamical cavity method, a powerful non-rigorous tool coming from physics that allows one to quickly derive DMFT equations like \eqref{eq:mu-0-base-case}-\eqref{eq:nu-T-recursion}.

\subsection{Smooth Approximation}
To carry out the cavity derivation, we will need to assume that all functions we deal with are sufficiently regular. The following proposition allows us to assume this without loss of generality. Its proof, which follows \cite[Proposition 4.1]{wei-kwo-universality}, is deferred to \Cref{sec:smooth-approximation}.

\begin{proposition}\label{prop:smooth-approximation}
    To prove \Cref{thm:main}, it suffices to prove it in the special case where the test functions $\phi,\psi$ and the non-linearities $F_t,G_t$ are all $C^\infty_{1b}.$
\end{proposition}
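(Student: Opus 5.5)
We reduce the general Lipschitz case to the smooth one by mollification, with error bounds that are uniform in $N$. Fix $\eps>0$ and let $F_{t,\eps}=F_t*\zeta_{t,\eps}$, $G_{t,\eps}=G_t*\zeta_{t,\eps}$, $\phi_\eps=\phi*\zeta_\eps$, $\psi_\eps=\psi*\zeta_\eps$, where $\zeta$ is the Gaussian mollifier from the remark after \Cref{thm:main}. Each mollified function is $C^\infty_{1b}$ and still $L$-Lipschitz, and it differs from the original by at most $L\sqrt{k\eps}$ in sup norm, where $k$ is the number of inputs. Let $(\bfb^t_\eps,\bfu^t_\eps,\bfh^t_\eps,\bfv^t_\eps)$ be the GFOM orbit run with $F_{t,\eps},G_{t,\eps}$ on the same $X,\bftheta,\bfv^0,\bfw,\tilde\bfw$. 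Let $\nu_{T,N,\eps},\mu_{T,N,\eps}$ be the corresponding empirical measures and $\nu_{T,\eps},\mu_{T,\eps}$ the DMFT measures. We then split
\begin{align*}
\Big|\int\phi\,d\nu_{T,N}-\int\phi\,d\nu_T\Big| \le{}& \Big|\int\phi\,d\nu_{T,N}-\int\phi\,d\nu_{T,N,\eps}\Big| + \Big|\int(\phi-\phi_\eps)\,d\nu_{T,N,\eps}\Big| \\
&+\Big|\int\phi_\eps\,d\nu_{T,N,\eps}-\int\phi_\eps\,d\nu_{T,\eps}\Big| + \Big|\int\phi_\eps\,d\nu_{T,\eps}-\int\phi\,d\nu_T\Big|.
\end{align*}
The same split applies to $\psi$.

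The third term tends to $0$ in probability by the smooth case of \Cref{thm:main}, which we are assuming. The second term is at most $\|\phi-\phi_\eps\|_\infty\lesssim\sqrt\eps$. The fourth term is at most $\sqrt\eps$ plus $|\int\phi\,d\nu_{T,\eps}-\int\phi\,d\nu_T|$. This last quantity tends to $0$ as $\eps\to0$ because $\nu_T$ is, by definition, the weak limit of $\nu_{T,\eps}$. Uniform second-moment bounds on $\nu_{T,\eps}$, which follow inductively from the Lipschitz bounds, upgrade weak convergence to convergence of integrals of Lipschitz $\phi$.

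The main step is the first term, which we control by an $N$-uniform stability estimate:
\[
\frac1{\sqrt d}\norm{\bfv^t-\bfv^t_\eps}_2+\frac1{\sqrt d}\norm{\bfh^t-\bfh^t_\eps}_2+\frac1{\sqrt N}\norm{\bfb^t-\bfb^t_\eps}_2+\frac1{\sqrt N}\norm{\bfu^t-\bfu^t_\eps}_2\le C_T\sqrt\eps
\]
This should hold with probability $1-o(1)$. We prove it by induction on $t$, using the event $\norm{X}_{op}\le C(1+\sqrt\alpha)$, which has high probability under \eqref{assump:1} by standard subgaussian matrix bounds. For $\bfu$,
\[
\norm{\bfu^t-\bfu^t_\eps}_2\le \norm{F_t(\cdot)-F_{t,\eps}(\cdot)}_2+L\sum_k\norm{\bfb^k-\bfb^k_\eps}_2 .
\]
Here the mollification error is at most $\sqrt{N}L\sqrt{(2t+1)\eps}$, since the sup-norm bound holds coordinatewise. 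Multiplication by $X$ or $X^T$ multiplies errors by at most $\norm{X}_{op}$, and the $G_t$ step is handled in the same way. The constants grow geometrically in $t$, which is harmless for fixed $T$. Since $\phi$ is Lipschitz, the first term is at most $L$ times the average coordinate distance, which by Cauchy--Schwarz is at most $L\sqrt{\text{normalized squared distance}}\lesssim\sqrt\eps$.

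Combining the four bounds and letting $N\to\infty$ and then $\eps\to0$ gives convergence in probability. The main obstacle is keeping this stability estimate uniform in $N$, and for that the operator-norm bound on $X$ is the key input. A secondary point is the $\eps\to0$ continuity of the DMFT measures. If the remark's well-definedness claim is not taken as given, we would prove it with the same inductive Lipschitz argument at the level of the effective processes. In that version we would couple the Gaussians through their covariances, which are continuous in $\eps$, and the coefficients $\E[\partial\,\cdot]$, whose convergence would itself need an argument (for instance via Gaussian integration by parts, expressing them as covariances).
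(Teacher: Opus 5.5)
Your proposal is correct and follows essentially the same route as the paper. Both arguments mollify $F_t,G_t,\phi,\psi$, bound the sup-norm mollification error, and propagate it through the orbit using $\norm{X}_{op}=O(1)$ with high probability. This gives an $N$-uniform stability bound on the empirical averages. The differences are minor: you use a Gaussian mollifier rather than a compactly supported one (so the errors are $\sqrt{\eps}$ rather than $\eps$), and you handle $\int\phi\,d\nu_{T,\eps}\to\int\phi\,d\nu_T$ explicitly via uniform second moments, a step the paper's proof leaves implicit in the remark defining $\nu_T$ as a weak limit.
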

For the case of $C^\infty_{1b}$ functions, we now state a stronger version of \Cref{thm:main} which better lends itself to the dynamical cavity method proof approach. In \Cref{lem:main-strong-implies-main} in the appendix we show that the theorem below implies \Cref{thm:main}.
\begin{theorem}\label{thm:main-strong}
    Suppose assumptions \eqref{assump:1}-\eqref{assump:4} hold in \Cref{defn:GFOM}. Let $T\geq 0$, $\phi:\R^{2T+2}\to \R$ and $\psi:\R^{2T+1}\to\R.$ Suppose all of $\phi,\psi,F_t,G_t$ are $C^\infty_{1b},$ and let $\nu_T,\mu_T$ be defined by DMFT equations \eqref{eq:mu-0-base-case}-\eqref{eq:nu-T-recursion}. Then as $N,d\to\infty$ with $d/N=\alpha$ fixed, the following hold.
    \begin{enumerate}
        \item For all $i\in [N],j\in [d]$, we have
        \begin{align}
            \lim |\E \phi(\theta_j, \{h_j^t\}_{t=1}^T, \{v_j^t\}_{t=0}^T) - \E_{\nu_T} \phi(\theta, \{h^t\}_{t=1}^T, \{v^t\}_{t=0}^T)| &= 0 \\
            \lim |\E \psi(b_i^*, \{b_i^t\}_{t=1}^T, \{u_i^t\}_{t=1}^T) - \E_{\mu_T} \psi(b^*, \{b^t\}_{t=1}^T, \{u^t\}_{t=1}^T)| &= 0.
        \end{align}
        \item For all $i\neq i' \in [N]$ and $j\neq j'\in [d]$, we have
        \begin{align}
            \lim  \left|\Cov\left( \phi(\theta_j, \{h_j^t\}_{t=1}^T, \{v_j^t\}_{t=0}^T),\; \phi(\theta_{j'}, \{h_{j'}^t\}_{t=1}^T, \{v_{j'}^t\}_{t=0}^T)\right)\right| &= 0 \\
            \lim  \left|\Cov\left( \psi(b_i^*, \{b_i^t\}_{t=1}^T, \{u_j^t\}_{t=1}^T),\; \psi(b_{i'}^*, \{b_{i'}^t\}_{t=1}^T, \{u_{i'}^t\}_{t=1}^T)\right)\right| &= 0.
        \end{align}
    \end{enumerate}
\end{theorem}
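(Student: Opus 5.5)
We prove \Cref{thm:main-strong} by induction on $T$, alternating between the two sides of the iteration: assuming both parts of the theorem for $(\nu_{T-1},\mu_{T-1})$, we first establish them for $\mu_T$ and then for $\nu_T$. The base case $T=0$ is direct. The coordinates $(\theta_j,v^0_j)$ are i.i.d.\ from $\nu_0$, so part 2 holds trivially on the $\nu$ side. On the $\mu$ side, $b^*_i=\sum_j X_{ij}\theta_j$ is a sum of independent terms, and a Lindeberg/CLT argument with $C^\infty_{1b}$ test functions gives convergence to $N(0,\alpha\E\theta^2)$. For $i\neq i'$, the rows $X_{i\cdot}$ and $X_{i'\cdot}$ are independent given $\bftheta$, and the empirical second moment of $\bftheta$ concentrates; this gives the decorrelation in part 2.

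For the inductive step on the $\mu$ side, fix a row $i$ and build the \emph{cavity system}. Run the GFOM of \Cref{defn:GFOM} with row $i$ of $X$ replaced by zero, and denote its iterates by $\bfv^{t}_{(i)},\bfu^t_{(i)}$, etc. These are independent of the row $X_{i\cdot}$. In the cavity system, $z^t_i := \sum_j X_{ij} v^{t-1}_{j,(i)}$ is, conditionally on the cavity system, a sum of independent terms. By a conditional Lindeberg argument it is approximately centered Gaussian with covariance $\frac1N\sum_j v^{s-1}_{j,(i)}v^{t-1}_{j,(i)}$ jointly with $b^*_i$. By the induction hypothesis (both parts, via a second-moment computation) together with a stability estimate $\frac1d\|\bfv^t-\bfv^t_{(i)}\|^2\to 0$, this empirical covariance concentrates on $\alpha\Cov_{\nu_{T-1}}$, which matches \eqref{eq:nu-T-covariance}.

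The key step is the first-order perturbative expansion. Reinserting row $i$ adds to $\bfh^r$ the rank-one term $X_{i\cdot}^T u^r_i$, plus lower-order corrections through $u^r_i$'s dependence. We Taylor expand $\bfv^{t-1}-\bfv^{t-1}_{(i)}$ to first order in these perturbations: $v^{t-1}_j-v^{t-1}_{j,(i)}\approx\sum_{s<t}\frac{\partial v^{t-1}_j}{\partial h^s_j}X_{ij}u^s_i$ plus off-diagonal response terms. Then
\[
b^t_i - z^t_i \approx \sum_{s<t}\Big(\sum_j X_{ij}^2\,\partial_{h^s}v^{t-1}_j\Big)u^s_i + \text{(error)}.
\]
The inner sum concentrates on $\alpha\E_{\nu_{T-1}}[\partial_{h^s}v^{t-1}]$, which reproduces \eqref{eq:mu-T-recursion}. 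For this we need the induction hypothesis applied to the derivative process; its recursion in \Cref{rem:meaning-of-field-derivatives-in-dmft} is itself a GFOM-type object, so we enlarge the induction to include these tangent processes. The main obstacle is controlling the error terms. These are the second-order Taylor remainders and the cross terms $\sum_{j\neq j'} X_{ij}X_{ij'}\,\partial v_j/\partial h_{j'}$. Off-diagonal responses are of size $O(N^{-1/2})$ per entry but there are many of them. I would try to bound them using moment estimates on the random matrix responses. Concretely, I would show the operator norm of the Jacobian $\partial\bfv/\partial\bfh$ restricted to off-diagonal interactions (a product of $X$, $X^T$ and diagonal matrices) is $O(1)$ with high probability. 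Then the independence of $X_{i\cdot}$ from the cavity Jacobian gives a quadratic form whose fluctuation is $O(N^{-1/2})$ in $L^2$. The $C^\infty_{1b}$ assumption bounds all Taylor remainders by $\|X_{i\cdot}\|^2$-type quantities times $N^{-1/2}$.

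Part 2 on the $\mu$ side follows by removing two rows $i,i'$ simultaneously. Conditionally on the doubly-cavity system, the pairs $(z_i,z_{i'})$ are nearly independent Gaussians, and the reaction terms involve only their own rows. Hence the covariance of $\psi(\cdot_i)$ and $\psi(\cdot_{i'})$ vanishes, using concentration of the conditional covariance (from part 2 at level $T-1$). The $\nu_T$ step is symmetric: remove a column $j$ of $X$ and write $h^t_j=g^t_j+\sum_i X_{ij}u^t_i$-expansion. The zeroth-order term gives $g^t$ with covariance from $\mu_T$. The term $\theta_j\sum_i X_{ij}^2\,\partial_{b^*}u^t_i$ gives $\E[\partial_{b^*}u^t]\theta$, since $b^*_i$ contains $X_{ij}\theta_j$. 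Similarly, the $v^{s-1}_j$ responses give the coefficients in \eqref{eq:nu-T-recursion}, with no factor $\alpha$ since the sum is over $N$ rows.
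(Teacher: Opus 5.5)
\textbf{There is a genuine gap, at exactly the step you flag as the main obstacle. The mechanism you offer there does not close it.}

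When you reinsert a whole row $i$, the perturbation entering the fields is $X_{i\cdot}^T u_i^r$. Its entries are $O(N^{-1/2})$, but its $\ell_2$-norm is of order one.

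Your operator-norm bound on the cavity Jacobian does handle the linear term: at the cavity point the matrix is independent of $X_{i\cdot}$, so you get a centered quadratic form. The second-order Taylor remainder $R$ of $\bfv^{t-1}$ around the cavity system is a different matter. The Hessian of $\bfh\mapsto\bfv^{t-1}$ has operator norm of order one (take the direction $e_j$), so $R$ is only $O(1)$ in $\ell_2$. Hence $\langle X_{i\cdot},R\rangle$ is only $O(1)$ by Cauchy--Schwarz. You cannot get cancellation from independence, because the Hessian is evaluated at an intermediate point that depends on $X_{i\cdot}$. Expanding one order further just produces another $O(1)$ remainder, since $\norm{X_{i\cdot}}_2^k=O(1)$ for every $k$.

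The $C^\infty_{1b}$ hypothesis controls derivatives of $F_t,G_t$. It does not control the contraction of a high-dimensional nonlinear response against an $O(1)$-norm vector. So your claim that remainders are ``$\norm{X_{i\cdot}}^2$-type quantities times $N^{-1/2}$'' has no source for the factor $N^{-1/2}$.

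To make row removal work you need an ingredient you never mention. One option is an $\ell_\infty$-delocalization estimate for $\bfh^s-\bfh^s_{(i)}$ and $\bfb^s-\bfb^s_{(i)}$, propagated through every layer. Then coordinatewise second-order terms have $\ell_2$-norm at most $\norm{\delta}_\infty\norm{\delta}_2=\tilde O(N^{-1/2})$.

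The Onsager coefficient has a similar problem. The diagonal entry $\partial v^{t-1}_j/\partial h^s_j$ of the full Jacobian contains feedback loops through all other coordinates. You must show that its average matches the recursion of \Cref{rem:meaning-of-field-derivatives-in-dmft}. Your tangent process is seeded by the localized vector $e_j$, not by an i.i.d.\ initialization. So ``enlarging the induction'' to it is not covered by the GFOM induction hypothesis without a further argument.

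\textbf{The idea that resolves this in the paper is to remove a single entry rather than a row.} With $X^{ij}$ equal to $X$ except $X^{ij}_{ij}=0$, each perturbation is $O(N^{-1/2})$. The remainder in \eqref{eq:b-i-t-expansion} is then $O(N^{-3/2})$ per entry and $O(N^{-1/2})$ after summing, using only crude moment bounds.

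The price is that two things now carry implicit influences:
\begin{enumerate}
\item the first-order term, through $\theta_j\partial_{b^*_i}v^{t-1}_j$ and $v^{s-1}_j\partial_{b^s_i}v^{t-1}_j$;
\item the Lindeberg swap for $g^t_j=\sum_i X_{ij}u^t_i(X^{ij})$, which is not a sum of independent terms.
\end{enumerate}
Both are controlled by \Cref{thm:derivative-bounds}: any derivative whose differentiated variable is not all-reaching in $G(P)$ has all moments $O(N^{-1/2})$.

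The same bounds replace your tangent-process induction. They show that $\partial_{b^*_i}u^t_i$, $\partial_{b^s_i}u^t_i$ and $\partial_{h^s_j}v^{t-1}_j$ are $O(N^{-1/2})$-close in $L^p$ to the explicit-only recursion. The coefficients $\sum_i X_{ij}^2\,\partial_{b^s_i}u^t_i$ of that recursion then concentrate by the inductive decoupling statement.

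Your row cavity makes the Gaussian part cleaner, since it is exactly a conditional sum of independent terms. But it moves all the difficulty into a nonlinear-response estimate that your proposal does not supply.
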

In words, the joint laws of the tuples $(\theta_j, \{h_j^t\}_{t=1}^T, \{v_j^t\}_{t=0}^T)$ and $(b_i^*, \{b_i^t\}_{t=1}^T, \{u_i^t\}_{t=1}^T)$ converge to $\nu_T,\mu_T$, respectively, in the sense of $C^\infty_{1b}$ test functions, and different coordinates become asymptotically independent.

\subsection{Dynamical Cavity Method Derivation of DMFT Equations}
In this section we carry out the derivation of the DMFT equations by the dynamical cavity method. A non-rigorous version of this derivation is well-known in physics (see e.g. \cite{mezard1987spin}), and is one of the main ways physicists derive these equations in the first place. We present the method in a way that closely follows our proof of \Cref{thm:main-strong}, which will be carried out in detail in \Cref{sec:main-proof}. In this section we focus on the proof ideas and omit several technical details. For brevity, we only derive the form of the field $b^t$ in \eqref{eq:mu-T-recursion}; the derivation for $h^t$ in \eqref{eq:nu-T-recursion} is analogous, and the equations of $u^t,v^t$ in \eqref{eq:mu-T-recursion}, \eqref{eq:nu-T-recursion} follow from those of $b^t,h^t$, respectively, and \Cref{defn:GFOM}.

\paragraph{Warmup: Gaussian Fields Ansatz.} To motivate the derivation, we begin by considering an (incorrect) ansatz that quickly leads to a wrong but instructive version of equations \eqref{eq:mu-0-base-case}-\eqref{eq:nu-T-recursion}. The fixing of this ansatz will naturally lead us to the cavity method.

From \Cref{defn:GFOM}, we have
\begin{align*}
    b^t_i &= \sum_{j=1}^d X_{ij}v_j^{t-1}.
\end{align*}
Consider the ansatz that the vector $\{X_{ij}\}_{j=1}^d$ is approximately independent of $\bftheta,\bfv^0,\bfv^1,\dots,\bfv^{T}.$ If this is true, conditional on $\bfv^{t-1},$ by CLT, $b_i^t$ is approximately Gaussian with variance
\begin{align*}
    \Var(b_i^t|\bfv^{t-1}) &\approx \frac{1}{N}\sum_{j=1}^d (v_j^{t-1})^2.
\end{align*}
If we further assume that the entries of $\bfv^{t-1}$ are approximately independent, the above will concentrate with respect to the randomness of $\bfv^{t-1}$ and we will have
\begin{align*}
    \Var(b_i^t) &\approx \frac{1}{N}\sum_{j=1}^d\E (v_j^{t-1})^2 \\
    &= \alpha \E (v_1^{t-1})^2,
\end{align*}
where we have used exchangeability of the coordinates of $\bfv^{t-1}.$ Now if we inductively (in $t$) assume that $\Law(v_1^{t-1}) \approx \Law_{\nu_{T-1}}(v^{t-1}),$ this yields
\begin{align*}
    \Var(b_i^t) &\approx \alpha \E_{\nu_{T-1}} (v^{t-1})^2.
\end{align*}
The same argument as above, applied to the whole vector $(b^*_i,b^1_i,\dots,b^T_i),$ yields that this vector is approximately Gaussian with covariance
\begin{align*}
    \Cov(b^*_i,b^1_i,\dots,b^T_i) &\approx \alpha \Cov_{\nu_{T-1}}(\theta,v^0,\dots,v^{T-1}).
\end{align*}
Note that this is the same as \eqref{eq:nu-T-covariance}, except in \eqref{eq:nu-T-covariance} the left-hand side has the Gaussian terms $z^t$ only instead of the full fields $b^t$. What's missing in this derivation is the non-Gaussian terms that appear in $b^t$ in \eqref{eq:mu-T-recursion} in addition to $z^t.$ These must arise due to dependence between $\{X_{ij}\}_{j=1}^d$ and $\bftheta,\bfv^0,\dots,\bfv^T.$

\paragraph{Dynamical Cavity Method Proof Sketch.} How do the entries $v_j^{t-1}$ depend on $\{X_{ij}\}_{j=1}^d$?. Recall that $v_j^{t-1}$ depends on $X$ through the fields $h^s_j$ for $s=1,\dots,t-1$, and from the definition of $h_j^s$ we have
\begin{equation}\label{eq:hup}
\begin{split}
    h_j^s &= \sum_{i'=1}^N X_{i'j} u^s_{i'} \\
    &= X_{ij} u^s_{i} + \sum_{i'\neq i} X_{i'j} u^s_{i'}.
\end{split}
\end{equation}
In other words, $h_j^s$ has an \emph{explicit} dependence on $X_{ij}$; it also depends implicitly on $\{X_{ij'}\}_{j'\in [d]\setminus\{j\}}$ through the variables $\bfu^s.$ The idea of our formal version of the dynamical cavity method is to do a Taylor expansion of the dynamical variable $v_j^{t-1}$ with respect to the disorder variable $X_{ij}$ which is driving this explicit dependence. In a nutshell, the zeroth order term in the expansion (which no longer has this explicit dependence) will give rise to the Gaussian terms $z^t$, the first order term will yield the non-Gaussian terms of \eqref{eq:mu-T-recursion}, and the higher order terms will vanish as $N\to\infty.$ This essentially follows the physicist's dynamical cavity method in a rigorous way. The key technical ingredient in making this rigorous will be to prove sharp bounds on certain derivatives of the variables $v_j,h_j,u_i,b_i;$ see \Cref{sec:derivative-bounds}.

Concretely, making the dependence of $v_j^{t-1}$ on the disorder explicit in the notation, defining $X^{ij}$ to be equal to $X$ on all entries except for $X^{ij}_{ij}=0$, and letting $X^{ij}(\eta) = (1-\eta)X^{ij} + \eta X,$ we have
\begin{align}
    b_i^t &= \sum_{j=1}^d X_{ij}v_j^{t-1}(X)\nonumber \\
    &= \sum_{j=1}^d X_{ij}v_j^{t-1}(X^{ij}) +  \sum_{j=1}^d X_{ij}^2\partial_{X_{ij}}v_j^{t-1}(X^{ij}) +\frac{1}{2} \int_0^1 d\eta(1-\eta) \sum_{j=1}^d X_{ij}^3 \partial^2_{X_{ij}}v_j^{t-1}(X^{ij}(\eta)).\label{eq:b-i-t-expansion}
\end{align}
We will show that the third term is $O(1/\sqrt{N})$ and can therefore be ignored. For the second term, we note that the iteration of \Cref{defn:GFOM} up to time $t-1$ depends on $X_{ij}$ only through the variables $b^*_i,b_i^s$ and $h_j^s$, where $X_{ij}$ appears multiplied by $\theta_j, v_j^{s-1}$ and $u_i^s$, respectively, for $s=1,\dots,t-1$. Hence, by chain rule we can write
\begin{align*}
    \partial_{X_{ij}}v_j^{t-1}(X^{ij}) &= \left(\theta_j\partial_{b_i^*}v_j^{t-1}(X^{ij}) + \sum_{s=1}^{t-1} v_j^{s-1}\partial_{b_i^s}v_j^{t-1}(X^{ij})\right) +  \sum_{s=1}^{t-1} u_i^{s}\partial_{h_j^s}v_j^{t-1}(X^{ij}).
\end{align*}
We will show that the parenthesized term is also $O(1/\sqrt{N})$ and can therefore be ignored. This yields
\begin{align}\label{eq:b-i-approx-expansion}
    b_i^t &\approx \sum_{j=1}^d X_{ij}v_j^{t-1}(X^{ij}) + \sum_{s=1}^{t-1} \left(\sum_{j=1}^d X_{ij}^2\partial_{h_j^s}v_j^{t-1}(X^{ij})\right) u_i^s .
\end{align}
To conclude the proof of \eqref{eq:mu-T-recursion}, two main ingredients remain. First, to show that the parenthesized term concentrates, since note that we have
\begin{align*}
    \E \sum_{j=1}^d X_{ij}^2\partial_{h_j^s}v_j^{t-1}(X^{ij})&= \frac{1}{N} \sum_{j=1}^d \E\partial_{h_j^s}v_j^{t-1}(X^{ij}) \\
    &\approx \alpha \E\partial_{h_1^s}v_1^{t-1} \\
    &\approx \alpha \E_{\nu_{T-1}}\partial_{h_1^s}v_1^{t-1},
\end{align*}
where the last two steps can be justified by derivative bounds and an induction argument, respectively. Second, we must show that the first term in \eqref{eq:b-i-approx-expansion} is approximately Gaussian. This will be done by a Lindeberg argument, where we replace the disorder variables $\{X_{ij}\}_{j=1}^d$ by an independent copy $\{\tilde{X}_{ij}\}_{j=1}^d$ and show that this has an $O(1/\sqrt{N})$ effect on the expectations of $C^{\infty}_{1b}$ functions. Once we do this, the ansatz of independence between $\{\tilde{X}_{ij}\}_{j=1}^d$ and $\bfv^{t-1}$ becomes correct, and the derivation of the warmup above is what yields the form of the Gaussian terms $b^*,z^1,\dots,z^T.$

To conclude this proof sketch, we give some intuition for why the term $\sum_{j=1}^d X_{ij}v_j^{t-1}(X^{ij})$ behaves approximately like the Gaussian terms $(b^*,z^1,\dots,z^T)$ in \eqref{eq:mu-T-recursion} while $\sum_{j=1}^d X_{ij}v_j^{t-1}(X)$ does not. As everything else in this section, this will reduce to certain derivative bounds, which are presented in \Cref{sec:derivative-bounds}. To illustrate this reduction, we show how the covariance of the vector $\{\sum_{j=1}^d X_{ij}v_j^{t-1}(X^{ij})\}_{t=1}^T$ is approximately given by $\alpha \Cov_{\nu_{T-1}}(v^0,\dots,v^{T-1})$ as in \eqref{eq:nu-T-covariance}. While this claim is much weaker than the CLT that is needed to prove the DMFT equations, it allows us to quickly illustrate some of the essential ideas without the need for the longer Lindeberg argument, which is carried out in \Cref{sec:main-proof}.
For $1\leq s,t\leq T,$ we have
\begin{align*}
    \E \left(\sum_{j=1}^d X_{ij}v_j^{s-1}(X^{ij})\right)&\left(\sum_{j=1}^d X_{ij}v_j^{t-1}(X^{ij})\right) \\
    &= \sum_{j=1}^d \E X_{ij}^2 v_j^{s-1}(X^{ij})v_j^{t-1}(X^{ij})  + \sum_{j\neq j'}\E X_{ij}X_{ij'} v_j^{s-1}(X^{ij})v_{j'}^{t-1}(X^{ij'}) \\
    &\approx \frac{1}{N} \sum_{j=1}^d \E  v_j^{s-1}v_j^{t-1}  + \sum_{j\neq j'}\E X_{ij}X_{ij'} v_j^{s-1}(X^{ij})v_{j'}^{t-1}(X^{ij'}) \\
    &\approx \alpha  \Cov_{\nu_{T-1}}(v^{s-1},v^{t-1})+ \sum_{j\neq j'}\E X_{ij}X_{ij'} v_j^{s-1}(X^{ij})v_{j'}^{t-1}(X^{ij'}),
\end{align*}
where the last two steps can be justified by derivative bounds and an induction argument, respectively. What remains is to show that the second term above vanishes. For this, again through derivative bounds, we can use the approximate Gaussian integration by parts, which yields
\begin{align*}
     \sum_{j\neq j'}\E X_{ij}X_{ij'} v_j^{s-1}(X^{ij})v_{j'}^{t-1}(X^{ij'}) &\approx \frac{1}{N^2}\sum_{j\neq j'}\E\partial_{X_{ij}}\partial_{X_{ij'}} (v_j^{s-1}(X^{ij})v_{j'}^{t-1}(X^{ij'})).
\end{align*}
Now comes the crucial step where we use the difference between $X^{ij}, X^{ij'}$ and $X.$ Because e.g. $v_j^{s-1}(X^{ij})$ does not depend on $X_{ij}$, it can be treated as a constant when differentiating with respect to $X_{ij}.$ This lets us write
\begin{align}
    \frac{1}{N^2}\sum_{j\neq j'}\E\partial_{X_{ij}}\partial_{X_{ij'}} (v_j^{s-1}(X^{ij})v_{j'}^{t-1}(X^{ij'})) &= \frac{1}{N^2}\sum_{j\neq j'}\E(\partial_{X_{ij'}} v_j^{s-1}(X^{ij}))(\partial_{X_{ij}}v_{j'}^{t-1}(X^{ij'})) \nonumber\\
    \leq \frac{1}{N^2}&\sum_{j\neq j'}\norm{\partial_{X_{ij'}} v_j^{s-1}(X^{ij})}_2\norm{\partial_{X_{ij}}v_{j'}^{t-1}(X^{ij'})}_2.\label{eq:covariance-cross-term-control}
\end{align}
Recall by \eqref{eq:hup} that $v_j$ has explicit dependence only on $X_{ij}$ among $\{X_{ik}\}_{k\in [d]}$. Hence, since $j'\neq j,$ the dependence of $v_j$ on $X_{ij'}$ is implicit, and in this case our derivative bounds will show that we have $\norm{\partial_{X_{ij'}} v_j^{s-1}(X^{ij})}_2\leq O(1/\sqrt{N}),$ and similarly $\norm{\partial_{X_{ij}}v_{j'}^{t-1}(X^{ij'})}_2\lesssim 1/\sqrt{N}.$ Hence
\begin{align*}
    \E \left(\sum_{j=1}^d X_{ij}v_j^{s-1}(X^{ij})\right)\left(\sum_{j=1}^d X_{ij}v_j^{t-1}(X^{ij})\right) &\approx \alpha  \Cov_{\nu_{T-1}}(v^{s-1},v^{t-1}),
\end{align*}
as desired.

\subsection{Key Technical Theorem: Derivative Bounds}\label{sec:derivative-bounds}
In this section we present the derivative bounds that will let us justify the approximations made in the proof sketch of the last section. As is suggested by the proof sketch, we need to bound derivatives with respect to field variables $b^*_i,b_i^s,h_j^s$ as well as disorder variables $X_{ij}.$ We consider the class of partial derivatives with respect to any multi-set of these variables.
\begin{definition}\label{defn:P-partial-derivatives}
    Given a multi-set $P$ of entries from the vectors $\bfb^*,\bfb^s,\bfh^s, s\in \N,$ and matrix $X$, we let $\partial_P$ denote the partial derivative with respect to all variables in the multi-set. We also let $|P|$ denote the number of variables in $P$, counted with repetition.
\end{definition}
\begin{example}\label{ex:P-example}
    Suppose $P = \{X_{ij}, X_{ij'}, b_{i'}^*, b^3_{i''}, X_{ij}, h_j^1, h_j^1\},$ where all of $i,i',i'',j,j'$ are distinct. Then we have
    \[
    \partial_P u_i^t := \partial_{X_{ij}}^2 \partial_{X_{ij'}} \partial_{b^*_{i'}} \partial_{b_{i''}^3} \partial_{h_j^1}^2 u_i^t.
    \]
    Moreover, we have $|P|=7.$
\end{example}

Next, we must distinguish when a multi-set $P$ affects a variable like $v_j^{t-1}$ ``explicitly'' or ``implicitly.'' For this purpose, we first associate a graph to the multi-set $P.$
\begin{definition}\label{defn:graph-of-P}
    Given a multi-set of variables $P$ as in \Cref{defn:P-partial-derivatives}, we associate a subgraph $G(P) = (V(P) = V_1(P)\sqcup V_2(P), E(P))$ of the complete bipartite graph $K_{d,N}$, with parts $V_1(P)\subseteq [d]$ and $V_2(P)\subseteq [N],$ as follows.
    \begin{itemize}
        \item For each coordinate $j$ of $\bfh^s$ which appears in $P$ for some $s\in \N$, we include $j$ in $V_1(P)$ (without repetition).
        \item For each coordinate $i$ of $\bfb^*$ or $\bfb^s$ which appears in $P$ for some $s\in \N$, we include $i$ in $V_2(P)$ (without repetition).
        \item For each pair of coordinates $i\in [N],j\in [d]$ such that $X_{ij}$ appears in $P$, we include $j$ in $V_2(P),$ $i$ in $V_1(P)$, and the edge $\{i,j\}$ in $E(P)$ (all without repetition).
    \end{itemize}
\end{definition}
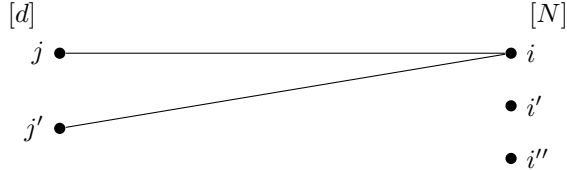
\begin{figure}[h]
\centering
\begin{tikzpicture}[>=stealth, scale=1]

\node at (-0.5,1.5) {$[d]$};
\node[circle, fill=black, inner sep=1.5pt, label=left:$j$]  (j)  at (0,1) {};
\node[circle, fill=black, inner sep=1.5pt, label=left:$j'$] (jp) at (0,0) {};

\node at (6.5,1.5) {$[N]$};
\node[circle, fill=black, inner sep=1.5pt, label=right:$i$]   (i)   at (6,1) {};
\node[circle, fill=black, inner sep=1.5pt, label=right:$i'$]  (ip)  at (6,0.3) {};
\node[circle, fill=black, inner sep=1.5pt, label=right:$i''$] (ipp) at (6,-0.4) {};

\draw (j)  -- (i);
\draw (jp) -- (i);

\end{tikzpicture}
\caption{The graph $G(P)$ associated to the $P$ of \Cref{ex:P-example}.}
\label{fig:P-example}
\end{figure}
We can now state the definition that will capture to our notion of ``explicit'' vs ``implicit'' influence of variables.
\begin{definition}\label{defn:all-reaching}
    Given a derivative multi-set $P$ as in \Cref{defn:P-partial-derivatives} and a vertex $k \in [d]\sqcup [N]$ in the complete bipartite graph $K_{d,N},$ we say $k$ is \emph{all reaching} in $P$ if either $|P|=0$ or $k \in V(P)$ and the graph $G(P)$ is connected. For each $j,t$ say that a variable $v_j^{t-1}$ is all-reaching in $P$ of $j$ is all-reaching in $P.$
\end{definition}
In our proof sketch, a disorder variable ``explicitly'' affecting a dynamical variable corresponds to the dynamical variable being all-reaching in the singleton multi-set corresponding to the disorder variable. Note that no variable is all reaching in the $P$ of \Cref{ex:P-example} because in that case $G(P)$ is disconnected, as can be seen in \Cref{fig:P-example}. On the other hand, if $P = \{X_{ij}, X_{i'j}, b^*_i\}$, then the only all-reaching vertices are $i,i'$ and $j,$ and the all-reaching variables are the ones with those indices.

We are now ready to state our derivative bounds. The bounds will hold under the same assumptions as in \Cref{thm:main-strong}. However, for the applications it will be important that we state them in a more general setting, with assumption 1 slightly relaxed to not require the variance to be exactly $1/N$:
\begin{assumptions}
  \item[(A1')]\makeatletter
    \protected@edef\@currentlabel{A1'}%
  \makeatother
  \label{assump:1-weak}
  The entries of $X$ are sampled i.i.d.\ from a $\sigma/\sqrt{N}$-subgaussian
  measure with mean zero.
\end{assumptions}
The need for this relaxation is that it will make the theorem hold even after some of the entries of $X$ have been set to zero (as we did in the proof sketch), or have been scaled down by a constant in $[0,1]$ as in the third term of \eqref{eq:b-i-t-expansion}.

\begin{theorem}\label{thm:derivative-bounds}
    Suppose assumptions \eqref{assump:1-weak} and \eqref{assump:2}-\eqref{assump:4} hold in \Cref{defn:GFOM}, and $F_t,G_t$ are $C^\infty_{1b}.$  Then, for every $t,p\geq 1$ and $m \geq 0$, there exists a constant $\Gamma_{t,m,p}$ independent of $N$ such that the following hold for every multiset of derivatives $P$ such that $|P|=m$ and for every $i \in [N]$ and $j \in [d]$.
    \begin{enumerate}
        \item For $s=1,\dots,t,$ we have
        \begin{align*}
            \max\{\norm{\partial_P b^*_i}_p,\; \norm{\partial_P b^s_i}_p, \; \norm{\partial_P u^s_i}_p,\; \norm{\partial_P h^s_j}_p,\; \norm{\partial_P v^s_j}_p \} &\leq \Gamma_{t,m,p}.
        \end{align*}
        \item If $i$ is not all reaching in $P,$ then
        \begin{align*}
            \max\{\norm{\partial_P b^*_i}_p,\; \norm{\partial_P b^s_i}_p, \; \norm{\partial_P u^s_i}_p\} &\leq \Gamma_{t,m,p}/\sqrt{N}.
        \end{align*}
        Similarly, if $j$ is not all reaching in $P$, then
        \begin{align*}
            \max\{\norm{\partial_P h^s_j}_p,\; \norm{\partial_P v^s_j}_p \} &\leq \Gamma_{t,m,p}/\sqrt{N}.
        \end{align*}
    \end{enumerate}
\end{theorem}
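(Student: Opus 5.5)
We will prove both parts by a joint induction on $t$, and for fixed $t$ by an inner induction on $m=|P|$. The base objects are $b_i^*=\sum_j X_{ij}\theta_j$ and the linear maps $\bfb^s=X\bfv^{s-1}$, $\bfh^s=X^T\bfu^s$. The nonlinear steps $u^s_i=F_s(b^*_i,b^1_i,\dots,b^s_i)$ and $v^s_j=G_s(h^1_j,\dots,h^s_j)$ act coordinatewise. Since $F_s,G_s\in C^\infty_{1b}$, Faà di Bruno reduces $\partial_P u^s_i$ to sums of products $\prod_k \partial_{P_k} b^{r_k}_i$ over partitions $P=\sqcup_k P_k$. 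The prefactors are bounded derivatives of $F_s$, except the zeroth-order term when $P=\emptyset$, which has linear growth. Each product has at least one nontrivial factor, so Hölder's inequality transfers $L^p$ bounds from fields to $u,v$.

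The key combinatorial point is the all-reaching property under partitions. If $i$ is not all-reaching in $P$, then for every partition at least one block $P_k$ with $i\notin V(P_k)$ or $G(P_k)$ disconnected must exist. Otherwise connectedness would glue all blocks through the common vertex $i$. So each term has one factor carrying the $1/\sqrt N$ gain, and the bounds propagate from $b$ to $u$ and from $h$ to $v$.

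The real work is in the linear steps. We differentiate $b^s_i=\sum_j X_{ij}v^{s-1}_j$ with Leibniz:
\begin{align*}
\partial_P b^s_i=\sum_j X_{ij}\,\partial_P v^{s-1}_j+\sum_{X_{ij}\in P}\partial_{P\setminus\{X_{ij}\}}v^{s-1}_j+\1\{b^s_i\in P\}\cdot(\text{trivial terms}).
\end{align*}
Here "trivial terms" means the identity derivative, since fields are treated as variables. The second group has $O(m)$ terms and is bounded by the inductive hypothesis at order $m-1$. The first sum is the dangerous one: naively it is $\sum_j O(1/\sqrt N)\cdot\Gamma$, which is $O(\sqrt N)$. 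To handle it we do a cavity decomposition of the same kind as in \eqref{eq:b-i-t-expansion}. We expand $\partial_P v^{s-1}_j(X)$ around $X^{ij}$ to finite order $K$ in $X_{ij}$. Each derivative in $X_{ij}$ adds $X_{ij}$ to $P$, which keeps the order controlled at time $s-1<t$. The remainder is bounded because $|X_{ij}|^{K+1}\sim N^{-(K+1)/2}$.

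This leaves sums of the form $\sum_j X_{ij}^{k+1}\partial_{P\cup\{X_{ij}^k\}}v^{s-1}_j(X^{ij})$. For $k\ge1$ they are bounded directly using $\E X_{ij}^2=O(1/N)$ and the uniform bound from part 1 at time $s-1$. For $k=0$, the terms $A_j=\partial_P v^{s-1}_j(X^{ij})$ are independent of $X_{ij}$. We bound $\norm{\sum_j X_{ij}A_j}_p$ by expanding the $p$-th moment for even $p$, which is a moment-method analogue of the covariance computation in \eqref{eq:covariance-cross-term-control}. In each monomial $\prod_{\ell} X_{ij_\ell}A_{j_\ell}$, the indices appearing once are removed by approximate Gaussian integration by parts, i.e. a cumulant expansion for subgaussian entries. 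This pushes derivatives $\partial_{X_{ij_\ell}}$ onto the other $A_{j_{\ell'}}$ with $j_{\ell'}\neq j_\ell$. Those derivatives are implicit: $j_{\ell'}$ is not all-reaching once $X_{ij_\ell}$ is added. So by the inductive hypothesis (part 2 at earlier time, or lower combined order) each such derivative gains $1/\sqrt N$. Counting indices then gives $\E(\sum_j X_{ij}A_j)^p=O(1)$.

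When additionally $i$ is not all-reaching in $P$, we need the improvement to $1/\sqrt N$. Either $A_j$ gains $1/\sqrt N$ for all $j$ (when $j\notin V(P)$ or $G(P)$ is disconnected, adding $j$ does not connect it), or the sum is over the few $j$ in $V(P)$. The same moment computation then gives $O(1/\sqrt N)$. The $h$ steps are symmetric with $X^T$, and $b^*$ is the case $s=0$ with $\theta$ in place of $v$.

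The main obstacle is the moment bound for $\sum_j X_{ij}A_j$ with weakly dependent $A_j$. It requires a careful bookkeeping of which derivative multisets remain non-all-reaching after the integration-by-parts steps, and the induction must be ordered so that every invoked bound is strictly earlier in (time, order). To make this consistent, I would strengthen the induction to simultaneously cover the perturbed matrices $X^{ij}(\eta)$, which is permitted by \eqref{assump:1-weak}. I would also allow derivative sets containing finitely many extra $X$ entries, which is needed to close the loop.
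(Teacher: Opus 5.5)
Your architecture matches the paper's: induction along $\bfb^*\to\bfb^1\to\bfu^1\to\bfh^1\to\bfv^1\to\cdots$ with all orders available at earlier steps, and the partition argument for $u,v$. Your observation that some block is non-all-reaching is exactly what the paper uses. For the fields you also follow the paper: a Leibniz split, then a $p$-th moment expansion in which singly occurring indices are removed by a first-order expansion. Your preliminary expansion around $X^{ij}$ is harmless ($K=1$ suffices), whereas the paper expands $V_{j_1,\dots,j_p}$ directly. Part 1 goes through, and in fact needs no derivative gains at all: after removing the $q$ unique indices each tuple costs $N^{-q}N^{-(p-q)/2}$, and $|\calA_q|\lesssim N^{(p+q)/2}$.

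The gap is in part 2 for the fields. You justify the gain by saying that a derivative $\partial_{X_{ij_\ell}}$ pushed onto $A_{j_{\ell'}}$ is implicit, i.e.\ that $j_{\ell'}$ stays non-all-reaching in $P\cup\{X_{ij_\ell}\}$. This fails whenever $j_{\ell'}\in V(P)$, because the new edge $\{i,j_\ell\}$ can join components of $G(P)$ through $i$; already $P=\{X_{ij_{\ell'}}\}$ is a counterexample.

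It matters exactly in the case where you run the moment computation over all $j$ because $G(P)$ is disconnected. Take $P=\{X_{i'j_1},X_{i'j_3},h^1_{j_2}\}$ with $i'\neq i$. Every $A_j$ individually gains $N^{-1/2}$. But a tuple containing $j_1$, with $j_2,j_3$ as unique indices, produces the factor $\partial_{X_{ij_2}}\partial_{X_{ij_3}}A_{j_1}$. Its graph is the connected path through $j_1,i',j_3,i,j_2$, so no gain is available for that factor. Hence ``the same moment computation gives $O(1/\sqrt N)$'' does not follow from what you wrote; you flagged this bookkeeping as the main obstacle but did not resolve it.

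The paper handles this in Claim~\ref{claim:in-h-induction}. There, only factors with $j_k\notin V(P)$ that receive no derivative $\partial_{X_{ij'_\ell}}$ with $j'_\ell\in V(P)$ are guaranteed the gain. This yields the exponent $p-u-u'$, and a counting argument then shows that tuples with many entries in $V(P)$ are rare enough.

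Within your framework there is a simpler fix. Always split $\sum_j X_{ij}A_j$ into $j\in V(P)$ and $j\notin V(P)$. The first part has at most $m$ terms, each $O(N^{-1/2})$ in $L^p$. In the moment expansion of the second part, every unique index also lies outside $V(P)$. So the set $Q_k$ of derivatives landing on $A_{j_k}$ only adds the vertex $i$ and indices $j_\ell\neq j_k$ outside $V(P)$. Hence $j_k\notin V(P\cup Q_k)$, and since $|P|>0$, $j_k$ is not all-reaching. All $p$ factors then keep their $N^{-1/2}$, giving $N^{-p/2}$ overall.

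Similarly, the $k=1$ term of your preliminary expansion needs part 2 at the earlier time for $j\notin V(P)$, not only the uniform bound from part 1.
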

In words, derivatives always have moments at most $O(1)$, and if the variable we are differentiating is not all-reaching in the differentiating multi-set, then all moments become $O(1/\sqrt{N}).$ Returning to \eqref{eq:covariance-cross-term-control}, note that the derivatives that appear therein have disconnected graphs consisting of three vertices and one edge. The theorem above justifies the fact that their 2-norm is $O(1/\sqrt{N}).$
\begin{remark}
    We believe \Cref{thm:derivative-bounds} is sharp, at least in the sense that for all $m$, there exists $t$ large enough such that the dependence on $N$ of the bounds above corresponding to $t$ and $m$ cannot be improved.
\end{remark}
Our proof of \Cref{thm:derivative-bounds} will be through an inductive argument in $t,$ which bounds the constant $\Gamma_{t,m,p}$ by a function of the constants $\Gamma_{t-1,m',p'},$ for $m',p'$ which are potentially larger than $m,p$. For this to work, cancellations need to be exploited when bounding derivatives of the field variables $\partial_P b^*_i\partial_P b^s_i$ and $\partial_P h_j^s$. This is done by a Taylor expansion which isolates the dependence on each disorder variable. A similar argument was used in \cite{wei-kwo-universality} in the context of proving universality in the disorder for AMP. In \cite{wei-kwo-universality}, a version of part (1) of \Cref{thm:derivative-bounds} was proven in a slightly different model by this strategy. For our purposes, it's crucial to also obtain part (2), as is clear from the proof sketch of the last section. Obtaining this strengthening requires quite a delicate analysis, carried out in \Cref{sec:proof-of-derivative-bounds}.

\section*{Acknowledgments}
The authors thank Giulio Biroli, Gerard Ben Arous, Florent Krzakala, Andrea Montanari for insightful discussions.

This work was started while the author Y. Dandi was visiting the Massachusetts Institute of Technology (MIT), with support from the EPFL Doc.Mobility grant, whose financial assistance is gratefully acknowledged. Early discussions on this work took place in the Simons Laufer Mathematical Sciences Institute (SLMath) in Berkeley, California, during the 2025 program Probability and Statistics of Discrete Structures. The authors are grateful to SLMath for its hospitality and support.

\printbibliography

\input{app-submission-draft}

\end{document}

%% file: app-submission-draft.tex
\appendix
\section{Remaining Proofs}
\subsection{Smooth Approximation: Proof of \Cref{prop:smooth-approximation}}\label{sec:smooth-approximation}
\begin{proof}[of \Cref{prop:smooth-approximation}]
Let $\zeta_t:\R^{2t+1}\to\R$ be a mollifier supported in the unit ball in $\R^{2t+1}$ such that $\int \zeta_t(x)dx = 1,$ and let $\zeta_{t,\eps}(x) = \eps^{-(2t+1)}\zeta_t(x/\eps)$, which is now supported in the $\eps$-ball. Let $F_{t,\eps} = F_t * \zeta_{t,\eps},G_{t,\eps} = G_t * \zeta_{t,\eps},$ and let $\bfb^{t,\eps},\bfu^{t,\eps},\bfh^{t,\eps},\bfv^{t,\eps}$ be the result of running the iteration of \Cref{defn:GFOM} with the same values of $X,\bftheta,\bfv^0,\bfw,\Tilde{\bfw}$ as for $\bfb^t,\bfu^t,\bfh^t$ and $\bfv^t,$ but with $F_{t,\eps},G_{t,\eps}$ in place of $F_t,G_t.$

First, we claim that we have 
\begin{align*}
    \sup_{x}|F_{t,\eps}(x)-F_t(x)| &\leq \eps L \\
    \sup_{x}|G_{t,\eps}(x)-G_t(x)| &\leq \eps L,
\end{align*}
where $L$ is a bound on is the Lipschitz constants of $F_t,G_t$. Indeed, letting $B_\eps$ be the $\eps$-ball, we have 
\begin{align}
    |F_{t,\eps}(x)-F_t(x)| &\leq \int_{B_\eps} \zeta_{t,\eps}(y)|F_t(x-y) - F_t(x)|dy \nonumber\\
    &\leq L\eps. \label{eq:F-eps-Lipschitz-bound}
\end{align}
The proof for $G$ is identical.

Next we claim that $F_{t,\eps},G_{t,\eps}$ have all partial derivatives of non-zero order uniformly bounded. We prove this only for $F_{t,\eps},$ since the proof for $G_{t,\eps}$ is the same. Consider a derivative tuple $\bfalpha \in \N^{2t+1}$, where we use the notation 
\[
\partial^{\bfalpha} F_{t,\eps} := \partial_1^{\alpha_1}\dots\partial_{2t+1}^{\alpha_{2t+1}} F_{t,\eps}.
\]
We assume that $|\bfalpha|:= \sum_{r=1}^{2t+1}\alpha_r \geq 1$, let $r'\in [2t+1]$ be such that $\alpha_{r'}\geq 1$, and let $\bfalpha' = (\alpha_1,\dots,\alpha_{r'}-1,\dots,\alpha_{2t+1}).$ Since $F_t$ is almost-everywhere differentiable, we can write
\begin{align*}
    \partial^{\bfalpha} F_{t,\eps}(x) &= \int \partial^{\bfalpha'}\zeta_{t,\eps}(x-y)\partial_{r'}F_t(y)dy \\
    &= \frac{1}{\eps^{2t+|\bfalpha|}}\int \partial^{\bfalpha'}\zeta_{t}\left(\frac{x-y}{\eps}\right)\partial_{r'}F_t(y)dy \\
    &= \eps^{1-|\bfalpha|}\int \partial^{\bfalpha'}\zeta_{t}(z)\partial_{r'}F_t(x-\eps z)dz.
\end{align*}
Since $\partial^{\bfalpha'}\zeta_{t}(z)$ is bounded and supported in the unit ball, and $\partial_{r'}F_t(x-\eps z)dz$ is bounded, this shows that $\partial^{\bfalpha} F_{t,\eps}(x)$ is uniformly bounded. Moreover, if $|\bfalpha|=1,$ then the uniform bound is independent of $\eps.$

Now we claim that all of $\norm{\bfb^{t,\eps}-\bfb^t}_2, \norm{\bfu^{t,\eps}-\bfu^t}_2,\norm{\bfh^{t,\eps}-\bfh^t}_2$ and $\norm{\bfv^{t,\eps}-\bfv^t}_2$ are bounded by $\eps \sqrt{N}\cdot O_{t,L}(1) $ with high probability. This will be proved by induction, following the same induction order as the one in which the vectors are defined in \Cref{defn:GFOM}, i.e. $\bfv^0\to \bfb^1\to \bfu^1\to\bfh^1\to\bfv^1\to\bfb^2\to\cdots.$ Clearly as a base case we have $\norm{\bfv^{0,\eps} - \bfv^0}_2=0.$ For the inductive steps, we carry out $\bfb^t\to \bfu^t$ and $\bfu^t\to \bfh^t$, since the other inductive steps are analogous. For $\bfb^t\to \bfu^t,$ we assume that 
\[
\max_{s\in [t]}\norm{\bfb^{s,\eps} - \bfb^s}_2 \leq \eps \sqrt{N} O_{t,L}(1)
\]
(note that $\bfb^{*,\eps} = \bfb^*)$, and we aim to show that $\norm{\bfu^{t,\eps} - \bfu^t}_2\leq \eps \sqrt{N} O_{t,L}(1).$ We have
\begin{align*}
    \norm{\bfu^{t,\eps} - \bfu^t}_2 &= \norm{F_{t,\eps}(\bfb^*,\bfb^{1,\eps},\dots,\bfb^{t,\eps}) - F_t(\bfb^*,\bfb^{1},\dots,\bfb^{t})}_2 \\
    &\leq \norm{F_{t,\eps}(\bfb^*,\bfb^{1,\eps},\dots,\bfb^{t,\eps}) - F_{t,\eps}(\bfb^*,\bfb^{1},\dots,\bfb^{t})}_2  \\
    &\qquad + \norm{F_{t,\eps}(\bfb^*,\bfb^{1},\dots,\bfb^{t}) - F_{t}(\bfb^*,\bfb^{1},\dots,\bfb^{t})}_2 \\
    &\leq L O_{t,L}(1)\sum_{s=1}^t \norm{\bfb^{s,\eps} - \bfb^s}_2 + L\eps \sqrt{N} \\
    &\leq \eps\sqrt{N} O_{t,L}(1).
\end{align*}
Above, in the third step we have used that $\nabla F_{t,\eps}$ is uniformly bounded (independently of $\eps$) to bound the first term. This proves the inductive step $\bfb^t\to\bfu^t.$ For $\bfu^t\to\bfh^t,$ we assume that $\norm{\bfu^{t,\eps} - \bfu^t}_2 \leq \eps\sqrt{N} O_{t,L}(1)$ and we prove that $\norm{\bfh^{t,\eps} - \bfh^t}_2 \leq \eps\sqrt{N} O_{t,L}(1)$. Indeed, we have 
\begin{align*}
    \norm{\bfh^{t,\eps} - \bfh^t}_2 &= \norm{X^T\bfu^{t,\eps} - X^T\bfu^t}_2  \\
    &\leq \norm{X^T}_2\norm{\bfu^{t,\eps} - \bfu^t}_2.
\end{align*}
But we have $\norm{X^T}_2\leq O(1)$ with high probability (see e.g. \cite{vershynin2018high}). This proves that the inductive step $\bfu^t\to\bfh^t.$

Finally, let $\phi_{\eps},\psi_\eps$ be obtained from $\phi,\psi$ in exactly the same way that we obtained $F_{t,\eps},G_{t,\eps}$ from $F_t,G_t.$ To conclude the proof of the proposition, it suffices to show that we have 
\begin{align*}
    \frac{1}{d}\sum_{j=1}^d |\phi(\theta_j, \{h_j^t\}_{t=1}^T, \{v_j^t\}_{t=0}^T) - \phi_\eps(\theta_j, \{h_j^{t,\eps}\}_{t=1}^T, \{v_j^{t,\eps}\}_{t=0}^T)| &\leq \eps O_{T,L}(1) \\
    \frac{1}{N}\sum_{i=1}^N |\psi(b_i^*, \{b_i^t\}_{t=1}^T, \{u_i^t\}_{t=1}^T) - \psi_\eps(b_i^*, \{b_i^{t,\eps}\}_{t=1}^T, \{u_i^{t,\eps}\}_{t=1}^T)| &\leq \eps O_{T,L}(1).
\end{align*}
Since the proofs are analogous, we prove the first one. By the same reason as in \eqref{eq:F-eps-Lipschitz-bound} above, we have the same uniform bound for $\phi - \phi_\eps.$ Hence we have 
\begin{align*}
    \frac{1}{d}&\sum_{j=1}^d |\phi(\theta_j, \{h_j^t\}_{t=1}^T, \{v_j^t\}_{t=0}^T) - \phi_\eps(\theta_j, \{h_j^{t,\eps}\}_{t=1}^T, \{v_j^{t,\eps}\}_{t=0}^T)| \\
    &\leq \frac{1}{d}\sum_{j=1}^d |\phi(\theta_j, \{h_j^t\}_{t=1}^T, \{v_j^t\}_{t=0}^T) - \phi_\eps(\theta_j, \{h_j^t\}_{t=1}^T, \{v_j^t\}_{t=0}^T))| \\
    &\qquad \qquad + |\phi_\eps(\theta_j, \{h_j^t\}_{t=1}^T, \{v_j^t\}_{t=0}^T)) - \phi_\eps(\theta_j, \{h_j^{t,\eps}\}_{t=1}^T, \{v_j^{t,\eps}\}_{t=0}^T)| \\
    &\leq L\eps + \frac{1}{\sqrt{d}}\norm{\phi_\eps(\bftheta, \{\bfh^t\}_{t=1}^T, \{\bfv^t\}_{t=0}^T)) - \phi_\eps(\bftheta, \{\bfh^{t,\eps}\}_{t=1}^T, \{\bfv^{t,\eps}\}_{t=0}^T)}_2 \\
    &\leq L\eps + \frac{1}{\sqrt{d}}O_{T,L}(1)\sum_{t=1}^T \norm{\bfh^t - \bfh^{t,\eps}}_2 + \norm{\bfv^t - \bfv^{t,\eps}}_2 \\
    &\leq \eps O_{t,L}(1),
\end{align*}
as desired.
\end{proof}

\subsection{Derivative Bounds: Proof of \Cref{thm:derivative-bounds}}\label{sec:proof-of-derivative-bounds}

The proof will be by induction in the same order in which the different variables are defined in \Cref{defn:GFOM}, i.e., 
\begin{align}
    (\bfv^0,\bftheta,\bfb^*)\to \bfb^1\to \bfu^1\to \bfh^1 \to \bfv^1\to \bfb^2\to \cdots. \label{eq:induction-chain}
\end{align}
At each step of the induction, our aim will be to verify the bounds of \Cref{thm:derivative-bounds} for the current variable in the chain \eqref{eq:induction-chain}, assuming the validity of the bounds for all prior variables in the chain and all values of $p,m.$ 

The inductive cases for $\bfb^t$ and $\bfu^t$ are completely analogous to those of $\bfh^t$ and $\bfv^t$, respectively. Hence, for brevity we only explicitly carry out the arguments for $\bfb^t$ and $\bfu^t$. We will also use throughout various standard bounds on subgaussian and subexponential random variables which are stated in \Cref{sec:subgaussian-subexponential}.

\paragraph{Base case.} As a base case, we begin by checking the bounds of \Cref{thm:derivative-bounds} for $\bfv^0,\bftheta,\bfb^*.$ Let $P$ be a multi-set of variables as in \Cref{defn:P-partial-derivatives} such that $|P|=m.$ It's clear that, for $\bfv^0,\bftheta,$ we have
\begin{align*}
    \partial_{P}v^0_j &= \1\{|P|=0\}v^0_j + \1\{P = \{v^0_j\}\} \\
    \partial_{P}\theta_j &= \1\{|P|=0\}\theta_j + \1\{P = \{\theta_j\}\}.
\end{align*}
The right-hand sides have all moments $O(1)$ by \Cref{lem:subgaussian-subexp-moments}. Moreover, suppose $j$ were not all-reaching in $P$. This in particular would mean that $|P|>0, P \neq \{v^0_j\}$ and $P \neq \{\theta_j\}$, so the right-hand sides above would be identically zero. Hence, their moments are trivially $O(1/\sqrt{N}).$

Finally we check the base case of $b_i^*.$ We have 
\begin{align*}
    \partial_P b_i^* &= \1\{P = \{b_i^*\}\} + \partial_P \sum_{j=1}^d X_{ij}\theta_j \\
    &= \1\{P = \{b_i^*\}\} +\sum_{j=1}^d\1\{X_{ij}\in P\}\partial_{P\setminus X_{ij}}\theta_j + X_{ij}\partial_P \theta_j \\
    &= \1\{P = \{b_i^*\}\} +\sum_{j=1}^d \1\{P = \{X_{ij}\}\} \theta_j + \1\{|P|=0\}\sum_{j=1}^d X_{ij}\theta_j
\end{align*}
so 
\begin{align*}
    \norm{\partial_P b_i^* }_p &\leq \1\{P = \{b_i^*\}\} + \sum_{j=1}^d \1\{P = \{X_{ij}\}\} \norm{\theta_j}_p + \1\{|P|=0\}\norm{b_i^*}_p.
\end{align*}
The middle sum has at most one non-zero term, and hence is always $O(1)$ by \Cref{lem:subgaussian-subexp-moments}. For the last term, note that since $X_{ij}$ and $\theta_j$ are $\frac{\sigma}{\sqrt{N}}$-subgaussian and $\sigma$-subgaussian, respectively, their product is $\frac{\sigma^2}{\sqrt{N}}$-subexponetial by \Cref{lem:prod-of-subgaussians}. But then by \Cref{lem:sum-subgaussian-subexp}, their sum across $j$ is $\sqrt{C} \sigma^2$-subexponential, and by \Cref{lem:subgaussian-subexp-moments}, we get $\norm{b_i^*}_{p}\leq C^{3/2} \sigma^2 p = O(1).$ Hence we conclude 
\begin{align*}
    \norm{\partial_P b_i^* }_p &\leq O(1)
\end{align*}
always, checking part (1) of \Cref{thm:derivative-bounds}. For part (2), note that if $i$ is not all-reaching in $P,$ then we can neither have $P = \{b_i^*\}$ nor $|P|=0$ nor $P = \{X_{ij}\}$ for any $j$. Hence $\partial_P b_i^*$ is again equal to zero, and the bound is trivially satisfied.

\paragraph{Inductive case, $\bfb^t$.} We assume that the bounds of \Cref{thm:derivative-bounds} are true for all variables prior to $\bfb^{t}$ in the chain \eqref{eq:induction-chain} and aim to verify them for $b^{t}_i$ for some $i \in [N].$ Let $P$ be a multi-set of variables as in \Cref{defn:P-partial-derivatives} such that $|P|=m.$ Using that $b_i^{t} =\sum_{j=1}^d X_{ij}v_j^{t-1},$ we have
\begin{align}
    \norm{\partial_P b_i^t}_p &= \norm{\1\{P = \{b_i^t\}\} + \sum_{\substack{j\in [d]: X_{ij}\in P}} \partial_{P\setminus X_{ij}} v^{t-1}_j + \sum_{j=1}^d X_{ij}\partial_P v_j^{t-1}}_p  \nonumber\\
    &\leq \1\{P = \{b_i^t\}\} + \sum_{\substack{j\in [d]: X_{ij}\in P}} \norm{\partial_{P\setminus X_{ij}} v^{t-1}_j }_p+ \norm{\sum_{j=1}^d X_{ij}\partial_P v_j^{t-1}}_p \label{eq:expansion-h}
\end{align}
For the middle term, each term in the summand is $\leq \Gamma_{t,m-1,p}$ by inductive hypothesis, so this term is always $O(1).$ Moreover, suppose that $i$ was not all-reaching in $P$. Note that $P \neq \{b_i^t\}$, and $G(P\setminus X_{ij})$ is precisely the graph obtained from $G(P)$ by starting at $i$ and taking a step towards neighbor $j,$ deleting the edge we traveled through behind us. From this interpretation it's clear that $j$ is also not all reaching in $G(P\setminus X_{ij})$ and hence in this case every term in the summand is $\leq \Gamma_{t,m-1,p}/\sqrt{N}$. 

In summary, we have shown that the first two terms in \eqref{eq:expansion-h} are always $O(1),$ and are $O(1/\sqrt{N})$ if $i$ is not all-reaching in $P.$ What remains to conclude this inductive case is to check that the same is true for the last term in \eqref{eq:expansion-h}. We have
\begin{align}
    \norm{\sum_{j=1}^d X_{ij}\partial_P v_j^{t-1}}_p^p &= \sum_{j_1,\dots,j_p=1}^d \E X_{ij_1}\dots X_{ij_p} (\partial_Pv_{j_1}^{t-1}) \cdots (\partial_Pv_{j_p}^{t-1}) .\label{eq:h-pth-power}
\end{align}
   Now, given a tuple of indices $(j_1,\dots,j_p)$, let $U(j_1,\dots,j_p)\subseteq \{j_1,\dots,j_p\}$ be the subset of indices which appear in the tuple $(j_1,\dots,j_p)$ exactly once. For $0\leq q\leq p,$ we say that $(j_1,\dots,j_p) \in \calA_q$ if $|U(j_1,\dots,j_p)|=q,$ namely, the number of indices appearing exactly once in the tuple is exactly $q.$ Below, for $(j_1,\dots,j_p) \in \calA_q$, we will label the indices in $U(j_1,\dots,j_p)$ as $j'_1,\dots,j'_q$, and note that, by definition, the $j'_1,\dots,j'_q$ must all be distinct. Moreover, to avoid clutter we use the notation 
\begin{align}\label{eq:V-expansion}
    V_{j_1,\dots,j_p}(X) &:= (\partial_Pv_{j_1}^{t-1}) \cdots (\partial_Pv_{j_p}^{t-1}),
\end{align}
where the dependence on $X$ is made explicit. Now, we want to expand $V_{j_1,\dots,j_p}(X)$ around $X_{ij_1'}=\dots=X_{ij_q'}=0.$ To that end, for $\eta_1,\dots,\eta_q\in [0,1],$ define the the matrix $X^{i,\{j'_1,\dots,j'_q\}}(\eta_1,\dots,\eta_q) \in \R^{N\times d}$ as 
\begin{align}
    X^{i,\{j'_1,\dots,j'_q\}}(\eta_1,\dots,\eta_q)_{kl} = \begin{cases}
    \eta_\ell X_{i,j'_\ell}\qquad &\text{if }k =i\text{ and }l=j'_\ell \\
    X_{kl} & \text{otherwise.}
\end{cases}\label{eq:X-prime-expansion}
\end{align}
Then, by applying fundamental theorem of calculus $q$ times, we have 
\begin{align}
    V_{j_1,\dots,j_p}(X) &= V'_{j_1,\dots,j_p}(X) + \int_{[0,1]^q}  \partial_{X_{ij_1'}}\dots \partial_{X_{ij_p'}}V_{j_1,\dots,j_p}(X^{i,\{j'_1,\dots,j'_q\}}(\eta_1,\dots,\eta_q))\prod_{\ell=1}^q (X_{i,j'_\ell}d\eta_\ell), \label{eq:V-FTC}
\end{align}
where $V_{j_1,\dots,j_p}'(X)$ is made up of a sum of terms, each of which sets at least one of the variables $X_{i,j_1'},\dots,X_{i,j_q'}$ to zero. Hence, using the fact that the $X_{i,j'_\ell}$ have expectation zero and appear only once in $(j_1,\dots,j_p)$, we can simplify \eqref{eq:h-pth-power} as follows:
\begin{align}
    &\norm{\sum_{j=1}^d X_{ij}\partial_P v_j^{t-1}}_p^p \nonumber\\
    &= \sum_{q=0}^p \int_{[0,1]^q} \prod_{\ell=1}^q d\eta_\ell\sum_{(i_1,\dots,i_p) \in \calA_q} \E\left(\prod_{\ell: j_\ell \notin U(j_1,\dots,j_p)} X_{i,j_\ell}\right) \left(\prod_{\ell\in [q]} X_{i,j'_\ell}^2\right) \nonumber\\
    &\qquad \qquad \qquad \qquad \qquad \qquad \qquad\qquad \qquad\partial_{X_{ij'_1}}\dots \partial_{X_{ij'_q}} V_{j_1,\dots,j_p}(X^{i,\{j'_1,\dots,j'_q\}}(\eta_1,\dots,\eta_q))\nonumber \\
    &\lesssim  \sum_{q=0}^p \int_{[0,1]^q} \prod_{\ell=1}^q d\eta_\ell\frac{1}{N^{(p+q)/2}}\sum_{(j_1,\dots,j_p) \in \calA_q} \norm{\partial_{X_{ij'_1}}\dots \partial_{X_{ij'_q}} V_{j_1,\dots,j_p}(X^{i,\{j'_1,\dots,j'_q\}}(\eta_1,\dots,\eta_q))}_{p+1}, \label{eq:h-hard-term-expansion}
\end{align}
where in the last step we have used \Cref{lem:subgaussian-subexp-moments} and the general version of Hölder's inequality, and the notation $\lesssim$ means inequality up to a multiplicative constant independent of $N.$ We now make a claim that will be proved below.
\begin{claim}\label{claim:in-h-induction}
    Given a tuple of indices $(j_1,\dots,j_p) \in \calA_q$, define the two quantities
    \begin{align*}
        u(j_1,\dots,j_p) &:= |\{\ell \in [p] : j_\ell \in V(P)\}| \\
        u'(j_1,\dots,j_p) &:= |\{\ell \in [q] : j_\ell' \in V(P)\}|.
    \end{align*}
    Then, we always have 
    \begin{align}
        \norm{\partial_{X_{ij'_1}}\dots \partial_{X_{ij'_q}} V_{j_1,\dots,j_p}(X^{i,\{j'_1,\dots,j'_q\}}(\eta_1,\dots,\eta_q))}_{p+1} &\leq O(1), \label{eq:claim-in-h-induction-O1}
    \end{align}
    and if $i$ is not all-reaching in $P$, we also have 
    \begin{align}
        \norm{\partial_{X_{ij'_1}}\dots \partial_{X_{ij'_q}} V_{j_1,\dots,j_p}(X^{i,\{j'_1,\dots,j'_q\}}(\eta_1,\dots,\eta_q))}_{p+1} &\leq O\left(\left(\frac{1}{\sqrt{N}}\right)^{p - u(j_1,\dots,j_p) - u'(j_1,\dots,j_p)}\right).\label{eq:claim-in-h-induction-OsqrtN}
    \end{align}
\end{claim}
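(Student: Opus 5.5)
The plan is to expand $V_{j_1,\dots,j_p}$ by the Leibniz rule, bound each resulting factor by the inductive hypothesis on $\bfv^{t-1}$, and then use a graph-counting argument to find enough factors that carry a $1/\sqrt{N}$.

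\textbf{Step 1 (the modified matrix is admissible).} $X^{i,\{j'_1,\dots,j'_q\}}(\eta_1,\dots,\eta_q)$ is obtained from $X$ by multiplying $q$ entries by constants $\eta_\ell\in[0,1]$. Its entries are therefore still independent, mean zero and $\sigma/\sqrt{N}$-subgaussian, so \eqref{assump:1-weak} holds. The inductive hypothesis applies to $\bfv^{t-1}$ computed from this matrix, since \Cref{thm:derivative-bounds} is stated under \eqref{assump:1-weak}. Moreover, $\partial_{X_{ij'_k}}$ of a function evaluated at the modified matrix agrees, up to the factor $\eta_k\le 1$, with the corresponding derivative of the GFOM run on the modified matrix. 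We may therefore work directly with that GFOM.

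\textbf{Step 2 (Leibniz expansion and the $O(1)$ bound).} Distributing the $q$ derivatives over the $p$ factors gives
\[
\partial_{X_{ij'_1}}\cdots\partial_{X_{ij'_q}}V_{j_1,\dots,j_p}=\sum_{(Q_1,\dots,Q_p)}\prod_{\ell=1}^p \partial_{P\cup Q_\ell}v^{t-1}_{j_\ell}.
\]
Here the sum runs over the at most $p^q=O(1)$ ways of partitioning $\{X_{ij'_1},\dots,X_{ij'_q}\}$ into $p$ groups $Q_1,\dots,Q_p$. We apply the generalized Hölder inequality with each of the $p$ factors in $L^{p(p+1)}$. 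Each factor is bounded by $\Gamma_{t-1,m+q,p(p+1)}$ by the inductive hypothesis, which gives \eqref{eq:claim-in-h-induction-O1}. For \eqref{eq:claim-in-h-induction-OsqrtN}, it then suffices to show the following: in every term of the sum, at least $p-u-u'$ indices $\ell$ have $j_\ell$ not all-reaching in $P\cup Q_\ell$. Each such factor has $L^{p(p+1)}$ norm at most $\Gamma/\sqrt{N}$ by part (2) of the inductive hypothesis, and Hölder multiplies these gains.

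\textbf{Step 3 (combinatorics, the main obstacle).} Assume $i$ is not all-reaching in $P$. Since $|P|=0$ would make every vertex all-reaching, we have $|P|>0$. Hence either $i\notin V(P)$, or $G(P)$ is disconnected. Exactly $p-u$ indices $\ell$ have $j_\ell\notin V(P)$. Fix such an $\ell$. The graph $G(P\cup Q_\ell)$ is $G(P)$ with edges $\{i,j'_k\}$ added for each $X_{ij'_k}\in Q_\ell$.

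\begin{itemize}
\item If $j_\ell$ is to lie in $V(P\cup Q_\ell)$, then some added edge must end at $j_\ell$.
\item If $G(P\cup Q_\ell)$ is to be connected, $i$ must be joined to $V(P)\neq\emptyset$. When $i\notin V(P)$, this needs an added edge into $V(P)$. When $G(P)$ is disconnected, merging the components also needs an added edge into $V(P)$.
\item Hence $j_\ell$ can be all-reaching in $P\cup Q_\ell$ only if $Q_\ell$ contains some $X_{ij'_k}$ with $j'_k\in V(P)$.
\end{itemize}

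In a fixed Leibniz term, each $X_{ij'_k}$ is assigned to exactly one $Q_\ell$. There are exactly $u'$ indices $k$ with $j'_k\in V(P)$, so at most $u'$ of the factors with $j_\ell\notin V(P)$ can be all-reaching. At least $p-u-u'$ factors therefore remain non-all-reaching, each contributing a $1/\sqrt{N}$, which proves \eqref{eq:claim-in-h-induction-OsqrtN}.

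The delicate point is this graph bookkeeping. One must handle both ways $i$ can fail to be all-reaching (absent from $V(P)$, or $G(P)$ disconnected), and confirm that a single added edge into $V(P)$ is necessary, though not necessarily sufficient, for connectivity in each case. Necessity is all the counting uses, so the bound survives even if several edges are needed.
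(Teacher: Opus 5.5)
Your proposal is correct and follows the paper's proof of this claim. Both arguments expand $\partial_Q V_{j_1,\dots,j_p}$ over assignments of $Q$ to the $p$ factors, apply Hölder in $L^{p(p+1)}$, use the inductive hypothesis on the rescaled matrix for the $O(1)$ bound, and count the at least $p-u-u'$ factors with $j_k\notin V(P)$ whose $Q_k$ adds no edge into $V(P)$. The only difference is phrasing: the paper fixes a vertex of $V(P)$ unreachable from $i$ in $G(P)$, while you argue that connectivity of $G(P\cup Q_\ell)$ would require an added edge into $V(P)$, with your first bullet ensuring $i$ actually lies in that graph.
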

Before proving \Cref{claim:in-h-induction}, let us conclude the inductive case for $b_i^t$ from it and \eqref{eq:h-hard-term-expansion}. Recall that it suffices for us to bound \eqref{eq:h-hard-term-expansion} by $O(1)$ in general and by $O( N^{-p/2})$ in case that $i$ is not all-reaching in $P$. For the first case, note that each tuple in $\calA$ has at most $q + (p-q)/2 = (p+q)/2$ distinct indices, and hence $|\calA_q|\lesssim N^{(p+q)/2}.$ This and the first case of \Cref{claim:in-h-induction} lets us conclude that
\begin{align*}
    \norm{\sum_{j=1}^d X_{ij}\partial_P v_j^{t-1}}_p^p &\leq O(1)
\end{align*}
in general, as desired. Now suppose that $i$ is not all-reaching in $P$. Then the second case of \Cref{claim:in-h-induction} and \eqref{eq:h-hard-term-expansion} imply that
\begin{align*}
    \norm{\sum_{j=1}^d X_{ij}\partial_P v_j^{t-1}}_p^p &\lesssim \sum_{q=0}^p \frac{1}{N^{(p+q)/2}}\sum_{(j_1,\dots,j_p) \in \calA_q} \left(\frac{1}{\sqrt{N}}\right)^{p-u(j_1,\dots,j_p)-u'(j_1,\dots,j_p)} \\
    &\lesssim \sum_{q=0}^p \Eop_{(j_1,\dots,j_p) \in \calA_q} \left(\frac{1}{\sqrt{N}}\right)^{p-u-u'} \\
    &= N^{-p/2} \sum_{q=0}^p \Eop_{(j_1,\dots,j_p) \in \calA_q} N^{u'}N^{(u-u')/2}
\end{align*}
where we have used the shorthand notation $u = u(j_1,\dots,j_p),u'=u'(j_1,\dots,j_p),$ so to conclude it suffices to show that 
\[
\Eop_{(j_1,\dots,j_p) \in \calA_q} N^{u'}N^{(u-u')/2}\leq O(1).
\]
Note that $(j_1,\dots,j_p)\sim \calA_q$ may be sampled in three steps. First, we sample distinct $j_1',\dots,j_q'$ uniformly at random in $[d].$ Then, we sample $j''_1,\dots,j''_{p-q}$ uniformly at random from $[d]\setminus\{j'_1,\dots,j'_q\}$ conditional on the fact that no $j''_k$ appears uniquely in $j''_1,\dots,j''_{p-q}$. Finally, we let $(j_1,\dots,j_p)$ be the union of $j'_1,\dots,j'_q$ and $j''_1,\dots,j''_{p-q}$ arranged in a random order. Then $u'$ is the number of $j'_k$ that lie in $V(P)$ and $u-u'$ is the number of $j''_k$ that lie in $V(P).$ Hence we have
\begin{align*}
    \Eop_{j''_1,\dots,j''_{p-q}}\left[N^{(u-u')/2}\; \Big|\;\{j_1',\dots,j'_q\}\right] &=  \frac{\sum_{s=0}^{(p-q)/2}\sum_{S\in \binom{[d]\setminus \{j_1',\dots,j_q'\}}{s}} \sum_{\substack{\{j''_1,\dots,j''_{p-q} \} =  S: \\ \text{no uniques}}} N^{(u-u')/2} }
    {\sum_{s=0}^{(p-q)/2}\sum_{S\in \binom{[d]\setminus \{j_1',\dots,j_q'\}}{s}} \sum_{\substack{\{j''_1,\dots,j''_{p-q} \} =  S: \\ \text{no uniques}}} 1 } \\
    &\leq 
    \frac{\sum_{s=0}^{(p-q)/2}s^{p-q}\sum_{S\in \binom{[d]\setminus \{j_1',\dots,j_q'\}}{s}}  d^{|S\cap P| + \1_{S\cap P \neq \emptyset}(p-q-2s)/2} }
    {\sum_{s=0}^{(p-q)/2} |\binom{[d]\setminus \{j_1',\dots,j_q'\}}{s}| }.
\end{align*}
Now we have 
\begin{align*}
    \sum_{S\in \binom{[d]\setminus \{j_1',\dots,j_q'\}}{s}}&  d^{|S\cap P| + \1_{S\cap P \neq \emptyset}(p-q-2s)/2} \\
    &\leq \left|\binom{[d]\setminus \{j_1',\dots,j_q'\}}{s}\right| \left(1+\sum_{k=1}^s \Pr_{S\sim \binom{[d]\setminus \{j_1',\dots,j_q'\}}{s}}[|S\cap P|\geq k] d^{k+ (p-q)/2- s} \right) \\
    &\leq \left|\binom{[d]\setminus \{j_1',\dots,j_q'\}}{s}\right| \left(1+\sum_{k=1}^s 2|P|^k d^{ (p-q)/2- s} \right) \\
    &\leq 4 d^{ (p-q)/2- s}\left|\binom{[d]\setminus \{j_1',\dots,j_q'\}}{s}\right| \sum_{k=1}^s |P|^k  \\
    &\lesssim   N^{ (p-q)/2}.
\end{align*}
Thus we get
\begin{align*}
    \Eop_{j''_1,\dots,j''_{p-q}}\left[d^{(u-u')/2}\; \Big|\;\{j_1',\dots,j'_q\}\right] &\lesssim \frac{d^{(p-q)/2}}{\sum_{s=0}^{(p-q)/2} |\binom{[d]\setminus \{j_1',\dots,j_q'\}}{s}| } \\
    &\lesssim 1.
\end{align*}
Hence 
\begin{align*}
    \Eop_{(j_1,\dots,j_p) \in \calA_q} N^{u'}N^{(u-u')/2} &\lesssim \Eop_{(j_1,\dots,j_p) \in \calA_q} \left[N^{u'}\Eop_{j''_1,\dots,j''_{p-q}}\left[N^{(u-u')/2}\; \Big|\;\{j_1',\dots,j'_q\}\right]\right] \\
    &\lesssim \Eop_{j_1',\dots,j_q'} \left[N^{u'}\right] \\
    &\lesssim 1,
\end{align*}
where in the last step we have used that $j_1',\dots,j_q'$ is a uniformly-random set of size $q$ in $[d]$, and $u'$ is the number of the $j_1',\dots,j_q'$ that lie in a subset of $[d]$ of size $ O(1).$ This concludes the proof of this inductive step of $\bfb^t$ modulo \Cref{claim:in-h-induction}. Below, we prove this claim and hence conclude the proof of the inductive step of $\bfb^t$.

\begin{proof}[of \Cref{claim:in-h-induction}]
Let $Q = \{X_{ij_1'},\dots,X_{ij'_q}\}.$ We have
\begin{align*}
    &\partial_Q V_{j_1,\dots,j_p}(X^{i,\{j'_1,\dots,j'_q\}}(\eta_1,\dots,\eta_q)) \\
    &=  \sum_{Q_1\sqcup\dots\sqcup Q_p = Q} \prod_{k=1}^p (\partial_{Q_k}\partial_P v^{t-1}_{j_k})(X^{i,\{j'_1,\dots,j'_q\}}(\eta_1,\dots,\eta_q))
\end{align*}
where the sum is over all partitions of $Q$ into $p$ sets, and the sets in the partition are allowed to be empty. Hence, by the triangle inequality and Hölder inequality, we have
\begin{align*}
    &\norm{\partial_Q V_{j_1,\dots,j_p}(X^{i,\{j'_1,\dots,j'_q\}}(\eta_1,\dots,\eta_q))}_{p+1} \\
    &\qquad\qquad \qquad\qquad\qquad=  \max_{Q_1\sqcup\dots\sqcup Q_p = Q} \prod_{k=1}^p\norm{(\partial_{Q_k}\partial_P v^{t-1}_{j_k})(X^{i,\{j'_1,\dots,j'_q\}}(\eta_1,\dots,\eta_q))}_{p(p+1)}
\end{align*}
To conclude the proof, we will show that every term in this max satisfies \eqref{eq:claim-in-h-induction-O1}, and if $i$ is not all-reaching in $P,$ also \eqref{eq:claim-in-h-induction-OsqrtN}. For the first part, note that, if all the coordinates of $X$ are $\sigma/\sqrt{N}$-subgaussian, so are all the coordinates of $X^{i,\{j'_1,\dots,j'_q\}}(\eta_1,\dots,\eta_q),$ since the only possible difference between the two is that some coordinates are scaled down. Hence, by inductive hypothesis, every term is $O(1)$, as desired.

    Finally, suppose $i$ is not all reaching in $P,$ and let $\ell \in V(P)$ be such that there is no path from $i$ to $\ell$ in $G(P).$ To conclude the proof, we will show that, for every element in the max above, at least $p-u(j_1,\dots,j_p)-u'(j_1,\dots,j_p)$ of the $p$ terms in the product are such that $\ell$ is also not reachable from $j_k$ in $G(Q_k\cup P).$ To see this, note that there are at least $p - u(j_1,\dots,j_p) - u'(j_1,\dots,j_p)$ terms $j_k$ such that $j_k \notin V(P)$ and $Q_k$ only connects vertices outside $V(P)$ to $i$. But now, since $j_k \notin V(P)$ and $Q_k$ consists only of connections from points outside $V(P)$ to $i$, any path going from $j_k$ to $\ell$ in $G(Q_k\cup P)$ must first travel to $i$, and then from $i$ to $\ell$ \emph{within} $G(P),$ which is a contradiction. This concludes the proof.
\end{proof}

\paragraph{Inductive case, $\bfu^t.$} To address this case, we first prove a general lemma. 
\begin{lemma}\label{lem:chain-rule-expansion}
    Let $f:\R^k \to\R$ be a smooth Lipschitz function, with all derivatives of order $k$ also Lipschitz with Lipschitz constant $L_k,$ for $k=0,1,\dots.$ Let $x_1,\dots,x_n$ be random variables, and for $\ell=1,\dots,k$, let $a_\ell = a_\ell(x_1,\dots,x_n)$ be smooth functions of these random variables. Then for every tuple $P = (i_1,\dots,i_m) \in [n]^m$, define
    \[
    \partial_P := \partial_{x_{i_1}}\dots\partial_{x_{i_m}}.
    \]
    Then, for all $p\geq 1,$ we have
    \begin{align*}
        \norm{\partial_P f(a_1,\dots,a_k)}_p &\leq \sum_{r=1}^m L_r \cdot k^{r+1}\cdot r^m\cdot \max_{j_0,\dots,j_r\in [k]} \norm{a_{j_0}}_{p(r+1)}\max_{P_1\sqcup\cdots\sqcup P_r=P}\prod_{\ell=1}^r\norm{\partial_{P_\ell} a_{j_\ell}}_{p(r+1)}.
    \end{align*}
\end{lemma}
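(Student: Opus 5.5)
The plan is to combine the multivariate Faà di Bruno formula with Hölder's inequality. Write $\bm{a}=(a_1,\dots,a_k)$. By induction on $m=|P|$ I will show that $\partial_P f(\bm{a})$ is a sum of terms of the form
\[
(\partial_{j_1}\cdots\partial_{j_r} f)(\bm{a})\,\prod_{\ell=1}^r \partial_{P_\ell} a_{j_\ell},
\]
indexed by $1\le r\le m$, an ordered partition $P_1\sqcup\cdots\sqcup P_r=P$ into nonempty blocks, and indices $j_1,\dots,j_r\in[k]$. The induction step is just the product and chain rules: applying one more $\partial_{x_i}$ either hits a factor $\partial_{P_\ell}a_{j_\ell}$, enlarging a block, or hits $(\partial^r f)(\bm a)$, which produces a new block $\{i\}$ and a new index $j_{r+1}$.

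The counting is crude but sufficient. For fixed $r$ there are at most $k^r$ index choices and at most $r^m$ assignments of the $m$ elements of $P$ to $r$ labelled blocks. So the number of terms at level $r$ is at most $k^r r^m$, which matches the stated prefactor $k^{r+1}r^m$ with one factor of $k$ to spare for handling the $f$-factor.

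Next I will control the derivative of $f$. Since $\partial^r f$ is $L_r$-Lipschitz, I have $|(\partial^r f)(\bm a)|\le |(\partial^r f)(0)| + L_r\sum_{j_0}|a_{j_0}|$. This is where the factor $\norm{a_{j_0}}$ and the extra factor of $k$ enter. The constant $|(\partial^r f)(0)|$ is not literally covered by the displayed bound, so I will absorb it by reading the hypothesis as bounding $\partial^r f$ in the affine sense. Alternatively, in the regime where the lemma is used, $\norm{a_{j_0}}_{p(r+1)}$ is bounded below by a constant or the constant is harmless. I will note this as a minor normalization issue rather than a real obstacle.

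Finally, I take $L^p$ norms. By the triangle inequality over terms and the generalized Hölder inequality with $r+1$ factors, each with exponent $p(r+1)$,
\[
\norm{(\partial^r f)(\bm a)\prod_{\ell}\partial_{P_\ell}a_{j_\ell}}_p\le \norm{(\partial^r f)(\bm a)}_{p(r+1)}\prod_\ell\norm{\partial_{P_\ell}a_{j_\ell}}_{p(r+1)}.
\]
I then bound the first factor by $L_r k\max_{j_0}\norm{a_{j_0}}_{p(r+1)}$ and replace the sum over partitions and indices by (count) $\times$ (max). This gives the claimed inequality. The only delicate point is the handling of the zeroth-order constant of $\partial^r f$; everything else is bookkeeping.
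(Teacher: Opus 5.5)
Your route is the same as the paper's. You use Faà di Bruno (which the paper imports from prior work), the triangle inequality, and generalized Hölder with all $r+1$ exponents equal to $p(r+1)$. You then count at most $k^r r^m$ terms, with one extra factor of $k$ from $\sum_{j_0}\norm{a_{j_0}}$. The counting and the Hölder step are fine.

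The issue you flag at $\partial^r f(0)$ is not a minor normalization. It is a genuine defect of the statement, and the paper's proof has it too. The paper bounds $\norm{(\partial_{j_1}\cdots\partial_{j_r}f)(\bm a)}_{p(r+1)}$ by $L_r\sum_{j_0}\norm{a_{j_0}}_{p(r+1)}$, which is only valid if $\partial_{j_1}\cdots\partial_{j_r}f(0)=0$.

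The displayed inequality is actually false. Take $k=n=m=1$, $f(x)=x$, and $a_1=x_1=\varepsilon Z$ with $Z$ standard Gaussian. The left side is $\norm{1}_p=1$. The right side is $L_1\varepsilon\norm{Z}_{2p}\to 0$, and it equals $0$ if you take $L_1=0$, which is a valid Lipschitz constant for $f'\equiv 1$. So neither of your remedies can close the last step as written:
\begin{itemize}
\item The affine reading yields $|\partial^r f(0)|+L_r k\max_{j_0}\norm{a_{j_0}}$. This is dominated by a multiple of $\max_{j_0}\norm{a_{j_0}}$ only when that maximum is bounded below.
\item ``The constant is harmless'' is not an argument.
\end{itemize}

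Instead, state and prove a corrected lemma. The cleanest version bounds the $f$-factor in $L^\infty$. Each $\partial_{j_1}\cdots\partial_{j_{r-1}}f$ is $L_{r-1}$-Lipschitz, so $|\partial_{j_1}\cdots\partial_{j_r}f|\le L_{r-1}$. Hölder over the remaining $r$ factors then gives, for $m\ge 1$ (for $m=0$ the right side is an empty sum),
\[
\norm{\partial_P f(\bm a)}_p\le\sum_{r=1}^m L_{r-1}\,k^r\,r^m\max_{j_1,\dots,j_r\in[k]}\ \max_{P_1\sqcup\cdots\sqcup P_r=P}\ \prod_{\ell=1}^r\norm{\partial_{P_\ell}a_{j_\ell}}_{pr}.
\]
This is all the inductive step for $\bfu^t$ needs. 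There $F_t\in C^\infty_{1b}$, and when $i$ is not all-reaching, every product contains a factor that is $O(1/\sqrt N)$.

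If you prefer to keep the paper's form, replace $L_r$ by $L_r+\max_{j_1,\dots,j_r}|\partial_{j_1}\cdots\partial_{j_r}f(0)|$. In that application the Gaussian noise $w_i^1$ is one of the arguments of $F_t$, so $\max_{j_0}\norm{a_{j_0}}_q\ge\norm{w_i^1}_q\ge 1$ for $q\ge 2$. This turns your second remedy into an actual argument.
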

Before proving the lemma, note that the inductive case for $\bfu^t$ follows from it. Indeed, since 
\begin{align*}
u_i^t &= F_t(b_i^*, b_i^1,\dots,b_i^t) 
\end{align*}
and by inductive assumption all moments of partial derivatives of the inputs to $F_t$ above are $O(1)$, the same holds for all moments of partial derivatives of $u^t_i$ by \Cref{lem:chain-rule-expansion}. Moreover, suppose $i$ is not all reaching in $P$. Note that, for every partition $P_1\sqcup\cdots\sqcup P_r$ of $P,$ there must exist some $\ell$ such that $i$ is not all-reaching in $P_\ell.$ Hence, in \Cref{lem:chain-rule-expansion}, by the inductive hypothesis, all terms in the max have a term in the product that is $O( 1/\sqrt{N}),$ and hence so will be the max itself. We conclude the inductive case for $\bfu^t$ by proving \Cref{lem:chain-rule-expansion}.

\begin{proof}[of \Cref{lem:chain-rule-expansion}]
    It's shown in \cite[Lemma 5.1]{wei-kwo-universality} that we have 
    \begin{align*}
        \partial_P f(a_1,\dots,a_k) &= \sum_{r=1}^m \sum_{1\leq j_1,\dots,j_r\leq k} \sum_{P_1\sqcup \cdots \sqcup P_r = P} (\partial_{j_1}\dots\partial_{j_r}f)(a_1,\dots,a_k) (\partial_{P_1}a_{j_1}) \dots (\partial_{P_r}a_{j_r}),
    \end{align*}
    so
    \begin{align*}
        &\norm{\partial_P f(a_1,\dots,a_k)}_p \\
        &\leq \sum_{r=1}^m \sum_{1\leq j_1,\dots,j_r\leq k} \sum_{P_1\sqcup \cdots \sqcup P_r = P} \norm{(\partial_{j_1}\dots\partial_{j_r}f)(a_1,\dots,a_k)}_{p(r+1)}\norm{\partial_{P_1}a_{j_1}}_{p(r+1)} \dots \norm{\partial_{P_r}a_{j_r}}_{p(r+1)} \\
        &\leq \sum_{r=1}^m \sum_{1\leq j_1,\dots,j_r\leq k} \sum_{P_1\sqcup \cdots \sqcup P_r = P} L_r\left(\norm{a_1}_{p(r+1)} + \dots + \norm{a_k}_{p(r+1)} \right)\norm{\partial_{P_1}a_{j_1}}_{p(r+1)} \dots \norm{\partial_{P_r}a_{j_r}}_{p(r+1)} \\
        &\leq \sum_{r=1}^m L_r \cdot k^{r+1}\cdot r^m\cdot \max_{j_0,\dots,j_r\in [k]} \norm{a_{j_0}}_{p(r+1)}\max_{P_1\sqcup\cdots\sqcup P_r=P}\prod_{\ell=1}^r\norm{\partial_{P_\ell} a_{j_\ell}}_{p(r+1)},
    \end{align*}
    as desired.
\end{proof}

\subsection{Proof of Main Result}\label{sec:main-proof}
In this section we prove \Cref{thm:main-strong} and \Cref{thm:main}. We begin by showing that the former implies the latter.

\subsubsection{Reduction to \Cref{thm:main-strong}}

\begin{lemma}\label{lem:main-strong-implies-main}
    \Cref{thm:main-strong} implies \Cref{thm:main}.
\end{lemma}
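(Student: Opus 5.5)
By \Cref{prop:smooth-approximation}, it suffices to prove \Cref{thm:main} when $\phi,\psi,F_t,G_t$ are all $C^\infty_{1b}$. In that case all hypotheses of \Cref{thm:main-strong} hold. The plan is the standard second-moment (Chebyshev) argument, using exchangeability to reduce the empirical averages to single-coordinate statements. We treat $\phi$ and $\nu_{T,N}$; the case of $\psi$ and $\mu_{T,N}$ is identical with $i$ in place of $j$.

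Write $\Phi_j := \phi(\theta_j, \{h_j^t\}_{t=1}^T, \{v_j^t\}_{t=0}^T)$, so that $\int \phi\, d\nu_{T,N} = \frac1d\sum_{j=1}^d \Phi_j$. The joint law of the whole system is invariant under simultaneous permutations of the columns of $X$ and the coordinates of $\bftheta,\bfv^0,\tilde{\bfw}^t$, because the entries are i.i.d. and the $F_t,G_t$ act coordinate-wise. Consequently $\E\Phi_j$ does not depend on $j$, and $\Cov(\Phi_j,\Phi_{j'})$ takes a single common value $c_N$ for all $j\neq j'$. Part 1 of \Cref{thm:main-strong} then gives
\[
\E\int\phi\, d\nu_{T,N} = \E\Phi_1 \to \E_{\nu_T}\phi .
\]
For the fluctuations,
\[
\Var\Big(\tfrac1d\textstyle\sum_j\Phi_j\Big) = \tfrac1d\Var(\Phi_1) + \tfrac{d-1}{d}\,c_N .
\]
Part 2 of \Cref{thm:main-strong} gives $c_N\to 0$. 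Chebyshev's inequality then yields convergence in probability, provided $\Var(\Phi_1)=O(1)$.

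That bound on $\Var(\Phi_1)$ is the only genuine gap to fill. A function in $C^\infty_{1b}$ has bounded gradient, so it is Lipschitz and $|\phi(x)|\lesssim 1+|x|$. It therefore suffices to show that $\theta_1,h_1^t,v_1^t$ have bounded second moments. This is exactly part 1 of \Cref{thm:derivative-bounds} with $P=\emptyset$ ($m=0$, $p=2$), whose hypotheses (A1') and (A2)--(A4) are implied by ours. Alternatively, one can bound $\Phi_1$ directly, without \Cref{thm:derivative-bounds}, by using operator-norm control of $X$ and iterating the Lipschitz bounds.

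The main point requiring care is to make the exchangeability claim precise: the assertion that $\E\Phi_j$ and the off-diagonal covariances are uniform in $j$ is what lets a per-coordinate result control the empirical measure. That said, part 1 of \Cref{thm:main-strong} is already stated for all $j$ and part 2 for all $j\neq j'$. One can therefore avoid exchangeability entirely, provided the convergence in \Cref{thm:main-strong} is uniform in the indices. Exchangeability supplies exactly this uniformity, so either route closes the argument.
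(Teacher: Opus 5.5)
Your proposal is correct and follows the same route as the paper. You reduce to the $C^\infty_{1b}$ case via the smooth-approximation proposition, get convergence of the mean from part (1) and vanishing variance from part (2) of the strong theorem, and conclude by Chebyshev. Your extra details are also correct and fill in steps the paper leaves implicit: exchangeability to pass from single-coordinate statements to the empirical average, and the bound $\Var(\Phi_1)=O(1)$ from the $m=0$ case of the derivative bounds, which makes the diagonal term $\frac{1}{d}\Var(\Phi_1)$ vanish.
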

\begin{proof}
    Suppose \Cref{thm:main-strong} holds. By \Cref{prop:smooth-approximation}, it suffices to prove \Cref{thm:main} in the special case where $\phi,\psi,F_t,G_t$ are all $C^\infty_{1b}.$ But part (1) of \Cref{thm:main-strong} implies that 
    \[
    \E \int \phi d\nu_{N,T} \to \int \phi d\nu_{T},
    \]
    and part (2) implies that 
    \[
    \Var\left(\int \phi d\nu_{N,T} \right)\to 0.
    \]
    Chebyshev's inequality then implies that 
    \[
    \int \phi d\nu_{N,T} \to \int \phi d\nu_{T}
    \]
    in probability, and similarly
    \[
        \int \psi d\mu_{N,T} \to \int \psi d\mu_{T}
    \]
    in probability. 
\end{proof}

For the rest of the section, we prove \Cref{thm:main-strong}.
\subsubsection{Reduction to Fields}
We begin by noting that, to prove \Cref{thm:main-strong}, it suffices to consider functions of the field variables only.
\begin{theorem}\label{thm:reduction-to-fields}
    Let $\overline{\mu}_T, \overline{\nu}_T$ be the extensions of the measures defined by the DMFT equations \eqref{eq:mu-0-base-case}-\eqref{eq:nu-T-recursion} defined as follows:
    \begin{align*}
        \overline{\mu}_T &:= \Law(b^*, \{b^t\}_{t=1}^T, \{u^t\}_{t=1}^T, \{w^t\}_{t=1}^T) \\
        \overline{\nu}_T &:= \Law(\theta, \{h^t\}_{t=1}^T, \{v^t\}_{t=0}^T, \{\Tilde{w}^t\}_{t=1}^T).
    \end{align*}
    Note that $\mu_T$ and $\nu_T$ can be obtained from $\overline{\mu}_T$ and $\overline{\nu}_T$  by marginalizing out the noise variables $\{w^t\}_{t=1}^T$ and $\{\tilde{w}^t\}_{t=1}^T$, respectively. Under the same assumptions as \Cref{thm:main-strong}, for every pair of $C^\infty_{1b}$ functions $\phi:\R^{2T+2}\to\R$ and $\psi:\R^{2T+1}\to \R$, the following hold:
    \begin{enumerate}
        \item For all $i\in [N]$ and $j\in [d],$
        \begin{align}
            \lim |\E \phi( \{h_j^t\}_{t=1}^T;v^0_j,\theta_j,\{\tilde{w}_j^t\}_{t=1}^T) - \E_{\overline{\nu}_T} \phi( \{h^t\}_{t=1}^T;v^0,\theta,\{\tilde{w}^t\}_{t=1}^T)| &= 0 \label{eq:nu-marginal-covergence} \\
            \lim |\E \psi(b_i^*, \{b_i^t\}_{t=1}^T; \{w^t_i\}_{t=1}^T) - \E_{\overline{\mu}_T} \psi(b^*, \{b^t\}_{t=1}^T;\{w^t\}_{t=1}^T)| &= 0.\label{eq:mu-marginal-covergence} 
        \end{align}
        \item For all $i\neq i' \in [N]$ and $j\neq j'\in [d]$, we have
        \begin{align}
            \lim  \left|\Cov\left(\phi( \{h_j^t\}_{t=1}^T;v^0_j,\theta_j,\{\tilde{w}_j^t\}_{t=1}^T),\; \phi( \{h_{j'}^t\}_{t=1}^T;v^0_{j'},\theta_{j'},\{\tilde{w}_{j'}^t\}_{t=1}^T)\right)\right| &= 0 \label{eq:nu-decoupling}\\
            \lim  \left|\Cov\left( \psi(b_i^*, \{b_i^t\}_{t=1}^T; \{w^t_i\}_{t=1}^T),\; \psi(b_{i'}^*, \{b_{i'}^t\}_{t=1}^T; \{w^t_{i'}\}_{t=1}^T)\right)\right| &= 0.\label{eq:mu-decoupling}
        \end{align}
        \end{enumerate}
\end{theorem}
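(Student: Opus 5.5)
The proof is by induction on $T$, along the same chain \eqref{eq:induction-chain} used for \Cref{thm:derivative-bounds}. At stage $T$ I assume \eqref{eq:nu-marginal-covergence}–\eqref{eq:nu-decoupling} for $T-1$ and prove \eqref{eq:mu-marginal-covergence}, \eqref{eq:mu-decoupling} for $T$. The $\nu_T$ statements then follow from the $\mu_T$ ones by the symmetric argument with $X$ replaced by $X^T$. The base case $T=0$ is a plain CLT for $b_i^*=\sum_j X_{ij}\theta_j$, e.g.\ by Lindeberg replacement with Gaussian entries.

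\textbf{Step 1: reduce to a Gaussian vector plus concentrating coefficients.} Fix $i$ and, for $t\le T$, expand $b_i^t$ as in \eqref{eq:b-i-t-expansion}. Define
\[
c^{t,s}_i:=\sum_j X_{ij}^2\partial_{h^s_j}v^{t-1}_j(X^{ij}).
\]
Using \Cref{thm:derivative-bounds}, I aim to show in $L^2$ that
\[
b_i^t = z_i^t + \sum_{s<t} c^{t,s}_i u_i^s + o(1),\qquad z_i^t:=\sum_j X_{ij}v^{t-1}_j(X^{ij}).
\]
The cubic remainder has $d$ terms, each of order $N^{-3/2}$. The $b$-derivative contributions are non-all-reaching, hence $O(N^{-1/2})$. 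Next I show $c_i^{t,s}\to \alpha\E_{\nu_{T-1}}[\partial_{h^s}v^{t-1}]$ in $L^2$.
\begin{itemize}
\item The mean is $\frac1N\sum_j\E\partial_{h^s_j}v^{t-1}_j$ up to $O(N^{-1/2})$. By the chain rule, the finite-$N$ derivative $\partial_{h_j^s}v^{t-1}_j$ equals the "effective" derivative, computed by differentiating $G$ only through $h_j^s,\dots,h_j^{t-1}$, plus terms passing through $u_i$. The latter are non-all-reaching and hence small.
\item The effective derivative is a $C^\infty_{1b}$-type function of $(\{h_j^r\}_{r<T},v_j^0,\theta_j,\tilde w_j)$. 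Its expectation therefore converges by \eqref{eq:nu-marginal-covergence} at $T-1$, after truncating the polynomially growing factors via moment bounds.
\item The variance vanishes by \eqref{eq:nu-decoupling} at $T-1$ together with the $X_{ij}^2-1/N$ fluctuations.
\end{itemize}
Since $u_i^s=F_s(b_i^*,b_i^1,\dots,b_i^s;w_i)$, the map $(b_i^*,z_i^1,\dots,z_i^T,w_i)\mapsto(b_i^*,b_i^{1:T},w_i)$ is recursively determined. With the limiting coefficients it is Lipschitz and matches \eqref{eq:mu-T-recursion}. It therefore suffices to show that $(b_i^*,z_i^1,\dots,z_i^T)$ converges in law, against $C^\infty_{1b}$ tests, to the Gaussian with covariance \eqref{eq:nu-T-covariance}, independently of $w_i$.

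\textbf{Step 2: Lindeberg for the cavity fields.} This is the step I expect to be the main obstacle. The cavity fields $z_i^t$ use a different matrix $X^{ij}$ for each $j$, so one cannot simply condition on the rest of the system.

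My first attempt is to replace $v_j^{t-1}(X^{ij})$ by $v_j^{t-1}(X^{(i)})$, where the whole row $i$ is zeroed. The error is $\sum_j X_{ij}\sum_{j'\neq j}X_{ij'}\partial_{X_{ij'}}v_j^{t-1}(\cdot)$. Each derivative here is non-all-reaching and hence $O(N^{-1/2})$, but a naive count gives $O(1)$. To get the needed cancellation I would bound the $p$-th moment by the same unique-index expansion used in \Cref{claim:in-h-induction}. Indices appearing once are killed by the mean-zero property after a Taylor expansion, and the graph-connectivity bookkeeping supplies a factor $N^{-1/2}$ per unpaired index.

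Once row $i$ is decoupled from $v_j^{t-1}(X^{(i)})$, conditionally on $X^{(i)}$ the vector $z_i$ is a sum of independent terms. A Lindeberg swap to Gaussian $X_{ij}$ costs $O(d\cdot N^{-3/2})$. The conditional covariance is $\frac1N\sum_j v_j^{s-1}v_j^{t-1}$ (with $X^{(i)}$), and it concentrates at $\alpha\Cov_{\nu_{T-1}}$ by the induction hypothesis and the derivative bounds, as in the covariance sketch around \eqref{eq:covariance-cross-term-control}. Independence from $w_i$ is automatic, since $w_i$ enters only through $u_i$, which has been factored out.

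\textbf{Step 3: decoupling.} For \eqref{eq:mu-decoupling} with $i\neq i'$, I repeat Steps 1–2 with both rows $i,i'$ removed simultaneously. Conditionally on $X^{(i,i')}$, the pairs $(z_i,z_{i'})$ are built from independent rows. Their conditional covariances concentrate, so the joint law is asymptotically a product of the two Gaussians. The coefficients $c_i,c_{i'}$ concentrate to the same deterministic limits, which yields asymptotic independence of $\psi(b_i,\dots)$ and $\psi(b_{i'},\dots)$. The errors from reinserting one row into the other's cavity are again controlled by the non-all-reaching bounds of \Cref{thm:derivative-bounds}.
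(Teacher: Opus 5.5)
The gap is in Step~1, where you identify the limit of $c_i^{t,s}$. You split $\partial_{h^s_j}v^{t-1}_j$ into an ``effective'' derivative plus ``terms passing through $u_i$''. You then discard the latter because they are non-all-reaching. But $v^{t-1}_j$ also depends on $h^s_j$ through the later fields $h^r_j$, $s<r\le t-1$, via
\[
\partial_{h^s_j}h^r_j=\sum_{i'=1}^N X_{i'j}\,\partial_{h^s_j}u^r_{i'}.
\]
Each $\partial_{h^s_j}u^r_{i'}$ is indeed $O(N^{-1/2})$ by \Cref{thm:derivative-bounds}. However, the $N$ summands add coherently, because $u^r_{i'}$ sees $h^s_j$ explicitly through the term $X_{i'j}v^{r'-1}_j$ of $b^{r'}_{i'}$.

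Take gradient descent on the square loss, $u^t=b^t-b^*$ and $v^t=v^{t-1}-\eta h^t$. There one has exactly $\partial_{h^s_j}u^{s+1}_{i'}=-\eta X_{i'j}$. Hence $\partial_{h^s_j}h^{s+1}_j=-\eta\sum_{i'}X_{i'j}^2\approx-\eta$ and $\partial_{h^s_j}v^{s+1}_j\approx-\eta(1-\eta)$. Keeping only the direct dependence of $G_{s+1}$ on $h^s$ gives $-\eta$ instead, and so the wrong limit $-\alpha\eta$ for $c_i^{s+2,s}$. This $O(1)$ feedback is exactly why $\E_{\nu_{T-1}}[\partial_{h^s}v^{t-1}]$ in \Cref{rem:meaning-of-field-derivatives-in-dmft} is a recursive object involving the coefficients $\E_{\mu_{T-1}}[\partial_{b^{r'}}u^{r}]$.

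Relatedly, at finite $N$ the effective derivative is not a function of $(\{h^r_j\},v^0_j,\theta_j,\tilde w_j)$ alone. It contains the random coefficients $\sum_{i'}X_{i'j}^2\partial_{b^{r'}_{i'}}u^r_{i'}$. So \eqref{eq:nu-marginal-covergence} and \eqref{eq:nu-decoupling} at $T-1$ cannot yet be applied to it.

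What is missing is the mechanism of the paper's proof of \Cref{lem:inductive-case-lemma1}. You must cavity-expand the fields $h^r_j$ as well. This writes $\partial_{h^s_j}h^r_j$ as $\partial_{h^s_j}g^r_j$, where $g^r_j=\sum_{i'}X_{i'j}u^r_{i'}(X^{i'j})$ is the cavity field, plus the explicit feedback $\sum_{r'}\partial_{h^s_j}v^{r'-1}_j\sum_{i'}X_{i'j}^2\partial_{b^{r'}_{i'}}u^r_{i'}$, up to remainders that are small by \Cref{thm:derivative-bounds}. The two pieces need separate arguments:
\begin{enumerate}
\item The first piece is $O(N^{-1/2})$ only through the unique-index cancellation used in the proof of \Cref{thm:derivative-bounds}, since it is again a sum of $N$ non-all-reaching terms.
\item The second piece requires these $\nu$-side coefficients to concentrate at $\E_{\mu}[\partial_{b^{r'}}u^r]$. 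That in turn needs the same analysis of $\partial_{b^{r'}_{i'}}u^r_{i'}$ at earlier times.
\end{enumerate}
So your induction must carry more than the distributional statements. It needs $L^p$-approximations of all fields \emph{and} field-derivatives by an effective system with empirical coefficients, followed by concentration of those coefficients. This is what the paper's tilde and bar systems do.

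Once this is repaired, your Steps~2--3 are a legitimate alternative to the paper's route. The paper, in \Cref{lem:inductive-case-lemma2}, Lindeberg-swaps the summands of the cavity field one at a time against independent Gaussians with the unconditional covariance. You instead remove the whole row and use a conditional CLT, whose covariance $\frac1N\sum_jv^{s-1}_jv^{t-1}_j$ concentrates by the induction hypothesis (after truncation, since $x^2\notin C^\infty_{1b}$). The row-replacement error can indeed be shown to be $O(N^{-1/2})$ in $L^2$. The reason is that $v_j$ is always evaluated with $X_{ij}=0$, so no row-$i$ derivative produced by the unique-index expansion can make $j$ all-reaching.
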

\begin{lemma}
    \Cref{thm:reduction-to-fields} implies \Cref{thm:main-strong}.
\end{lemma}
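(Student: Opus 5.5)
The idea is to note that the non-field coordinates of each tuple in \Cref{thm:main-strong} are deterministic smooth functions of the field coordinates and the noise/initialization. This lets any $C^\infty_{1b}$ test function of the full tuple be rewritten as a test function of the tuple appearing in \Cref{thm:reduction-to-fields}.

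First, consider the $\bfb$-side. By \Cref{defn:GFOM},
\[
u_i^t = F_t(b_i^*,b_i^1,\dots,b_i^t;w_i^1,\dots,w_i^t)
\]
for every $t\le T$. Given $\psi:\R^{2T+1}\to\R$ in $C^\infty_{1b}$, I define
\[
\tilde\psi(b^*,b^1,\dots,b^T;w^1,\dots,w^T) := \psi\big(b^*,\{b^t\}_{t=1}^T,\{F_t(b^*,b^1,\dots,b^t;w^1,\dots,w^t)\}_{t=1}^T\big).
\]
The same identity holds in the effective process by \eqref{eq:mu-T-recursion}. Hence
\[
\E\psi(b_i^*,\{b_i^t\},\{u_i^t\}) = \E\tilde\psi(b_i^*,\{b_i^t\};\{w_i^t\}),
\qquad
\E_{\mu_T}\psi = \E_{\overline{\mu}_T}\tilde\psi .
\]
The same rewriting applies to the covariance of two coordinates $i\neq i'$.

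The $\bfh$-side is handled the same way. By \eqref{eq:nu-T-recursion} and \Cref{defn:GFOM},
\[
v^t = G_t(h^1,\dots,h^t;v^0,\tilde w^1,\dots,\tilde w^t)
\]
both in the finite system and in the effective process. So $\phi(\theta,\{h^t\},\{v^t\}_{t=0}^T)$ becomes a function $\tilde\phi(\{h^t\};v^0,\theta,\{\tilde w^t\})$. Then \eqref{eq:nu-marginal-covergence}, \eqref{eq:mu-marginal-covergence}, \eqref{eq:nu-decoupling} and \eqref{eq:mu-decoupling}, applied to $\tilde\phi$ and $\tilde\psi$, give exactly parts (1) and (2) of \Cref{thm:main-strong}.

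The only point requiring care is that $\tilde\phi$ and $\tilde\psi$ are again $C^\infty_{1b}$, which is what \Cref{thm:reduction-to-fields} requires. Two issues arise here.

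The first is that the input dimensions differ from those of $\phi$ and $\psi$. The noise and initialization add coordinates, so $\tilde\psi$ lives on $\R^{2T+1}$ and $\tilde\phi$ on $\R^{3T+2}$. I read \Cref{thm:reduction-to-fields} as applying to $C^\infty_{1b}$ functions of the displayed arguments, whatever their number.

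The second is regularity. A composition of $C^\infty_{1b}$ maps is $C^\infty$, and by the multivariate chain rule (Faà di Bruno, as in \Cref{lem:chain-rule-expansion}) every derivative of nonzero order is a finite sum of products of derivatives of $\psi$, $F_t$, $G_t$. Each factor is either a derivative of nonzero order of some $F_t$, $G_t$, or $\psi$, hence uniformly bounded, or the identity coordinate map, whose derivative is constant. Crucially, no undifferentiated inner function ever appears as a factor. So all nonzero-order derivatives of $\tilde\psi$ and $\tilde\phi$ are uniformly bounded.

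I expect no real obstacle. The main subtlety is the bookkeeping just described: checking that $\partial^{\bfalpha}\tilde\psi$ never involves an undifferentiated $F_t$, so that boundedness survives composition even though $F_t$ itself is unbounded.
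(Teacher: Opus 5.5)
Your proposal is correct and uses the same argument as the paper: rewrite $u_i^t$ and $v_j^t$ via $F_t$ and $G_t$, absorb these into the test functions, and check that the compositions are $C^\infty_{1b}$, a step you justify via Faà di Bruno where the paper simply asserts it. Your ``first issue'' is moot, since the tuples in \Cref{thm:reduction-to-fields} have exactly $2T+1$ and $2T+2$ entries, so $\tilde\psi$ and $\tilde\phi$ live on $\R^{2T+1}$ and $\R^{2T+2}$ (not $\R^{3T+2}$) and that theorem applies verbatim.
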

\begin{proof}
    This follows from the fact that we have $u_i^t =F_t(b^*_i,b^1_i,\dots,b_i^t; w_i^1,\dots,w_i^t)$ and $v_j^t = G_t(h^1_j,\dots,h^t_j;v^0_j,\tilde{w}^1_j,\dots,\tilde{w}^t_j)$, and we can absorb all functions $F_t,G_t$ into $\phi,\psi.$ The composition of two $C^\infty_{1b}$ functions is also $C^\infty_{1b},$ so the lemma follows.
\end{proof}

\subsubsection{Induction Outline}\label{sec:induction-outline}
The proof of \Cref{thm:reduction-to-fields} will be by induction. The induction order will be, as usual, the same as the order of definition in \Cref{defn:GFOM}:
{\small
\begin{align*}
    \{\text{\eqref{eq:nu-marginal-covergence}-\eqref{eq:mu-decoupling} at }T=0 \}\to \{\eqref{eq:mu-marginal-covergence},\eqref{eq:mu-decoupling} \text{ at }T=1 \}\to \{\eqref{eq:nu-marginal-covergence},\eqref{eq:nu-decoupling} \text{ at }T=1 \}\to \{\eqref{eq:mu-marginal-covergence},\eqref{eq:mu-decoupling} \text{ at }T=2 \}\to \cdots.
\end{align*}
}
At each step in the chain above, we will prove the corresponding statement while assuming all prior statements in the chain. There are two types of inductive statements: proving \eqref{eq:nu-marginal-covergence}, \eqref{eq:nu-decoupling} at step $T,$ and proving \eqref{eq:mu-marginal-covergence}, \eqref{eq:mu-decoupling} at step $T.$ Since the proofs of these two types of inductive cases are completely analogous, we will carry out only the former explicitly. 

\subsubsection{Base Case}
For $T=0,$ equations \eqref{eq:nu-marginal-covergence} and \eqref{eq:nu-decoupling} are immediate since $(v_j^0,\theta_j^0)$ is equal in distribution to $v^0,\theta$ under $\nu_0$ and $(v_j^0,\theta_j^0), (v_{j'}^0,\theta_{j'}^0)$ for $j\neq j'$ are independent.

To prove equations \eqref{eq:mu-marginal-covergence} and \eqref{eq:mu-decoupling}, we must show that for every $C^\infty_{1b}$ function $\psi:\R\to\R,$ and for all $i\neq i'\in [N]$, we have 
    \begin{align}
         |\E\psi(b^*_i) - \E \psi(b^*)|\to 0,\label{eq:main-thm-base-case-1} \\
        \max_{i\neq i' \in [N]} |\E\psi(b^*_i)\psi(b^*_{i'}) - \E\psi(b^*_i)\E\psi(b^*_{i'}) | \to 0,\label{eq:main-thm-base-case-2}
    \end{align}
    where $b\sim N(0,\alpha \E_{\nu_0}[\theta^2])$. Now recall that we have
    \begin{align*}
        b^*_i &= \sum_{j=1}^d X_{ij} \theta_j 
    \end{align*}
    where the $X_{ij}$ are independent and independent of the $\theta_j.$ Hence \eqref{eq:main-thm-base-case-1} follows from the classical Berry–Esseen Theorem. The case of \eqref{eq:main-thm-base-case-2} is similar, but we carry it out explicitly for completeness. Let $\{\theta'_j\}_{j=1}^N$ be i.i.d. samples equal in law to $\theta_1$, but independent from $\bftheta.$ For $0\leq k\leq d$, define
    \begin{align*}
        S_{i'}^k &= \sum_{j=1}^k X_{i'j} \theta_j + \sum_{j=k+1}^d X_{i'j} \theta_j' \\
        S_{i'}^{-k} &= \sum_{j=1}^{k-1} X_{i'j} \theta_j + \sum_{j=k+1}^d X_{i'j} \theta_j' 
    \end{align*}
    so that $S^d_{i'} = b^*_{i'}.$ Note that \eqref{eq:main-thm-base-case-2} can be equivalently written as 
    \begin{align*}
        \max_{i\neq i' \in [N]} |\E\psi(b^*_i)\psi(S_{i'}^d) - \E\psi(b^*_i)\psi(S_{i'}^0) | \to 0.
    \end{align*}
    Now we have 
    \begin{align*}
        &|\E\psi(b^*_i)\psi(S_{i'}^d) - \E\psi(b^*_i)\psi(S_{i'}^0)| \leq \sum_{k=1}^d |\E\psi(b^*_i)(\psi(S_{i'}^k) - \psi(S_{i'}^{k-1})) | \\
        &\leq \sum_{k=1}^d |\E\psi(b^*_i)\psi'(S_{i'}^{-k})(X_{i'j}\theta_j - X_{i'j}\theta_j')| + \frac{1}{2} |\E\psi(b^*_i)\psi''(S_{i'}^{-k})((X_{i'j}\theta_j)^2 - (X_{i'j}\theta_j')^2)| \\
        &\qquad \qquad \qquad \qquad \qquad +\frac{\sup_x \psi'''(x)}{6}(\E|\psi(b^*_i)(X_{i'j}\theta_j)^3| + \E|\psi(b^*_i)(X_{i'j}\theta_j')^3|).
    \end{align*}
    The third term inside the outer sum is clearly $ O(N^{-3/2})$, so it suffices to show that the first two are as well. Note that the first term is equal to zero because neither $b^*_i$ nor $S^{-k}_{i'}$ depend on $X_{i'j}.$ Finally, for the second term we have
    \begin{align*}
        &\E\psi(b^*_i)\psi''(S_{i'}^{-k})((X_{i'j}\theta_j)^2 - (X_{i'j}\theta_j')^2) = \frac{1}{N}\left(\E\psi(b^*_i)\psi''(S_{i'}^{-k})\theta_j^2 - \E\psi(b^*_i)\psi''(S_{i'}^{-k})\E\theta_j^2\right) \\
         &\qquad \qquad\lesssim \frac{1}{N}\left(\E\psi(b^*_i-X_{ij}\theta_j)\psi''(S_{i'}^{-k})\theta_j^2 + \E|X_{ij}\theta_j\psi''(S_{i'}^{-k})\theta_j^2| - \E\psi(b^*_i)\psi''(S_{i'}^{-k})\E\theta_j^2\right) \\
        &\qquad \qquad\lesssim \frac{1}{N}\left(\E\psi(b^*_i-X_{ij}\theta_j)\psi''(S_{i'}^{-k})\E\theta_j^2 - \E\psi(b^*_i)\psi''(S_{i'}^{-k})\E\theta_j^2 + \frac{1}{\sqrt{N}}\right) \\
        &\qquad \qquad\lesssim N^{-3/2}
    \end{align*}
    where in the second and last steps we have used the Lipschitzness of $\psi$, and in the third step we have used that $b^*_i - X_{ij}\theta_j$ is independent of $\theta_j$. This concludes the base case.

\subsubsection{Inductive Case}
Here we do the inductive case. As explained in \Cref{sec:induction-outline}, we only prove equations \eqref{eq:nu-marginal-covergence}, \eqref{eq:nu-decoupling} at time $T$. For the rest of \Cref{sec:main-proof}, we assume all previous links in the chain in \Cref{sec:induction-outline} hold.

Our proof of \eqref{eq:nu-marginal-covergence}, \eqref{eq:nu-decoupling} at time $T$ will be split into two lemmas. To state the lemmas, we use the notation
    \begin{align*}
        g_j^t &= \sum_{i=1}^N X_{ij}u_i^t(X^{ij})
    \end{align*}
for $t=1,\dots,T,$ where recall that $X^{ij}$ is equal to $X$ on all entries except for $X^{ij}_{ij}=0.$
\begin{lemma}\label{lem:inductive-case-lemma1}
    There exists a $C^\infty_{1b}$ function $\tilde{\phi}$ such that, for every $j\in [d]$,
    \begin{align}
        \lim \norm{\tilde{\phi}( \{g_j^t\}_{t=1}^T;v^0_j,\theta_j,\{\tilde{w}_j^t\}_{t=1}^T) - \phi( \{h_j^t\}_{t=1}^T;v^0_j,\theta_j,\{\tilde{w}_j^t\}_{t=1}^T) }_2 &= 0.\label{eq:gjt-mu-t-close}
    \end{align}
    Moreover, we have 
    \begin{align}
        \E \tilde{\phi}( \{g^t\}_{t=1}^T;v^0,\theta,\{\tilde{w}^t\}_{t=1}^T) &=\E_{\overline{\nu}_T} \phi( \{h^t\}_{t=1}^T;v^0,\theta,\{\tilde{w}^t\}_{t=1}^T),\label{eq:gt-mu-t-equal}
    \end{align}
    where on the right-hand side, $\{g^t\}_{t=1}^T$ is a Gaussian vector with covariance given by \eqref{eq:mu-T-covariance}, which is independent of $(v^0,\theta)\sim \nu_0$ and $\{\tilde{w}^t\}_{t=1}^T\sim N(0, I_T).$
\end{lemma}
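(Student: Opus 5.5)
The plan is to build $\tilde{\phi}$ directly from the DMFT recursion \eqref{eq:nu-T-recursion}. Write $c^t_* := \E_{\mu_T}[\partial_{b^*}u^t]$ and $c^t_s := \E_{\mu_T}[\partial_{b^s}u^t]$. Given $(g^1,\dots,g^T;v^0,\theta,\tilde{w})$, define recursively for $t=1,\dots,T$
\[
H^t := g^t + c^t_*\theta + \sum_{s=1}^t c^t_s V^{s-1}, \qquad V^t := G_t(H^1,\dots,H^t;v^0,\tilde{w}^1,\dots,\tilde{w}^t),
\]
with $V^0=v^0$, and set $\tilde{\phi}(\{g^t\};v^0,\theta,\{\tilde{w}^t\}) := \phi(\{H^t\};v^0,\theta,\{\tilde{w}^t\})$.

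The map $(g,v^0,\theta,\tilde{w})\mapsto (H,v^0,\theta,\tilde{w})$ is a finite composition of affine maps and $C^\infty_{1b}$ maps $G_t$. Composing it with $\phi$ therefore gives a $C^\infty_{1b}$ function. The identity \eqref{eq:gt-mu-t-equal} then holds by construction: when $g$ is Gaussian with covariance \eqref{eq:mu-T-covariance} and independent of $(v^0,\theta)\sim\nu_0$ and $\tilde{w}$, the vector $(H,V)$ is exactly the effective process defining $\overline{\nu}_T$.

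For \eqref{eq:gjt-mu-t-close}, since $\phi$ and all $G_t$ are Lipschitz, it suffices to show by induction on $t$ that
\[
\norm{h_j^t - g_j^t - c^t_*\theta_j - \sum_{s\le t} c^t_s v_j^{s-1}}_2 \to 0.
\]
Indeed, the errors then propagate through the recursion for $V$ with constant blow-up. To get this bound I will Taylor-expand $h_j^t=\sum_i X_{ij}u_i^t(X)$ in $X_{ij}$ around $X^{ij}$, exactly as in \eqref{eq:b-i-t-expansion} with the roles of $i$ and $j$ swapped. This produces three terms.

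The zeroth-order term is $g_j^t$.

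The second-order remainder is $\sum_i X_{ij}^3\int(1-\eta)\,\partial^2_{X_{ij}}u_i^t(X^{ij}(\eta))\,d\eta$. By \Cref{thm:derivative-bounds}(1), applied under \eqref{assump:1-weak} to the rescaled matrix, each summand has $L^2$ norm $O(N^{-3/2})$, so the whole remainder is $O(N^{-1/2})$.

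For the first-order term, I apply the chain rule to $\partial_{X_{ij}}u_i^t$. Among $b^*_i$, the $b^s_i$ and the $h^s_j$, the entry $X_{ij}$ enters explicitly with coefficients $\theta_j$, $v_j^{s-1}$ and $u_i^s$ respectively. This gives
\[
\sum_i X_{ij}^2\Big(\theta_j\,\partial_{b_i^*}u_i^t + \sum_s v_j^{s-1}\partial_{b^s_i}u_i^t + \sum_s u_i^s\,\partial_{h^s_j}u_i^t\Big)(X^{ij}).
\]
In the last group, $i$ is not all-reaching in $\{h^s_j\}$. By \Cref{thm:derivative-bounds}(2) each summand is then $O(N^{-1}\cdot N^{-1/2})$ in $L^2$, so this group vanishes. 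Finally, $v_j^{s-1}(X^{ij})$ and $\theta_j$ can be factored out of the sum over $i$ after replacing $v_j^{s-1}(X)$ by $v_j^{s-1}(X^{ij})$; this replacement costs $O(N^{-1/2})$ by a single derivative bound.

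What remains, and this is the main obstacle, is to prove the concentration
\[
A^t_* := \sum_i X_{ij}^2\,\partial_{b_i^*}u_i^t(X^{ij}) \to c^t_* \quad\text{in } L^4,
\]
and likewise $A^t_s\to c^t_s$. Then Hölder with the $O(1)$ moments of $\theta_j$ and $v_j^{s-1}$ finishes the argument. I would split this into three parts.

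\textbf{Replacing $X_{ij}^2$ by its mean.} Replace $X_{ij}^2$ by $1/N$. The fluctuation $\sum_i (X_{ij}^2-1/N)\,\partial u_i^t(X^{ij})$ has second moment $O(1/N)$, because each $\partial u_i^t(X^{ij})$ does not depend on the $i$-th summand's $X_{ij}$ and the centered weights are independent across $i$.

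\textbf{Vanishing variance.} Show that $\frac1N\sum_i \partial_{b_i^*}u_i^t$ has vanishing variance. Expanding the square, the cross terms are covariances of $\partial_{b_i^*}u_i^t$ and $\partial_{b_{i'}^*}u_{i'}^t$ for $i\neq i'$. I would control these with the decoupling statement \eqref{eq:mu-decoupling} at step $T$, which precedes the present step in the induction chain of \Cref{sec:induction-outline}.

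\textbf{Identifying the mean.} Show that $\E\,\partial_{b_i^*}u_i^t\to c^t_*$. This is the delicate part. At finite $N$, $\partial_{b_i^s}u_i^t$ is computed by a chain rule mirroring \Cref{rem:meaning-of-field-derivatives-in-dmft}, but with the coefficients $\alpha\E_{\nu_{T-1}}[\partial_{h^r}v^{k-1}]$ replaced by the empirical sums $\sum_{j}X_{ij}^2\,\partial_{h_j^r}v_j^{k-1}$. These are the Onsager coefficients from the previous induction stage (the $\bfb$-side analogue at time $T$), which concentrate by the same argument. Once they are replaced by their deterministic limits, $\partial_{b_i^*}u_i^t$ becomes a fixed $C^\infty_{1b}$-type function of $(b_i^*,b_i^1,\dots,b_i^t;w_i)$, built from derivatives of the $F_k$. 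Its expectation then converges to $c^t_*$ by \eqref{eq:mu-marginal-covergence} at step $T$. A concern is that derivatives of $F_k$ are bounded and smooth but not themselves $C^\infty_{1b}$ with growth. I expect this to be harmless: they are bounded smooth functions, and convergence for such test functions follows from the same inductive statement by approximation.
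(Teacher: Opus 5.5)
Your construction of $\tilde\phi$, the verification of \eqref{eq:gt-mu-t-equal}, and your treatment of the Taylor expansion of $h_j^t$ match the paper. That includes the second-order remainder and the terms $u_i^s\,\partial_{h_j^s}u_i^t$. Comparing $h_j^t$ directly with $g_j^t+c_*^t\theta_j+\sum_s c_s^t v_j^{s-1}$ is a mild streamlining of the paper's tilde/bar systems.

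The gap is in the step you flag as delicate. You assert that at finite $N$ the response $\partial_{b_i^s}u_i^t$ is computed by the recursion of \Cref{rem:meaning-of-field-derivatives-in-dmft}, with empirical coefficients $\sum_j X_{ij}^2\partial_{h_j^r}v_j^{k-1}$. This is not an identity. We have $\partial_{b_i^s}b_i^k=\sum_j X_{ij}\,\partial_{b_i^s}v_j^{k-1}$, and a perturbation of $b_i^s$ reaches $v_j^{k-1}$ by two routes: the explicit term $X_{ij}u_i^r$ in $h_j^r$, and every other row through $\sum_{i'\neq i}X_{i'j}\,\partial_{b_i^s}u_{i'}^r$.

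After expanding in $X_{ij}$ you are left with the zeroth-order sum $\sum_j X_{ij}\,\partial_{b_i^s}v_j^{k-1}(X^{ij})$. By part (2) of \Cref{thm:derivative-bounds}, since $j$ is not all-reaching in $\{b_i^s\}$, each summand is only $O(N^{-1/2})$ before multiplying by $X_{ij}$. There are $d$ such terms, so the triangle inequality gives only $O(1)$. Showing the sum is $o(1)$ needs the cancellation argument: the $\calA_q$ high-moment expansion from the proof of \Cref{thm:derivative-bounds}, plus the fact that these factors stay $O(N^{-1/2})$ after further $\partial_{X_{ij'}}$ derivatives.

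In the paper this is the content of the first claim in the proof (the tilde system). It is proved by an induction in $t$ that tracks the fields and their field-derivatives jointly. The key estimate there is $\norm{\partial_{h_j^s}g_j^t}_p\lesssim N^{-1/2}$. Your proposal contains no substitute for this step.

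The missing reduction also affects your variance step, not only the mean. The hypotheses \eqref{eq:mu-marginal-covergence} and \eqref{eq:mu-decoupling} concern only functions of $(b_i^*,\{b_i^t\},\{w_i^t\})$. So neither can be applied to $\partial_{b_i^*}u_i^t$ until you have shown it is $L^2$-close to such a function.

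Similarly, your claim that the empirical Onsager coefficients ``concentrate by the same argument'' presupposes the analogous reduction for $\partial_{h_j^r}v_j^{k-1}$ at earlier times. That reduction is not among the inductive hypotheses of the field-level induction. The derivative approximations have to be carried along in their own induction, as the paper does.

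Two smaller points:
\begin{enumerate}
\item In the step replacing $X_{ij}^2$ by $1/N$, the off-diagonal terms do not vanish by independence alone, because $\partial u_{i'}^t(X^{i'j})$ depends on $X_{ij}$. One further expansion in $X_{ij}$ and $X_{i'j}$, together with part (1) of \Cref{thm:derivative-bounds}, repairs this.
\item Your worry about test-function regularity is unnecessary. Derivatives of $C^\infty_{1b}$ functions are bounded with bounded derivatives, and hence are themselves $C^\infty_{1b}$.
\end{enumerate}
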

\begin{lemma}\label{lem:inductive-case-lemma2}
    Let $\tilde{\psi}$, and $\{g^t\}_{t=1}^T$ be as in \Cref{lem:inductive-case-lemma1}.
    For all $j\in [d]$, we have 
        \begin{align}
            \lim |\E \tilde{\phi}( \{g_j^t\}_{t=1}^T;v^0_j,\theta_j,\{\tilde{w}_j^t\}_{t=1}^T) - \E \tilde{\phi}( \{g^t\}_{t=1}^T;v^0,\theta,\{\tilde{w}^t\}_{t=1}^T)| &= 0 \label{eq:gt-marginal-convergence} 
        \end{align}
        and for all $j\neq j'\in [d]$, we have
        \begin{align}
            \lim  \left|\Cov\left(\tilde{\phi}( \{g_j^t\}_{t=1}^T;v^0_j,\theta_j,\{\tilde{w}_j^t\}_{t=1}^T) ,\; \tilde{\phi}( \{g_{j'}^t\}_{t=1}^T;v^0_{j'},\theta_{j'},\{\tilde{w}_{j'}^t\}_{t=1}^T) \right)\right| &= 0. \label{eq:gt-decoupling}
        \end{align}
\end{lemma}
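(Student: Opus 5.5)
The plan is a Lindeberg replacement of the disorder column $\{X_{ij}\}_{i=1}^N$ entering $g_j^t$ by an independent Gaussian column, followed by a conditional Gaussian computation whose covariance we control using the inductive hypothesis at the $\mu_T$ step.

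\textbf{Lindeberg interpolation.} Fix $j$ and let $\{\tilde X_{ij}\}_{i=1}^N$ be i.i.d.\ $N(0,1/N)$, independent of everything. For $0\le k\le N$ define the hybrid vector
\[
g_j^{t,(k)}=\sum_{i\le k}\tilde X_{ij}\,u_i^t(X^{ij})+\sum_{i>k}X_{ij}\,u_i^t(X^{ij}),\qquad t=1,\dots,T,
\]
so that $g_j^{t,(0)}=g_j^t$. We will bound each swap $k-1\to k$ by $o(1/N)$ in the expectation of $\tilde\phi(\cdot\,;v^0_j,\theta_j,\{\tilde w_j^t\})$. The subtlety is that $X_{kj}$ appears in two ways:
\begin{itemize}
\item explicitly, as the multiplier of $u_k^t(X^{kj})$, which does not depend on $X_{kj}$;
\item implicitly, through $u_i^t(X^{ij})$ for $i\neq k$, and through nothing else on the $j$ side, since $v^0_j,\theta_j,\tilde w_j$ do not depend on $X$.
\end{itemize}
We therefore Taylor-expand every quantity around $X_{kj}=0$, i.e.\ we evaluate all $u_i^t(X^{ij})$ at $X^{ij}$ with additionally $X_{kj}=0$. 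By \Cref{thm:derivative-bounds}(2), $i$ is not all-reaching in $\{X_{kj}\}$ when $i\neq k$, so $\norm{\partial_{X_{kj}}u_i^t}_p\lesssim N^{-1/2}$, and higher derivatives are $O(1)$. After this expansion, the "base point" is independent of both $X_{kj}$ and $\tilde X_{kj}$.

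\textbf{Bounding a single swap.} The zeroth-order term cancels between the two sides. The first-order terms vanish because $X_{kj}$ and $\tilde X_{kj}$ have mean zero and are independent of the base point. The second-order terms involving only the explicit term $X_{kj}^2 u_k^2$ match because the variances agree. What remains are mixed terms of the form
\[
X_{kj}^2\,\partial_{X_{kj}}\Big(\sum_{i\ne k}Y_{ij}\,u_i^t\Big)\,u_k^s,
\]
where $Y_{ij}$ is $X_{ij}$ or $\tilde X_{ij}$, together with third-order remainders.
\begin{itemize}
\item For the $\tilde X$ part of the inner sum, we use that $\tilde X_{ij}$ is independent and centered. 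Applying a further Gaussian integration by parts, its contribution is $O(N^{-1}\cdot N^{-1/2})$.
\item For the $X$ part, with $i>k$, we repeat the moment computation used in the inductive case for $\bfb^t$ in the proof of \Cref{thm:derivative-bounds}. This is the $\calA_q$ expansion around the uniquely appearing indices, which shows that $\sum_i X_{ij}\partial_{X_{kj}}u_i^t(X^{ij})$ has all moments $O(N^{-1/2})$.
\end{itemize}
Combined with the prefactor $X_{kj}^2=O(1/N)$, each swap costs $O(N^{-3/2})$, and summing over $k$ gives $o(1)$. I expect this bookkeeping of implicit dependence to be the main obstacle: it requires the sharp bounds of part (2), and checking that the mixed sums really gain the extra $N^{-1/2}$.

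\textbf{Conditional Gaussian computation.} After all swaps, condition on $X,\bftheta,\bfv^0,\bfw,\tilde\bfw$. The vector $g_j^{t,(N)}=\sum_i\tilde X_{ij}u_i^t(X^{ij})$ is then exactly Gaussian with covariance
\[
\Sigma_{st}=\frac1N\sum_i u_i^s(X^{ij})\,u_i^t(X^{ij}),
\]
and it is independent of $(v^0_j,\theta_j,\tilde w_j)$. Replacing $X^{ij}$ by $X$ costs $O(N^{-1/2})$ in $L^2$ by \Cref{thm:derivative-bounds}(1), since $|u_i(X)-u_i(X^{ij})|\le|X_{ij}|\sup|\partial_{X_{ij}}u_i|$. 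The inductive hypothesis at step $T$ for $\mu$, namely \eqref{eq:mu-marginal-covergence} and \eqref{eq:mu-decoupling} applied to the $C^\infty_{1b}$-approximable products $u^su^t$ (using moment bounds to truncate), gives $\Sigma\to\E_{\mu_T}[u^su^t]$ in $L^2$. This is the covariance in \eqref{eq:mu-T-covariance}. Since the Gaussian expectation of $\tilde\phi$ is Lipschitz in $\Sigma^{1/2}$, \eqref{eq:gt-marginal-convergence} follows.

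\textbf{Decoupling.} For \eqref{eq:gt-decoupling}, we perform the same swaps simultaneously on columns $j$ and $j'$ with independent Gaussian copies. The only new terms couple $X_{kj}$ and $X_{kj'}$ through implicit dependence, and they are bounded in the same way. After the swaps, conditionally on the remaining variables, the two Gaussian vectors are jointly Gaussian with cross-covariance $\frac1N\sum_i\tilde{\E}[\tilde X_{ij}\tilde X_{ij'}]\cdots=0$, so they are conditionally independent. Their conditional covariances concentrate to deterministic limits, and $(v^0_j,\theta_j,\tilde w_j)$ and $(v^0_{j'},\theta_{j'},\tilde w_{j'})$ are independent. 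Hence the covariance of the two $\tilde\phi$ values tends to zero.
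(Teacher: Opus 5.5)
Your argument is correct. It shares the paper's skeleton: row-by-row Lindeberg swaps in column $j$ (simultaneously in $j'$ for decoupling), expansion around $X_{kj}=0$, and the key estimate that $\sum_{i\neq k}Y_{ij}\,\partial_{X_{kj}}u_i^t$ has moments $O(N^{-1/2})$. That estimate is exactly the paper's auxiliary bound $\norm{\partial_{X_{kj'}}S_j^{t,-k}}_2\lesssim 1/\sqrt{N}$, proved by the same $\calA_q$-type expansion.

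Where you differ is in what gets Gaussianized. The paper replaces the whole product $X_{ij}u^t_{i}(X^{ij})$ by a Gaussian $\tilde X^t_{ij}$ with deterministic covariance $\frac1N\E u_i^s u_i^t$. The endpoint is then an honest Gaussian independent of everything, and identifying its covariance needs only \eqref{eq:mu-marginal-covergence}. The cost appears in the second-order matching. There one must compare $\E[\partial_s\partial_t\Psi^{-k}\,u_k^s u_k^t]$ with $\E[\partial_s\partial_t\Psi^{-k}]\,\E[u_k^s u_k^t]$, i.e.\ $u_k$ must be asymptotically uncorrelated with the interpolated column sums. The paper's display makes this factorization without a separate argument.

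You replace only the disorder entry, by $\tilde X_{ij}\sim N(0,1/N)$ shared across $t$. So the explicit second-order terms are identical on both sides (both equal $\frac{1}{2N}\E[u_k^\top H u_k]$ with $H$ evaluated at the base point), and the cross-time covariance structure comes for free. The price is that your endpoint is only conditionally Gaussian with random covariance $\Sigma$. You therefore additionally need:
\begin{itemize}
\item a law of large numbers for $\frac1N\sum_i u_i^s u_i^t$, which uses \eqref{eq:mu-decoupling} on top of \eqref{eq:mu-marginal-covergence}, after truncating $u^s u^t$ to a $C^\infty_{1b}$ test function via the moment bounds;
\item the H\"older continuity of $\Sigma\mapsto\Sigma^{1/2}$ on positive semidefinite matrices.
\end{itemize}

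One small correction to your decoupling paragraph. In the two-column swap, $u_k^t(X^{kj})$ depends on $X_{kj'}$ explicitly, with $O(1)$ derivative through $b_k^s$, not merely implicitly. This is harmless: that dependence enters with coefficient $X_{kj}X_{kj'}$ (or $\tilde X_{kj}X_{kj'}$), which has mean zero given the base point. So it only contributes at third order, i.e.\ $O(N^{-3/2})$ per swap.
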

Note that \eqref{eq:nu-marginal-covergence}, \eqref{eq:nu-decoupling} at time $T$ follow from them. Indeed, \eqref{eq:gjt-mu-t-close} and \eqref{eq:gt-mu-t-equal} show that \eqref{eq:gt-marginal-convergence} and \eqref{eq:nu-marginal-covergence} are equivalent. Similarly, \eqref{eq:gjt-mu-t-close} shows that \eqref{eq:gt-decoupling} and \eqref{eq:nu-decoupling} are equivalent.

To conclude the proof of \Cref{thm:main-strong}, it remains to prove \Cref{lem:inductive-case-lemma1} and \Cref{lem:inductive-case-lemma2}. We do this in the next two subsections.

\subsubsection{Proof of \Cref{lem:inductive-case-lemma1}}
    Using the same expansion as in \eqref{eq:b-i-t-expansion}, we have 
    \begin{align*}
        h_j^t &= \sum_{i=1}^N X_{ij}u_i^{t}(X)\nonumber \\
    &= \sum_{i=1}^N X_{ij}u_i^{t}(X^{ij}) +  \sum_{i=1}^N X_{ij}^2\partial_{X_{ij}}u_i^{t}(X^{ij}) +\frac{1}{2} \int_0^1 d\eta(1-\eta) \sum_{i=1}^N X_{ij}^3 \partial^2_{X_{ij}}u_i^{t}(X^{ij}(\eta)).
\end{align*}
Now by chain rule, we have 
\begin{align*}
    \partial_{X_{ij}}u_i^{t}  (X^{ij}) &= \theta_j\partial_{b_i^*}u_i^{t}  (X^{ij}) + \sum_{s=1}^t v^{s-1}_j(X^{ij})\partial_{b_i^s}u_i^{t}  (X^{ij})  + \sum_{s=1}^{t-1} u^{s}_i(X^{ij})\partial_{h_j^s}u_i^{t}  (X^{ij})
\end{align*}
so we can write 
\begin{align}
    h_j^t &= g_j^t + \theta_j\sum_{i=1}^N X_{ij}^2\partial_{b_i^*}u_i^{t}  (X^{ij}) + \sum_{s=1}^t v^{s-1}_j(X^{ij})\sum_{i=1}^N X_{ij}^2\partial_{b_i^s}u_i^{t}  (X^{ij}) + \Delta^t_j\label{eq:h-t-j-expansion} \\
    \Delta^t_j &:= \sum_{s=1}^{t-1} u^{s}_i(X^{ij})\sum_{i=1}^N X_{ij}^2\partial_{h_j^s}u_i^{t}  (X^{ij}) + \frac{1}{2} \int_0^1 d\eta(1-\eta) \sum_{i=1}^N X_{ij}^3 \partial^2_{X_{ij}}u_i^{t}(X^{ij}(\eta)).\nonumber
\end{align}

The proof of \Cref{lem:inductive-case-lemma1} will be split in two parts, which will gradually simplify \eqref{eq:h-t-j-expansion}. In the first part, we show that the partial derivatives $\partial_{b_i^*}u_i^{t}  (X^{ij})$ and $\partial_{b_i^s}u_i^{t}  (X^{ij})$ can be approximated by operators that are defined in the same way that $\partial_{b^*}$ and $\partial_{b^s}$ are defined in \Cref{rem:meaning-of-field-derivatives-in-dmft}. This amounts to showing that $\partial_{b_i^*}u_i^{t}  (X^{ij}), \partial_{b_i^s}u_i^{t}  (X^{ij})$ can be approximated by considering only the \emph{explicit} influence of the fields $b_i^*, b_i^s$ on $u_i^t$, since the implicit influences through other dynamical variables are lower-order. In the second part, we show that the terms $\sum_{i=1}^N X_{ij}^2\partial_{b_i^*}u_i^{t}  (X^{ij})$ and $\sum_{i=1}^N X_{ij}^2\partial_{b_i^s}u_i^{t}  (X^{ij})$ concentrate. 

For the rest of the section, to avoid clutter, we use the notation $u_{i/ij}^t:= u_{i}^t(X^{ij}),$ and similarly e.g. $\partial_{b^*_i}u_{i/ij}^t:= \partial_{b^*_i}u_{i}^t(X^{ij})$ and $v_{j/ij}^{s-1}:= v_{j}^{s-1}(X^{ij}).$

\paragraph{Simplification of the partial derivatives.} We will define a new set of variables which approximate those of \Cref{defn:GFOM}, and are denoted by a tilde. We set $\tilde{\bftheta}=\bftheta,\tilde{\bfv}^0=\bfv^0,\tilde{\bfb}^* = \bfb^*.$ Moreover, for $t=1,\dots,T,$ we let
\[
z_j^t := \sum_{j=1}^d X_{ij}v_{j/ij}^{t-1} 
\]
and 
\begin{align*}
    \begin{cases}
        \tilde{b}_i^t = z_j^t + \sum_{s=1}^{t-1} \tilde{u}_i^s \sum_{j=1}^d X_{ij}^2 \tilde{\partial}_{\tilde{h}^s_j} \tilde{v}_j^{t-1} &\forall i\in [N]\\
        \tilde{u}_i^t = F_t(\tilde{b}_i^*, \tilde{b}_i^1,\dots,\tilde{b}_i^t;w^1_i,\dots,w^t_i) &\forall i \in [N] \\
        \tilde{h}_j^t = g_j^t + \tilde{\theta}_j \sum_{i=1}^N X_{ij}^2 \tilde{\partial}_{\tilde{b}_i^*} \tilde{u}_i^t + \sum_{s=1}^t \tilde{v}_j^{s-1}\sum_{i=1}^N X_{ij}^2 \tilde{\partial}_{\tilde{b}_i^s} \tilde{u}_i^t & \forall j\in [d] \\
        \tilde{v}_j^t = G_t(\tilde{h}_j^1,\dots,\tilde{h}_j^t;\tilde{v}_j^0,\tilde{w}^1_j,\dots,\tilde{w}^t_j) & \forall j\in [d].
    \end{cases}
\end{align*}
Above, the action of the operators $\tilde{\partial}$ are defined exactly the same as in \Cref{rem:meaning-of-field-derivatives-in-dmft}. Explicitly, we can define 
\begin{align*}
    \tilde{\partial}_{\tilde{b}_i^s}\tilde{b}_i^t &= \begin{cases}
            0 &\text{if }s>t \\ 
            1 \qquad &\text{if }s=t \\
            \sum_{r=1}^{t-1}\tilde{\partial}_{\tilde{b}_i^s} \tilde{u}_i^{r}\sum_{j=1}^d X_{ij}^2 \tilde{\partial}_{\tilde{h}_j^r} \tilde{v}_{j}^{t-1} & \text{if }s<t,
        \end{cases} \\
        \tilde{\partial}_{\tilde{b}_i^*}\tilde{b}_i^t &= \sum_{r=1}^{t-1}\tilde{\partial}_{\tilde{b}_i^*} \tilde{u}_i^{r}\sum_{j=1}^d X_{ij}^2 \tilde{\partial}_{\tilde{h}_j^r} \tilde{v}_{j}^{t-1},
\end{align*}
with $\tilde{\partial}_{\tilde{b}_i^*}\tilde{b}_i^* = 1, \tilde{\partial}_{\tilde{b}_i^s}\tilde{b}_i^* = 0$, and $\tilde{\partial}_{\tilde{b}_i^s} \tilde{u}_i^r,\tilde{\partial}_{\tilde{b}_i^*} \tilde{u}_i^r$ defined inductively by applying chain rule as in \Cref{rem:meaning-of-field-derivatives-in-dmft}. The operator $\tilde{\partial}_{\tilde{h}_j^s}$ is also defined analogously. Our first simplification is that we can replace ${b}_i^t,{h}_j^t$  by $\tilde{b}_i^t,\tilde{h}_j^t$, respectively.
\begin{claim}\label{claim:recursive-claim-1}
    For every $t,p\geq 1,i\in [N]$ and $j\in [d],$ we have 
    \begin{align*}
        \lim \norm{b_i^t - \tilde{b}_i^t} &= 0 \\
        \lim \norm{h_j^t - \tilde{h}_j^t} &= 0.
    \end{align*}
\end{claim}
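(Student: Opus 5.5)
We will prove the claim by induction along the chain \eqref{eq:induction-chain}, i.e. $\bfb^1\to\bfu^1\to\bfh^1\to\bfv^1\to\bfb^2\to\cdots$. Since $\tilde{\bfb}^*=\bfb^*$, $\tilde{\bfv}^0=\bfv^0$ and $\tilde{\bftheta}=\bftheta$, the base case is trivial. We prove the stronger statement that every difference $b_i^t-\tilde b_i^t$, $u_i^t-\tilde u_i^t$, $h_j^t-\tilde h_j^t$, $v_j^t-\tilde v_j^t$ is $o(1)$ in every $L^p$ norm. The same should hold for the corresponding derivative quantities $\partial_{b_i^s}u_i^t$ versus $\tilde\partial_{\tilde b_i^s}\tilde u_i^t$, and $\partial_{h_j^s}v_j^t$ versus $\tilde\partial_{\tilde h_j^s}\tilde v_j^t$. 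Working in all $L^p$ norms lets Hölder's inequality absorb the products that appear. The steps $\bfb\to\bfu$ and $\bfh\to\bfv$ are immediate: $F_t$ and $G_t$ are Lipschitz, so $\norm{u_i^t-\tilde u_i^t}_p\lesssim\sum_{s\le t}\norm{b_i^s-\tilde b_i^s}_p$, and similarly for $v$. As in the paper, we only treat the field step for $\bfh^t$; the step for $\bfb^t$ is symmetric.

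For the $\bfh^t$ step, we start from the exact expansion \eqref{eq:h-t-j-expansion} and subtract the definition of $\tilde h_j^t$. The Gaussian parts $g_j^t$ agree exactly, so three kinds of terms remain.

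First, the remainder $\Delta_j^t$. Its cubic Taylor term is a sum of $N$ terms of size $X_{ij}^3=O(N^{-3/2})$ times $\norm{\partial^2_{X_{ij}}u_i^t}_{p'}=O(1)$ by part (1) of \Cref{thm:derivative-bounds}; we apply that theorem under \eqref{assump:1-weak}, which covers the interpolated matrix. This gives $O(N^{-1/2})$. The terms $u_i^s\partial_{h_j^s}u_i^t$ are $O(N^{-1/2})$ by part (2): $i$ is not all-reaching in $\{h_j^s\}$, since $i\notin V(\{h_j^s\})$. So $\sum_i X_{ij}^2(\cdot)$ is $O(N^{-1/2})$ as well.

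Second, the coefficient mismatch. Here we compare $\sum_i X_{ij}^2\,\partial_{b_i^s}u_{i/ij}^t$ with $\sum_i X_{ij}^2\,\tilde\partial_{\tilde b_i^s}\tilde u_i^t$, and the analogue for $b^*$. Expanding $\partial_{b_i^s}u_i^t$ by the chain rule through $b_i^1,\dots,b_i^t$, the derivative $\partial_{b_i^s}b_i^k$ equals $\1\{s=k\}+\sum_j X_{ij}\partial_{b_i^s}v_j^{k-1}$. We then perform the same one-coordinate Taylor expansion in $X_{ij}$ used in \eqref{eq:b-i-t-expansion}. Its zeroth-order term $\sum_j X_{ij}\partial_{b_i^s}v^{k-1}_{j/ij}$ has vanishing moments: by the moment expansion \eqref{eq:h-pth-power}, it is a centered sum of $O(N^{-1/2})$-sized terms, because $j$ is not all-reaching in $\{b_i^s\}$. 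The first-order term is $\sum_j X_{ij}^2\,\partial_{h_j^r}v_j^{k-1}\,\partial_{b_i^s}u_i^r$ plus negligible pieces, and these are exactly the recursion defining $\tilde\partial$. So, inductively in $k$ and $r$, $\partial_{b_i^s}u_i^t-\tilde\partial_{\tilde b_i^s}\tilde u_i^t\to0$ in $L^p$. The values at $X^{ij}$ and at $X$ differ by $O(N^{-1/2})$, by one more derivative bound. Summing against $X_{ij}^2$ over $i$ then preserves the $o(1)$ bound, because $\sum_iX_{ij}^2=O(1)$ in every $L^p$.

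Third, the prefactors $v_{j/ij}^{s-1}$ versus $\tilde v_j^{s-1}$. These are handled by the inductive hypothesis together with an $O(N^{-1/2})$ correction for the $X^{ij}$ replacement.

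We expect the main obstacle to be the second step, namely showing that the implicit influence $\sum_jX_{ij}\partial_{b_i^s}v^{k-1}_{j/ij}$ is negligible. This requires redoing the $p$-th moment and unique-index counting argument from the proof of \Cref{claim:in-h-induction}, now applied to derivative variables. We would carry this out by citing that argument verbatim, since the multi-set $\{b_i^s\}$ has the needed non-all-reaching property for every $j$.
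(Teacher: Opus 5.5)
Your proposal is correct and follows essentially the same route as the paper. Both run an induction along the chain with the hypothesis strengthened to the field-derivative comparisons, kill $\Delta_j^t$ with \Cref{thm:derivative-bounds}, and match coefficients via the inductive hypothesis plus an $O(N^{-1/2})$ cost for passing between $X^{ij}$ and $X$. Both also dispose of the zeroth-order cavity term in the derivative comparison by the unique-index moment expansion. The only difference is presentational: you do the derivative step on the $\bfb$-side by differentiating first and then Taylor expanding in $X_{ij}$, while the paper differentiates the expanded identity \eqref{eq:h-t-j-expansion} on the $\bfh$-side. One correction: \Cref{claim:in-h-induction} does not apply verbatim, since $i$ is all-reaching in $\{b_i^s\}$. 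The needed input is instead that no factor $\partial_{b_i^s}v^{k-1}_{j/ij}$ ever receives a $\partial_{X_{ij}}$, exactly as the paper argues for $\partial_{h_j^s}g_j^t$.
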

\begin{proof}[of \Cref{claim:recursive-claim-1}]
      The proof is again by induction. At each step of the induction, we assume that for all $s,t\leq T-1$, all $p\geq 1$ and all $i\in [N],j\in [d],$ we have proved that
    \begin{align*}
        \max\left\{\norm{ h_j^t - \tilde{ h}_j^t}_p , \norm{\partial_{ h_j^s} h_j^t - \tilde{\partial}_{\tilde{ h}_j^s}\tilde{ h}_j^t}_p\right\}&\lesssim 1/\sqrt{N} \\
        \max\left\{\norm{ v_j^t - \tilde{ v}_j^t}_p , \norm{\partial_{ h_j^s} v_j^t - \tilde{\partial}_{\tilde{ h}_j^s}\tilde{ v}_j^t}_p\right\}&\lesssim 1/\sqrt{N} \\
        \max\left\{\norm{ b_i^t - \tilde{ b}_i^t}_p , \norm{\partial_{ b_i^s} b_i^t - \tilde{\partial}_{\tilde{ b}_i^s}\tilde{ b}_i^t}_p,\norm{\partial_{ b^*_i} b_i^t - \tilde{\partial}_{\tilde{ b}^*_i}\tilde{ b}_i^t}_p \right\}&\lesssim 1/\sqrt{N} \\
        \max\left\{\norm{ u_i^t - \tilde{ u}_i^t}_p , \norm{\partial_{ b_i^s} u_i^t - \tilde{\partial}_{\tilde{ b}_i^s}\tilde{ u}_i^t}_p,\norm{\partial_{ b_i^*} u_i^t - \tilde{\partial}_{\tilde{ b}^*_i}\tilde{ u}_i^t}_p \right\}&\lesssim 1/\sqrt{N}
    \end{align*}
    and aim to prove the same for $s<t=T$ (since in the case $s=t$, all partial derivatves are equal to 1 so the statement is trivial). Note that the base case $T=0$ is trivially true. Since all inductive cases are analogous, we elaborate only the cases of $\bfh$ and $\bfv.$ For $\bfh$, we have
    \begin{align*}
        \norm{h_j^t - \tilde{h}_j^t}_p &= \norm{\sum_{i=1}^N X_{ij}^2 \left(\tilde{\theta}_j \tilde{\partial}_{\tilde{b}_i^*} \tilde{u}_i^t - \theta_j \partial_{b_i^*}u_{i/ij}^t + \sum_{s=1}^t (\tilde{v}_j^{s-1} \tilde{\partial}_{\tilde{b}_i^s} \tilde{u}_i^t - v_{j/ij}^{s-1} \partial_{b_i^s}u_{i/ij}^t )\right)}_p + \norm{\Delta_j^t}_p \\
        &\lesssim  \frac{1}{N}\sum_{i=1}^N\left(\norm{ \tilde{\theta}_j \tilde{\partial}_{\tilde{b}_i^*} \tilde{u}_i^t - \theta_j \partial_{b_i^*}u_{i/ij}^t }_{2p}+ \sum_{s=1}^t \norm{\tilde{v}_j^{s-1} \tilde{\partial}_{\tilde{b}_i^s} \tilde{u}_i^t - v_{j/ij}^{s-1} \partial_{b_i^s}u_{i/ij}^t )}_{2p}\right) + \norm{\Delta_j^t}_p.
    \end{align*}
    We can bound each term individually. For $\norm{\Delta_j^t}_p$, using \Cref{thm:derivative-bounds}, we have 
    \begin{align*}
        \norm{\Delta_j^t}_p &\lesssim  \sum_{s=1}^{t-1} \sum_{i=1}^N \norm{u^{s}_{i/ij}X_{ij}^2\partial_{h_j^s}u_{i/ij}^{t}   }_p+ \frac{1}{2} \int_0^1 d\eta(1-\eta) \sum_{i=1}^N \norm{X_{ij}^3 \partial^2_{X_{ij}}u_i^{t}(X^{ij}(\eta))}_p \\
        &\lesssim \sum_{s=1}^{t-1} \sum_{i=1}^N \frac{1}{N}\norm{\partial_{h_j^s}u_{i/ij}^{t}   }_{3p}+ \frac{1}{2} \int_0^1 d\eta(1-\eta) \sum_{i=1}^N \frac{1}{N^{3/2}}\norm{\partial^2_{X_{ij}}u_i^{t}(X^{ij}(\eta))}_{2p} \\
         &\lesssim \sum_{s=1}^{t-1} \sum_{i=1}^N \frac{1}{N^{3/2}}+ \frac{1}{2} \int_0^1 d\eta(1-\eta) \sum_{i=1}^N \frac{1}{N^{3/2}} \\
         &\lesssim 1/\sqrt{N}.
    \end{align*}
    Similarly, for $\norm{ \tilde{\theta}_j \tilde{\partial}_{\tilde{b}_i^*} \tilde{u}_i^t - \theta_j \partial_{b_i^*}u_{i/ij}^t }_{2p}$ we have
    \begin{align*}
        \norm{ \tilde{\theta}_j \tilde{\partial}_{\tilde{b}_i^*} \tilde{u}_i^t - \theta_j \partial_{b_i^*}u_{i/ij}^t }_{2p} &= \norm{ \theta_j \tilde{\partial}_{\tilde{b}_i^*} \tilde{u}_i^t - \theta_j \partial_{b_i^*}u_{i/ij}^t }_{2p} \\
        &\lesssim \norm{  \tilde{\partial}_{\tilde{b}_i^*} \tilde{u}_i^t - \partial_{b_i^*}u_{i/ij}^t }_{4p} \\
        &\leq \norm{  \tilde{\partial}_{\tilde{b}_i^*} \tilde{u}_i^t - \partial_{b_i^*}u_{i}^t }_{4p} + \norm{\partial_{b_i^*}u_{i}^t - \partial_{b_i^*}u_{i/ij}^t}_{4p}.
    \end{align*}
    The first term is $O( 1/\sqrt{N})$ by inductive hypothesis. For the second term, we can write 
    \begin{align}
        \norm{\partial_{b_i^*}u_{i}^t - \partial_{b_i^*}u_{i/ij}^t}_{4p} &\leq \int_0^1d\eta \norm{X_{ij} \partial_{X_{ij}}\partial_{b_i^*}u_{i}^t(X^{ij}(\eta)}_{4p}\label{eq:taylor-expansion-to-put-xij-back} \\
        &\lesssim 1/\sqrt{N}\nonumber
    \end{align}
    again by \Cref{thm:derivative-bounds}. A similar proof to the above shows that 
    \begin{align*}
        \norm{\tilde{v}_j^{s-1} \tilde{\partial}_{\tilde{b}_i^s} \tilde{u}_i^t - v_{j/ij}^{s-1} \partial_{b_i^s}u_{i/ij}^t )}_{2p} &\lesssim 1/\sqrt{N}
    \end{align*}
    as well.

    Next we show that we have 
    \begin{align*}
        \norm{\partial_{ h_j^s} h_j^t - \tilde{\partial}_{\tilde{ h}_j^s}\tilde{ h}_j^t}_p &\lesssim 1/\sqrt{N}.
    \end{align*}
    Here we can write 
    \begin{align}
        &\norm{\partial_{ h_j^s} h_j^t - \tilde{\partial}_{\tilde{ h}_j^s}\tilde{ h}_j^t}_p \nonumber\\
        &\leq \norm{\partial_{ h_j^s} g_j^t}_p + \norm{\sum_{i=1}^N X_{ij}^2\left(\tilde{\partial}_{\tilde{ h}_j^s}\tilde{\theta}_j \tilde{\partial}_{\tilde{b}_i^*} \tilde{u}_i^t - \partial_{ h_j^s}\theta_j \partial_{b_i^*}u_{i/ij}^t + \sum_{s=1}^t (\tilde{\partial}_{\tilde{ h}_j^s}\tilde{v}_j^{s-1} \tilde{\partial}_{\tilde{b}_i^s} \tilde{u}_i^t - \partial_{ h_j^s}v_{j/ij}^{s-1} \partial_{b_i^s}u_{i/ij}^t )\right)}_p \nonumber\\
        &\qquad \qquad + \norm{\sum_{i=1}^N X_{ij}^2\left( \theta_j \partial_{ h_j^s}\partial_{b_i^*}u_{i/ij}^t + \sum_{s=1}^t v_{j/ij}^{s-1} \partial_{ h_j^s}\partial_{b_i^s}u_{i/ij}^t \right)}_p + \norm{\partial_{ h_j^s} \Delta_j^t}_p \nonumber\\
        &\lesssim \norm{\partial_{ h_j^s} g_j^t}_p + \frac{1}{N}\sum_{i=1}^N \left(\norm{\tilde{\partial}_{\tilde{ h}_j^s}\tilde{\theta}_j \tilde{\partial}_{\tilde{b}_i^*} \tilde{u}_i^t - \partial_{ h_j^s}\theta_j \partial_{b_i^*}u_{i/ij}^t}_{2p} + \sum_{s=1}^t \norm{\tilde{\partial}_{\tilde{ h}_j^s}\tilde{v}_j^{s-1} \tilde{\partial}_{\tilde{b}_i^s} \tilde{u}_i^t - \partial_{ h_j^s}v_{j/ij}^{s-1} \partial_{b_i^s}u_{i/ij}^t }_{2p} \right) \nonumber\\
        &\qquad \qquad + \frac{1}{N}\sum_{i=1}^N\left(\norm{ \theta_j \partial_{ h_j^s}\partial_{b_i^*}u_{i/ij}^t}_{2p} + \sum_{s=1}^t\norm{ v_{j/ij}^{s-1} \partial_{ h_j^s}\partial_{b_i^s}u_{i/ij}^t }_{2p} \right)+ \norm{\partial_{ h_j^s} \Delta_j^t}_p \label{eq:h-partials-difference} 
    \end{align}
    The second and last terms above can be bounded in a completely analogous way to how we bounded the corresponding terms in $\norm{h_j^t - \tilde{h}_j^t}_p$. The summands in the third term above can similarly be bounded since e.g.
    \begin{align*}
        \norm{ \theta_j \partial_{ h_j^s}\partial_{b_i^*}u_{i/ij}^t}_{2p} &\lesssim \norm{  \partial_{ h_j^s}\partial_{b_i^*}u_{i/ij}^t}_{4p}\\
        &\lesssim 1/\sqrt{N}
    \end{align*}
    by \Cref{thm:derivative-bounds}. What remains is to bound the first term $ \norm{\partial_{ h_j^s} g_j^t}_p.$ To bound this term, we use the same Taylor expansion as in \eqref{eq:V-FTC}. In this case we set 
    \begin{align}\label{eq:V-expansion-new}
    V_{i_1,\dots,i_p}(X) &:= (\partial_{ h_j^s} u_{i_1/i_1j}^{t-1}) \cdots (\partial_{ h_j^s} u_{i_p/i_pj}^{t-1}).
\end{align}
Following the same argument as in the proof of \Cref{thm:derivative-bounds}, we obtain 
\begin{align}
    &\norm{\partial_{h_j^s} g_j^t}_p^p \nonumber\\
    &\lesssim  \sum_{q=0}^p \int_{[0,1]^q} \prod_{\ell=1}^q d\eta_\ell\frac{1}{N^{(p+q)/2}}\sum_{(i_1,\dots,i_p) \in \calA_q} \norm{\partial_{X_{i'_1j}}\dots \partial_{X_{i'_qj}} V_{i_1,\dots,i_p}(X^{\{i'_1,\dots,i'_q\},j}(\eta_1,\dots,\eta_q))}_{p+1}. \nonumber
\end{align}
We claim that, for each of the terms in the sum above, we have 
\[
\norm{\partial_{X_{i'_1j}}\dots \partial_{X_{i'_qj}} V_{i_1,\dots,i_p}(X^{\{i'_1,\dots,i'_q\},j}(\eta_1,\dots,\eta_q))}_{p+1} \lesssim (1/\sqrt{N})^p.
\]
To prove this, it suffices to show that, for any way to distribute the partial derivatives $\partial_{X_{i'_1j}},\dots, \partial_{X_{i'_qj}}$ among the factors in $(\partial_{ h_j^s} u_{i_1/i_1j}^{t-1}), \cdots ,(\partial_{ h_j^s} u_{i_p/i_pj}^{t-1})$, every factor must have moments that are of order  $1/\sqrt{N}$. Note first that, since $X_{i'_k,j}$ has been set to zero in $\partial_{h_j^s}u_{i_k'/i_k'j}^{t-1},$ we can assume that no derivative $\partial_{X_{i'_k,j}}$ gets distributed to $\partial_{h_j^s}u_{i_k'/i_k'j}^{t-1},$ since otherwise the corresponding term will be equal to zero. But recall that the $i'_1,\dots,i'_k$ are unique among $i_1,\dots,i_p.$ Hence, any factors that get any derivatives at all distributed to them must have moments that are $\lesssim 1/\sqrt{N}$ by \Cref{thm:derivative-bounds}. Morever, any term that doesn't get any derivatives distributed to it is still being acted on by $\partial_{h_j^s}$, and hence again has moments that are $O(1/\sqrt{N})$ by \Cref{thm:derivative-bounds}. This proves that
\begin{align*}
    \norm{\partial_{h_j^s} g_j^t}_p^p &\lesssim \sum_{q=0}^p \frac{1}{N^{(p+q)/2}} \cdot |\calA_q|\cdot \frac{1}{N^{p/2}} \\
    &\lesssim \frac{1}{N^{p/2}}.
\end{align*}
Plugging this back into \eqref{eq:h-partials-difference}, we obtain 
\begin{align*}
    \norm{ \partial_{h_j^s}h_j^t - \tilde{\partial}_{\tilde{h}_j^s}\tilde{h}_j^t }_p &\lesssim 1/\sqrt{N},
\end{align*}
as desired.

To conclude the proof of \Cref{claim:recursive-claim-1}, we elaborate the inductive case for $\bfv.$ By Lipschitzness of $G_t$, we have 
\begin{align*}
    \norm{v_j^t - \tilde{v}_j^t}_p &\lesssim \sum_{s=1}^{t}\norm{h_j^s - \tilde{h}_j^s}_p \\
    &\lesssim 1/\sqrt{N}
\end{align*}
by inductive hypothesis. Similarly, by chain rule we have we have 
\begin{align*}
&\norm{\partial_{h_j^s} v_j^t - \tilde{\partial}_{\tilde{h}_j^s} \tilde{v}_j^t}_p \\
&= \sum_{k=1}^t \norm{\partial_k G_t(h_j^1, \dots, h_j^t; v_j^0,\tilde{w}_j^1,\dots, \tilde{w}_j^t) \partial_{h_j^s} h_j^k - \partial_k G_t(\tilde{h}_j^1, \dots, \tilde{h}_j^t; v_j^0,\tilde{w}_j^1,\dots, \tilde{w}_j^t) \tilde{\partial}_{\tilde{h}_j^s} \tilde{h}_j^k  }_p \\
&\leq \sum_{k=1}^t \norm{\partial_k G_t(h_j^1, \dots, h_j^t; v_j^0,\tilde{w}_j^1,\dots, \tilde{w}_j^t) \partial_{h_j^s} h_j^k - \partial_k G_t(\tilde{h}_j^1, \dots, \tilde{h}_j^t; v_j^0,\tilde{w}_j^1,\dots, \tilde{w}_j^t) \partial_{h_j^s} h_j^k}_p \\
&+ \sum_{k=1}^t \norm{\partial_k G_t(\tilde{h}_j^1, \dots, \tilde{h}_j^t; v_j^0,\tilde{w}_j^1,\dots, \tilde{w}_j^t) \partial_{h_j^s} h_j^k- \partial_k G_t(\tilde{h}_j^1, \dots, \tilde{h}_j^t; v_j^0,\tilde{w}_j^1,\dots, \tilde{w}_j^t) \tilde{\partial}_{\tilde{h}_j^s} \tilde{h}_j^k  }_p  \\
&\lesssim \sum_{k=1}^t \norm{\partial_k G_t(h_j^1, \dots, h_j^t; v_j^0,\tilde{w}_j^1,\dots, \tilde{w}_j^t) - \partial_k G_t(\tilde{h}_j^1, \dots, \tilde{h}_j^t; v_j^0,\tilde{w}_j^1,\dots, \tilde{w}_j^t) }_{2p}\\
&+ \sum_{k=1}^t \norm{ \partial_{h_j^s} h_j^k-  \tilde{\partial}_{\tilde{h}_j^s} \tilde{h}_j^k  }_{2p} \\
&\lesssim \sum_{k=1}^t\sum_{k'=1}^t \norm{h_j^{k'} - \tilde{h}_j^{k'}}_{2p} + \sum_{k=1}^t \norm{ \partial_{h_j^s} h_j^k-  \tilde{\partial}_{\tilde{h}_j^s} \tilde{h}_j^k  }_{2p} \\ 
&\lesssim 1/\sqrt{N}
\end{align*} 
by \Cref{thm:derivative-bounds} and the inductive hypothesis. This concludes the proof of \Cref{claim:recursive-claim-1}.
\end{proof}

\paragraph{Concentration of the coefficients.} We now do the second simplification that will lead us to prove \Cref{lem:inductive-case-lemma1}. We will define yet another set of variables, this time denoted with a bar, which will be equal to the variables with a tilde above, except they will have the terms $\sum_{i=1}^N X_{ij}^2\partial_{b_i^*}u_i^{t}  (X^{ij})$ and $\sum_{i=1}^N X_{ij}^2\partial_{b_i^s}u_i^{t}  (X^{ij})$ replaced by their limiting expectation. 

We set $\overline{\bftheta}=\bftheta,\overline{\bfv}^0=\bfv^0,\overline{\bfb}^* = \bfb^*.$ Moreover, for $t=1,\dots,T,$ we let
\begin{align*}
    \begin{cases}
         \overline{b}_i^t = z_j^t + \sum_{s=1}^{t-1}  \overline{u}_i^s \alpha \E_{\nu_{T-1}}  {\partial}_{ {h}^s}  {v}^{t-1} &\forall i\in [N]\\
         \overline{u}_i^t = F_t( \overline{b}_i^*,  \overline{b}_i^1,\dots, \overline{b}_i^t;w^1_i,\dots,w^t_i) &\forall i \in [N] \\
         \overline{h}_j^t = g_j^t +  \overline{\theta}_j  \E_{\mu_T}  {\partial}_{ {b}^*}  {u}^t + \sum_{s=1}^t  \overline{v}_j^{s-1} \E_{\mu_T}{\partial}_{ {b}^s}  {u}^t & \forall j\in [d] \\
         \overline{v}_j^t = G_t( \overline{h}_j^1,\dots, \overline{h}_j^t; \overline{v}_j^0, \tilde{w}^1_j,\dots, \tilde{w}^t_j) & \forall j\in [d]
    \end{cases},
\end{align*}
where the partial derivatives $\overline{\partial}$ are defined in the same way as $\tilde{\partial}.$ As we show next, the variables with a bar and with a tilde are close.
\begin{claim}\label{claim:recursive-claim-2}
    Recall we are assuming all links in the chain of \Cref{sec:induction-outline} are valid up to and not including \eqref{eq:nu-marginal-covergence}, \eqref{eq:nu-decoupling} at step $T.$ The following hold for every $i\in [N]$ and $j\in [d],$  and $t\leq T$: 
    \begin{align*}
        \lim \norm{ \tilde{b}_i^t - \overline{b}_i^t } _2&= 0 \\
        \lim \norm{\tilde{h}_j^t - \overline{h}_j^t}_2 &= 0. 
    \end{align*}
\end{claim}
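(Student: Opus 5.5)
We will prove the claim by induction along the same chain as in \Cref{claim:recursive-claim-1}, namely $\bfb^1\to\bfu^1\to\bfh^1\to\bfv^1\to\bfb^2\to\cdots$. Alongside the variables we carry the derivative operators: at each step we show that both $\norm{\tilde{x}-\overline{x}}_2$ and $\norm{\tilde{\partial}\tilde{x}-\overline{\partial}\,\overline{x}}_2$ tend to zero, for $x$ ranging over $b_i^t,u_i^t,h_j^t,v_j^t$ and all relevant field derivatives. In fact we will prove this in every $L^p$ norm, because Hölder's inequality forces higher moments in the products. The steps $\bfb\to\bfu$ and $\bfh\to\bfv$ are handled by the Lipschitz/chain-rule computation already used for $\bfv$ in the proof of \Cref{claim:recursive-claim-1}. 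Here we use that $F_t,G_t$ are $C^\infty_{1b}$, and that all tilde and bar quantities have $O(1)$ moments. For tilde quantities this follows from \Cref{claim:recursive-claim-1} and \Cref{thm:derivative-bounds}. For bar quantities it follows directly, since their coefficients are deterministic and bounded.

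For the field steps, consider for instance $\bfh$. Subtracting the definitions gives
\begin{align*}
\tilde h_j^t-\overline h_j^t &= \theta_j\Big(A_j-\E_{\mu_T}\partial_{b^*}u^t\Big) + \sum_{s=1}^t\Big(\tilde v_j^{s-1}B_j^s-\overline v_j^{s-1}\E_{\mu_T}\partial_{b^s}u^t\Big),
\end{align*}
where $A_j=\sum_i X_{ij}^2\tilde\partial_{\tilde b_i^*}\tilde u_i^t$ and $B_j^s=\sum_i X_{ij}^2\tilde\partial_{\tilde b_i^s}\tilde u_i^t$. The second group splits as $(\tilde v_j^{s-1}-\overline v_j^{s-1})B_j^s+\overline v_j^{s-1}(B_j^s-\E_{\mu_T}\partial_{b^s}u^t)$. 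The first piece vanishes by the inductive hypothesis together with $\norm{B_j^s}_q=O(1)$, where $q$ is a higher moment. So everything reduces to the concentration statement
\[
\norm{A_j-\E_{\mu_T}\partial_{b^*}u^t}_{q}\to0,\qquad \norm{B_j^s-\E_{\mu_T}\partial_{b^s}u^t}_{q}\to0 .
\]

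To prove this, first replace $X_{ij}^2$ by $1/N$. The variance of $\sum_i (X_{ij}^2-1/N)\tilde\partial\tilde u_i^t$ is $O(1/N)$ after removing the dependence of $\tilde\partial\tilde u_i^t$ on $X_{ij}$. We do this with the same interpolation as in \eqref{eq:taylor-expansion-to-put-xij-back}, costing $O(1/\sqrt N)$ via \Cref{thm:derivative-bounds}. The remaining cross terms have mean zero by independence. Next, by \Cref{claim:recursive-claim-1}, replace $\tilde\partial\tilde u_i^t$ by $\partial_{b_i^*}u_i^t$, which equals a fixed $C^\infty_{1b}$ function $\Psi$ of $(b_i^*,b_i^1,\dots,b_i^t;w_i)$ up to $O(1/\sqrt N)$. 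This leaves the empirical average $\frac1N\sum_i\Psi(b_i^*,\dots)$. Its mean converges to $\E_{\mu_T}\Psi=\E_{\mu_T}\partial_{b^*}u^t$ by \eqref{eq:mu-marginal-covergence} at step $T$, which is available by the induction hypothesis. Its variance vanishes by \eqref{eq:mu-decoupling} at step $T$. Together these give $L^2$ convergence, and uniform higher-moment bounds upgrade this to every $L^q$ by interpolation. The $\bfb$ step is identical, using $\sum_j X_{ij}^2\tilde\partial_{\tilde h_j^s}\tilde v_j^{t-1}$, \eqref{eq:nu-marginal-covergence}, and \eqref{eq:nu-decoupling} at step $T-1$, with the factor $\alpha=d/N$.

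The main obstacle is that $\tilde\partial\tilde u_i^t$ is not literally a fixed function of the $i$-th coordinate fields, because its recursive definition contains the random coefficients $\sum_j X_{ij}^2\tilde\partial\tilde v_j^{r}$. We will first establish concentration of those coefficients, which is the $\bfb$-type inductive step at earlier times. We then substitute their deterministic limits to show that $\tilde\partial_{\tilde b_i^*}\tilde u_i^t$ equals $\Psi(\tilde b_i^*,\dots)$ up to $o(1)$ in $L^q$. The closeness $\tilde b_i\approx b_i$ then lets us invoke the $\mu_T$ statements of \Cref{thm:reduction-to-fields}. Ordering this sub-induction correctly is the delicate point.
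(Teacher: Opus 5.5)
Your proposal follows the same strategy as the paper. You induct in $t$ and reduce $\tilde b_i^t-\overline b_i^t$ and $\tilde h_j^t-\overline h_j^t$ to concentration of the random coefficients $\sum_j X_{ij}^2\tilde\partial_{\tilde h_j^s}\tilde v_j^{t-1}$ and $\sum_i X_{ij}^2\tilde\partial_{\tilde b_i^s}\tilde u_i^t$. Their means come from the marginal-convergence hypotheses and the vanishing of their variances from the decoupling hypotheses.

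Where you deviate, you are more careful than the paper:
\begin{itemize}
\item The paper works only in $L^2$. It also asserts that $\tilde\partial_{\tilde h_j^s}\tilde v_j^{t-1}$ is a fixed $C^\infty_{1b}$ function of $\tilde h_j^1,\dots,\tilde h_j^{t-1}$, which overlooks the random coefficients inside the recursive definition of $\tilde\partial$. Carrying $\norm{\tilde\partial\tilde x-\overline\partial\,\overline x}_p\to0$ through the induction, as you do, is what actually closes that point.
\item Your all-$L^p$ control is what justifies the H\"older steps on products such as $\theta_j(A_j-\E_{\mu_T}\partial_{b^*}u^t)$.
\item Replacing $X_{ij}^2$ by $1/N$ first is cleaner. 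It turns the coefficient into an empirical average $\frac1N\sum_i\Psi(b_i^*,\dots)$ to which \eqref{eq:mu-marginal-covergence} and \eqref{eq:mu-decoupling} apply verbatim. The paper instead uses a covariance decomposition with the weights $X_{ij}^2$ still inside.
\end{itemize}

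Some details need repair.
\begin{itemize}
\item After removing only the dependence of the $i$-th summand on $X_{ij}$, the cross terms $\E(X_{ij}^2-\frac1N)(X_{i'j}^2-\frac1N)f_if_{i'}$, with $f_i=\partial_{b_i^*}u_i^t$, are not zero by independence. This is because $f_{i'}$ still depends on $X_{ij}$ and $f_i$ on $X_{i'j}$. For each pair you should zero both entries in both factors. By part (1) of \Cref{thm:derivative-bounds}, under assumption (A1'), this changes $f_if_{i'}$ by $O(N^{-1/2})$ in $L^2$. So each pair contributes $O(N^{-5/2})$ and all pairs together $O(N^{-1/2})$, which suffices.
\item \Cref{thm:derivative-bounds} concerns the GFOM variables, not the tilde ones. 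So do the replacement $\tilde\partial_{\tilde b_i^*}\tilde u_i^t\to\partial_{b_i^*}u_i^t$ via \Cref{claim:recursive-claim-1} before this interpolation, not after.
\item The approximation $\partial_{b_i^*}u_i^t\approx\Psi(b_i^*,\dots)$ holds only up to $o(1)$, not $O(1/\sqrt N)$, since the coefficient concentration is qualitative. Your last paragraph effectively acknowledges this.
\item The moment bounds for the bar variables are not immediate from deterministic coefficients. They also need bounded moments of $z_i^t$ and $g_j^t$, which come from the same moment expansion used in the proof of \Cref{thm:derivative-bounds}.
\end{itemize}
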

\begin{proof} The proof is again by induction in $t=1,\dots,T.$ At each step, we assume that, for all times $s< t,$ we have
    \begin{align*}
        \lim \norm{ \tilde{b}_i^s - \overline{b}_i^s } _2&= 0 \\
        \lim \norm{\tilde{h}_j^s - \overline{h}_j^s}_2 &= 0,
    \end{align*}
and we aim to prove this at time $t$, as long as $t\leq T.$ Since the proofs of both of the statements above are analogous, we carry out explicitly the proof of the first claim only.

As a first step, note that by construction of the variables with tilde above, for $s\geq 1$ and $i \in [N]$, the quantities $\tilde{\partial}_{\tilde{h}_j^s} \tilde{v}_j^{t-1}$
    can be written as (recursively-defined) $C^\infty_{1b}$ functions of the fields $\{\tilde{h}_j^r\}_{r=1}^{s}.$ Call these functions $\phi_t$, such that $\tilde{\partial}_{\tilde{h}_j^s} \tilde{v}_j^{t-1} = \phi_t(\tilde{h}_j^1,\dots,\tilde{h}_j^{t-1}).$ Note that for $j\neq j' \in [d]$, we have
    \begin{align*}
        \Cov(\tilde{\partial}_{\tilde{h}_j^s} \tilde{v}_j^{t-1}, \tilde{\partial}_{\tilde{h}_{j'}^s} \tilde{v}_{j'}^{t-1}) &= \Cov(\phi_t(\tilde{h}_j^1,\dots,\tilde{h}_j^{t-1}), \phi_t(\tilde{h}_{j'}^1,\dots,\tilde{h}_{j'}^{t-1})) \\
        &\leq \Cov(\phi_t(h_j^1,\dots,h_j^{t-1}), \phi_t(h_{j'}^1,\dots,h_{j'}^{t-1})) + O(1/\sqrt{N}) \\
        &\lesssim 1/\sqrt{N}
    \end{align*}
    where in the first inequality we used \Cref{claim:recursive-claim-1} and in the second one we used \eqref{eq:nu-decoupling} at step $t-1$.

    Now we have 
    \begin{align*}
        \Var\left(\sum_{j=1}^d X_{ij}^2\tilde{\partial}_{\tilde{h}_j^s} \tilde{v}_j^{t-1} \right) &\lesssim \frac{1}{N} + N^2\max_{j\neq j'}\Cov(X_{ij}^2\tilde{\partial}_{\tilde{h}_j^s} \tilde{v}_j^{t-1}, X_{ij'}^2\tilde{\partial}_{\tilde{h}_{j'}^s} \tilde{v}_{j'}^{t-1})
    \end{align*}
    and, for $j\neq j',$
    \begin{align*}
    N^2&\Cov(X_{ij}^2\tilde{\partial}_{\tilde{h}_j^s} \tilde{v}_j^{t-1}, X_{ij'}^2\tilde{\partial}_{\tilde{h}_{j'}^s} \tilde{v}_{j'}^{t-1})  \leq N^2|\Cov(X_{ij}^2\tilde{\partial}_{\tilde{h}_j^s} \tilde{v}_j^{t-1}, \frac{1}{N}\tilde{\partial}_{\tilde{h}_j^s} \tilde{v}_j^{t-1})| \\
    &\qquad \qquad\qquad\qquad\qquad+ |\Cov(\tilde{\partial}_{\tilde{h}_j^s} \tilde{v}_j^{t-1}, \tilde{\partial}_{\tilde{h}_{j'}^s} \tilde{v}_{j'}^{t-1})| +N^2 |\Cov(\frac{1}{N}\tilde{\partial}_{\tilde{h}_{j'}^s} \tilde{v}_{j'}^{t-1}, X_{ij'}^2\tilde{\partial}_{\tilde{h}_{j'}^s} \tilde{v}_{j'}^{t-1})|.
    \end{align*}
    The middle term vanishes as stated above, so we need to control the first and third term, which are analogous, so we focus on the first term. We have 
    \begin{align*}
        N^2|\Cov(X_{ij}^2\tilde{\partial}_{\tilde{h}_j^s} \tilde{v}_j^{t-1}, \frac{1}{N}\tilde{\partial}_{\tilde{h}_j^s} \tilde{v}_j^{t-1})| &\lesssim 1/\sqrt{N} +N^2 |\Cov(X_{ij}^2{\partial}_{{h}_j^s} {v}_j^{t-1}, \frac{1}{N}{\partial}_{{h}_j^s} {v}_j^{t-1})| \\
        &\lesssim 1/\sqrt{N} +N^2 |\Cov(X_{ij}^2{\partial}_{{h}_j^s} {v}_{j/ij}^{t-1}, \frac{1}{N}{\partial}_{{h}_j^s} {v}_{j/ij}^{t-1})| \\
        &=1/\sqrt{N} + |\Cov({\partial}_{{h}_j^s} {v}_{j/ij}^{t-1}, {\partial}_{{h}_j^s} {v}_{j/ij}^{t-1})| \\
        &\lesssim 1/\sqrt{N} + |\Cov({\partial}_{{h}_j^s} {v}_{j}^{t-1}, {\partial}_{{h}_j^s} {v}_{j}^{t-1})| \\
        &\lesssim 1/\sqrt{N} + |\Cov(\tilde{\partial}_{\tilde{h}_j^s} \tilde{v}_{j}^{t-1}, \tilde{\partial}_{\tilde{h}_j^s} \tilde{v}_{j}^{t-1})| \\
        &\lesssim 1/\sqrt{N} + |\Cov(\phi_t(h_j^1,\dots,h_j^{t-1}), \phi_t(h_{j'}^1,\dots,h_{j'}^{t-1}))| \\
        &\to 0,
    \end{align*}
    where we have again used \Cref{claim:recursive-claim-1} and the inductive hypothesis \eqref{eq:nu-decoupling}. This shows that
    \begin{align}\label{eq:var1-vanish}
        \Var\left(\sum_{j=1}^d X_{ij}^2\tilde{\partial}_{\tilde{h}_j^s} \tilde{v}_j^{t-1} \right)   \to 0.
    \end{align}
    Moreover, by inductive hypothesis, we have 
    \begin{align}
        \left|\E\sum_{j=1}^d X_{ij}^2\tilde{\partial}_{\tilde{h}^s_j} \tilde{v}_j^{t-1}  -\alpha\E_{\nu_{T-1}} {\partial}_{{h}^s} {v}^{t-1} \right| &\lesssim \left|\E\sum_{j=1}^d X_{ij}^2{\partial}_{{h}_j} {v}_j^{t-1}  - \frac{1}{N}\sum_{j=1}^d \E {\partial}_{{h}_j} {v}_j^{t-1} \right|  + 1/\sqrt{N}\nonumber\\
        &\lesssim  \left|\E\sum_{j=1}^d X_{ij}^2{\partial}_{{h}_j} {v}_{j/ij}^{t-1}  - \frac{1}{N}\sum_{j=1}^d \E {\partial}_{{h}_j} {v}_{j/ij}^{t-1} \right|  + 1/\sqrt{N}\nonumber\\
        &= 1/\sqrt{N}\label{eq:mean1-vanish}
    \end{align}
    From \eqref{eq:var1-vanish}, \eqref{eq:mean1-vanish} and the fact that 
    \begin{align*}
        \norm{\tilde{u}_i^s - \overline{u}_i^s}_2 &= \norm{F_t( \tilde{b}_i^*,  \tilde{b}_i^1,\dots, \tilde{b}_i^t;w^1_i,\dots,w^t_i) - F_t( \overline{b}_i^*,  \overline{b}_i^1,\dots, \overline{b}_i^t;w^1_i,\dots,w^t_i)}_2\\
        &\to 0
    \end{align*}
    by inductive hypothesis, we conclude $\norm{\tilde{b}_i^t - \overline{b}_i^t}_2 \to 0$, as desired.
\end{proof}

\paragraph{Conclusion of Proof of \Cref{lem:inductive-case-lemma1}.}

To conclude the proof of \Cref{lem:inductive-case-lemma1}, note that combining \Cref{claim:recursive-claim-1} and \Cref{claim:recursive-claim-2}, we have, for all $i \in [N],j\in [d],$
\begin{align*}
    \lim \norm{ b_i^t - \overline{b}_i^t } _2&= 0 \\
    \lim \norm{h_j^t - \overline{h}_j^t}_2 &= 0. 
\end{align*}
Moreover, from the definition of $\overline{h}_j^t,$ we have that there exists a (recursively defined) function $\tilde{\phi}$ such that
\begin{align*}
    \phi(\{\overline{h}_j^t\}_{t=1}^T;v_j^0,\theta_j,\{\tilde{w}_j^t\}_{t=1}^T) &= \tilde{\phi}(\{g_j^t\}_{t=1}^T;v_j^0,\theta_j,\{\tilde{w}_j^t\}_{t=1}^T)
\end{align*}
This proves \eqref{eq:gjt-mu-t-close}. Moreover, note that by construction we also have $\E_{\overline{\nu}_T} \phi( \{h^t\}_{t=1}^T;v^0,\theta,\{\tilde{w}^t\}_{t=1}^T),$ as desired.

\subsubsection{Proof of \Cref{lem:inductive-case-lemma2}}
Let $\{\tilde{X}_{ij}^t\}_{i\in [N], j\in [d], t\in [T]}$ be centered, jointly Gaussian random variables with covariance $\E \tilde{X}_{ij}^s\tilde{X}_{i'j'}^t = \delta_{ii'}\delta_{jj'}\frac{1}{N}\E u_i^{s}u_i^{t
},$ independent of $X$ and of everything else, and define 
    \[
    \tilde{g}_j^t = \sum_{i=1}^N \tilde{X}_{ij}^t.
    \]
We will prove proving that, for every $C^\infty_{1b}$ function $\Psi:\R^{2\times (2T+2)}\to\R$, and every $j\neq j',$ we have
    \begin{align}
            \lim &|\E\Psi( \{g_j^t\}_{t=1}^T,\{g_{j'}^t\}_{t=1}^T ;v^0_j,\theta_j,\{\tilde{w}_j^t\}_{t=1}^T, v^0_{j'},\theta_{j'},\{\tilde{w}_{j'}^t\}_{t=1}^T)\nonumber \\
            &\qquad \qquad \qquad - \E\Psi( \{\tilde{g}_j^t\}_{t=1}^T,\{\tilde{g}_{j'}^t\}_{t=1}^T ;v^0_j,\theta_j,\{\tilde{w}_j^t\}_{t=1}^T, v^0_{j'},\theta_{j'},\{\tilde{w}_{j'}^t\}_{t=1}^T) | \to 0. \nonumber
    \end{align}
    From now on, to avoid clutter, we wrote the above as 
    \begin{align}
            \lim &|\E\Psi( \{g_j^t\}_{t=1}^T,\{g_{j'}^t\}_{t=1}^T )- \E\Psi( \{\tilde{g}_j^t\}_{t=1}^T,\{\tilde{g}_{j'}^t\}_{t=1}^T) | \to 0,\label{eq:can-replace-with-gaussian}
    \end{align}
    making the dependence on the other variables implicit in the notation. Note that, once we have \eqref{eq:can-replace-with-gaussian}, choosing $\Psi(\{x^t\}_{t=1}^T, \{y^t\}_{t=1}^T) = \phi(\{x^t\}_{t=1}^T)$ and $\Psi(\{x^t\}_{t=1}^T, \{y^t\}_{t=1}^T) = \phi(\{x^t\}_{t=1}^T)\phi(\{y^t\}_{t=1}^T)$ shows that it suffices for us to show \eqref{eq:gt-marginal-convergence}, \eqref{eq:gt-decoupling} with $\tilde{g}$ in place of $g.$ But \eqref{eq:gt-decoupling} is trivially true for $\tilde{g}$ since $\tilde{g}^t_j$ and $\tilde{g}^t_{j'}$ are independent. Moreover, by exchangeability we have 
    \begin{align*}
        \E\tilde{g}_j^s \tilde{g}_j^t &= \frac{1}{N}\sum_{i=1}^N \E u_i^{s}u_i^{t} \\
        &= \E_{\mu_{T}} u^{s}u^{t}+ o(1)
    \end{align*}
    by inductive hypothesis, since $u_i^t$ is a $C^\infty_{1b}$ function of the fields $b^*_i, b^1_i,\dots,b^t_i$. This proves \eqref{eq:gt-marginal-convergence}.
    
    Hence, to conclude the proof it suffices to show \eqref{eq:can-replace-with-gaussian}. Similar to the base case, we consider an interpolation: for $0\leq k\leq N,$ we define
    \begin{align*}
        S_j^{t,k} &= \sum_{i=1}^k X_{ij} u_{i/ij}^{t} + \sum_{i=k+1}^N \tilde{X}_{ij}^t \\
        S_j^{t,-k} &= \sum_{i=1}^{k-1} X_{ij} u_{i/ij}^{t}+ \sum_{i=k+1}^N \tilde{X}_{ij}^t 
    \end{align*}
    so that $S_j^{t,N}=g_j^t$ and $S_j^{t,0}=\tilde{g}_j^t$. We now need the following additional lemma, which is also proved at the end of the section. We state the lemma for the second moments because that's all we need, but the lemma in fact holds true for all moments.
    \begin{claim}\label{claim:recursive-claim-1}
        Let $S_{j}^{t,k}, S_{j}^{t,-k}$ be defined as above, with the $X_{ij}$ being any $\frac{\sigma}{\sqrt{N}}$-subgaussian random variables. Then, for every $t\geq 1$ independent of $N$ and $j,j' \in [d],k\in [N]$, including $j=j',$ we have 
        \begin{align*}
            \norm{S_j^{t,k}}_2 &\lesssim 1 \\
            \norm{\partial_{X_{kj'}}S_j^{t,-k}}_2 &\lesssim 1/\sqrt{N}.
        \end{align*}
    \end{claim}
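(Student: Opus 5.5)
The plan is to treat the two sums making up $S_j^{t,k}$ separately, and to bound the derivative by differentiating term by term and invoking \Cref{thm:derivative-bounds}. Since $\tilde X^t$ is independent of $X$, the derivative $\partial_{X_{kj'}}$ acts only on the first sum.

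\textbf{First bound.} Write $S_j^{t,k}=A+B$ with
\[
A=\sum_{i\le k}X_{ij}u^t_{i/ij},\qquad B=\sum_{i>k}\tilde X^t_{ij}.
\]
For $B$, the summands are independent centered Gaussians with variance $\frac1N\E(u_i^t)^2$. These variances are $\lesssim 1/N$ by part (1) of \Cref{thm:derivative-bounds} with $P=\emptyset$, so $\norm{B}_2\lesssim 1$.

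For $A$, expand the square:
\[
\E A^2=\sum_{i\le k}\E X_{ij}^2(u^t_{i/ij})^2+\sum_{i\ne i'}\E X_{ij}X_{i'j}\,u^t_{i/ij}u^t_{i'/i'j}.
\]
The diagonal part is $O(1)$, since $u^t_{i/ij}$ is independent of $X_{ij}$ and has bounded moments; \Cref{thm:derivative-bounds} applies because \eqref{assump:1-weak} holds for $X^{ij}$. For the cross terms I use the Taylor expansion \eqref{eq:V-FTC} with $q=2$, exactly as in the proof of \Cref{claim:in-h-induction} and in the sketch around \eqref{eq:covariance-cross-term-control}. The factor $u^t_{i/ij}$ does not depend on $X_{ij}$, and $X_{i'j}$ appears exactly once. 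Expanding in both $X_{ij}$ and $X_{i'j}$, the zeroth- and first-order terms vanish by centering. The surviving term is of order
\[
\frac1{N^2}\,\norm{\partial_{X_{i'j}}u^t_{i}}\,\norm{\partial_{X_{ij}}u^t_{i'}}
\]
(evaluated at interpolated matrices), plus analogous higher-order remainders. Here $i$ is not all-reaching in $\{X_{i'j}\}$, since the graph is the single edge $i'$--$j$ and does not contain $i$. Hence part (2) of \Cref{thm:derivative-bounds} gives each factor $\lesssim 1/\sqrt N$. Each cross term is therefore $O(N^{-3})$, and summing over $N^2$ pairs gives $o(1)$. So $\norm{S_j^{t,k}}_2\lesssim1$.

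\textbf{Second bound.} Here
\[
\partial_{X_{kj'}}S_j^{t,-k}=\sum_{i<k}\Bigl(\1\{i=k,\ j=j'\}\,u^t_{i/ij}+X_{ij}\,\partial_{X_{kj'}}u^t_{i/ij}\Bigr).
\]
The indicator term never fires because $i<k$. The remaining sum $\sum_{i<k}X_{ij}D_i$, with $D_i=\partial_{X_{kj'}}u^t_{i/ij}$, is handled the same way as $A$. Since $i\ne k$, vertex $i$ is not in $G(\{X_{kj'}\})$, so part (2) of \Cref{thm:derivative-bounds} gives $\norm{D_i}_p\lesssim 1/\sqrt N$. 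The diagonal terms therefore contribute $N\cdot\frac1N\cdot\frac1N=\frac1N$.

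For the cross terms $i\ne i'$, I expand as before in $X_{ij}$ and $X_{i'j}$. This gives a factor $1/N^2$ times
\[
\partial_{X_{i'j}}D_i\cdot\partial_{X_{ij}}D_{i'}.
\]
Now $\partial_{X_{i'j}}\partial_{X_{kj'}}u_i$ has graph with edges $i'$--$j$ and $k$--$j'$. This graph does not contain $i$ (recall $i\ne k$), so $i$ is again not all-reaching and part (2) gives $O(1/\sqrt N)$. Each cross term is thus $O(N^{-3})$, and the total is $O(N^{-1})$. Hence $\norm{\partial_{X_{kj'}}S_j^{t,-k}}_2\lesssim 1/\sqrt N$.

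\textbf{Main obstacle.} The delicate step is organizing the two-variable Taylor expansion of the cross terms so that every remainder term still has an $X_{ij}$-derivative landing on a factor for which the relevant index is not all-reaching. This is the same bookkeeping as in \Cref{claim:in-h-induction}, here with $p=2$, and I would import that argument essentially verbatim.
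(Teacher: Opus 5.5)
Your proposal is correct and follows essentially the same route as the paper: you bound the Gaussian sum directly, and for the $X$-sums you expand the square and use the two-variable FTC expansion around $X_{ij}=X_{i'j}=0$, in which the mixed derivative can only land as $(\partial_{X_{i'j}}u^t_{i/ij})(\partial_{X_{ij}}u^t_{i'/i'j})$ (with an extra $\partial_{X_{kj'}}$ on each factor for the second bound), so that part (2) of \Cref{thm:derivative-bounds} supplies the $1/\sqrt{N}$ factors. The differences are cosmetic: you write out the $p=2$ computation for the first bound where the paper defers to an earlier argument, and the ``higher-order remainders'' you anticipate never arise, because that expansion is exact.
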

    Telescoping, for $j\neq j'$, we have
    \begin{align*}
        |\E\Psi(\{g_j^t\}_{t=1}^T, \{g_{j'}^t\}_{t=1}^T)& - \E\Psi(\{\tilde{g}_j^t\}_{t=1}^T, \{\tilde{g}_{j'}^t\}_{t=1}^T)| \\
        &\leq \sum_{k=1}^N|\E\Psi(\{S_j^{t,k}\}_{t=1}^T, \{S_{j'}^{t,k}\}_{t=1}^T) - \E\Psi(\{S_j^{t,k-1}\}_{t=1}^T, \{S_{j'}^{t,k-1}\}_{t=1}^T)|.
    \end{align*}
    Now we expand the above around $\{S_j^{t,-k}\}_{t=1}^T, \{S_j^{t,-k}\}_{t=1}^T.$ To avoid clutter, below we write $\Psi^{-k} = \Psi(\{S_j^{t,-k}\}_{t=1}^T, \{S_j^{t,-k}\}_{t=1}^T)$ and, for $t=1,\dots,2T$,
    \begin{align*}
    j(t) &= \begin{cases}
        j \qquad &\text{if }t\leq T \\
        j' &\text{if }T<t\leq 2T.
    \end{cases}
    \end{align*}
    We have 
    \begin{align}
        |\E\Psi(\{S_j^{t,k}\}_{t=1}^T, \{S_{j'}^{t,k}\}_{t=1}^T) - &\E\Psi(\{S_j^{t,k-1}\}_{t=1}^T, \{S_{j'}^{t,k-1}\}_{t=1}^T)| \nonumber\\
        &\lesssim \sum_{t=1}^{2T} |\E (\partial_t \Psi^{-k})\cdot (X_{kj(t)}u_{k/kj(t)}^{t} - \tilde{X}^t_{kj(t)})| \nonumber\\
        &\qquad +\sum_{s,t=1}^{2T} | \E(\partial_{s}\partial_t \Psi^{-k})\cdot (X_{kj(s)}X_{kj(t)}u_{k/kj(s)}^{s}u_{k/kj(t)}^{t} - \tilde{X}^s_{kj(s)}\tilde{X}^t_{kj(t)})| \nonumber\\
        &\qquad + N^{-3/2},\label{eq:Psi-expansion}
    \end{align}
    where, to bound the last term by $\lesssim N^{-3/2}$, we have used \Cref{claim:recursive-claim-1} and the Lipschitzness of $\Psi.$ We now want to show each of the terms in the two sums above are also $\lesssim N^{-3/2}.$ We begin with a term in the first sum, assuming without loss of generality that $t\leq T.$ Since $\tilde{X}_{kj}$ is independent of everything else, we have 
    \begin{align*}
        |\E (\partial_t \Psi^{-k})\cdot (X_{kj}u_{k/kj}^{t} - \tilde{X}_{kj})| &=|\E (\partial_t \Psi^{-k})\cdot X_{kj} u_{k/kj}^{t}|.
    \end{align*}
    Expanding around $X_{kj}=0$ and using \Cref{claim:recursive-claim-1}, we have 
    \begin{align*}
        |\E (&\partial_t \Psi^{-k})\cdot X_{kj} u_{k/kj}^{t}| \\
        &\lesssim  |\E (\partial_t \Psi^{-k})(X^{\{kj\}})\cdot X_{kj} u_{k/kj}^{t}| + |\E ( \partial_{X_{kj}}\partial_t\Psi^{-k})(X^{\{kj\}})\cdot X^2_{kj} u_{k/kj}^{t}| + N^{-3/2}\\
        &\lesssim \frac{1}{N}\E| ( \partial_{X_{kj}}\partial_t\Psi^{-k})(X^{\{kj\}})| + N^{-3/2},
    \end{align*}
    where in the second step we have used that the first term is equal to zero. Now $\partial_t \Psi$ is a smooth function with all non-zero derivatives uniformly bounded. Hence
    \begin{align*}
        \E| ( \partial_{X_{kj}}\partial_t\Psi^{-k})(X^{\{kj\}})| &\lesssim \sum_{t=1}^T \norm{\partial_{X_{kj}}S^{t,-k}_j}_2 + \norm{\partial_{X_{kj}}S^{t,-k}_{j'}}_2 \\
        &\lesssim 1/\sqrt{N}
    \end{align*}
    by \Cref{claim:recursive-claim-1}. This proves that the terms in the first sum of \eqref{eq:Psi-expansion} are $\lesssim N^{-3/2}.$ For the second term, the situation is simpler: using \Cref{claim:recursive-claim-1}, we have
    \begin{align*}
        | \E(\partial_{s}\partial_t &\Psi^{-k})\cdot (X_{kj(s)}X_{kj(t)}u_{k/kj(s)}^{s}u_{k/kj(t)}^{t} - \tilde{X}^s_{kj(s)}\tilde{X}^t_{kj(t)})| \\
        &\lesssim | \E(\partial_{s}\partial_t \Psi^{-k}(X^{\{kj(s), kj(t)\}})\cdot (X_{kj(s)}X_{kj(t)}u_{k/kj(s)}^{s}u_{k/kj(t)}^{t} - \tilde{X}^s_{kj(s)}\tilde{X}^t_{kj(t)})| + N^{-3/2}\\
        &\lesssim \frac{1}{N}|\delta_{j(s)j(t)}\E[u_{k/kj(s)}^{s}u_{k/kj(t)}^{t}] - \delta_{j(s)j(t)} \E [u_k^{s}u_k^{t}]|  + N^{-3/2} \\
        &\lesssim N^{3/2},
    \end{align*}
    were we have used \Cref{thm:derivative-bounds} in the last step. This proves \eqref{eq:can-replace-with-gaussian}, completing the proof of \Cref{lem:inductive-case-lemma2} and hence of \Cref{thm:main-strong}.

To conclude, we prove \Cref{claim:recursive-claim-1}.
\begin{proof}[of \Cref{claim:recursive-claim-1}]
We have 
\begin{align*}
    \norm{S_j^{t,k}}_2 &\leq \norm{\sum_{i=1}^{k}X_{ij} u_{i/ij}^{t} }_2 + \norm{\sum_{i=k+1}^N \tilde{X}_{ij}^t}_2.
\end{align*}
The second term satistfies 
\begin{align*}
    \norm{\sum_{i=k+1}^N \tilde{X}_{ij}^t}_2^2 &= \frac{1}{N}\sum_{i=k+1}^N \norm{u_i^{t}}_2^2  \lesssim 1
\end{align*}
by \Cref{thm:derivative-bounds}. The first term on the other hand is bounded in the same way that we bounded 
\[
\norm{\sum_{i=1}^N X_{ij} \partial_{h_j^s} u_{i/ij}^{t}}_p^p
\]
in the proof of \Cref{claim:recursive-claim-1} above. Next, again using the same expansion as in the proof of \Cref{claim:recursive-claim-1}, we have
    \begin{align*}
        &\norm{\partial_{X_{kj'}}S^{t,-k}_j}_2^2 = \E\left(\sum_{i=1}^{k-1} X_{ij} \partial_{X_{kj'}}u_{i/ij}^{t}\right)^2 \\
        &= \sum_{i=1}^{k-1}\E X_{ij}^2( \partial_{X_{kj'}}u_{i/ij}^{t})^2 +  \sum_{1\leq i\neq i'<k}\E X_{ij}X_{i'j}( \partial_{X_{kj'}}u_{i/ij}^{t})( \partial_{X_{kj'}}u_{i'/i'j}^{t}) \\
        &\lesssim 1/N + \int_0^1 \int_0^1d\eta d\eta'\sum_{1\leq i\neq i'<k}\E X_{ij}^2X_{i'j}^2(\partial_{X_{ij}}\partial_{X_{i'j}})( \partial_{X_{kj'}}u_{i/ij}^{t})( \partial_{X_{kj'}}u_{i'/i'j}^{t})(X^{\{i,i'\},j}(\eta,\eta')) \\
         &= 1/N + \int_0^1 \int_0^1d\eta d\eta'\sum_{1\leq i\neq i'<k}\E X_{ij}^2X_{i'j}^2( \partial_{X_{i'j}}\partial_{X_{kj'}}u_{i/ij}^{t})( \partial_{X_{ij}}\partial_{X_{kj'}}u_{i'/i'j}^{t})(X^{\{i,i'\},j}(\eta,\eta')) \\
         &\lesssim 1/N
\end{align*}
by \Cref{thm:derivative-bounds}, since $i,i'$ and $k$ are all distinct, $i$ and $i'$ are isolated in the graphs of $\partial_{X_{i'j}}\partial_{X_{kj'}}$ and $\partial_{X_{ij}}\partial_{X_{kj'}}$, respectively. This concludes the proof.
\end{proof}

\section{Subgaussian and subexponential random variables}\label{sec:subgaussian-subexponential}
Here we state a few standard facts about subgaussian and subexpoential random variables that we use throughout our proofs. All proofs can be found in standard texts, e.g. \cite{vershynin2018high}. In the lemmas below, for simplicity and without loss of generality, we take all the absolute constants to be the same, denoted by $C.$

\begin{definition}\label{defn:subgaussian-subexponential-norms}
Let $X$ be a random variable in $\R.$ We call the quantities
\begin{align*}
    \norm{X}_{\psi_1} &:= \inf\{t>0 : \E e^{X/t}\leq 2\} \\
    \norm{X}_{\psi_2} &:= \inf\{t>0 : \E e^{X^2/t^2}\leq 2\}
\end{align*}
in $\R_+ \cup \{\infty\}$ the subexpotential and subgaussian norms of $X$, respectively.
\end{definition}

\begin{lemma}\label{lem:prod-of-subgaussians}
For every pair of random variables $X,Y$ in $\R$, we have
\begin{align*}
    \norm{XY}_{\psi_1} &\leq \norm{X}_{\psi_1}\norm{Y}_{\psi_2}.
\end{align*}
\end{lemma}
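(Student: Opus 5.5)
The plan is to normalize and reduce everything to a single exponential-moment computation. By homogeneity of both Orlicz norms, I would set $a=\norm{X}_{\psi_1}$ and $b=\norm{Y}_{\psi_2}$, assume both are finite and positive, and replace $X,Y$ by $X/a$ and $Y/b$. It then suffices to show
\[
\E e^{XY}\le 2 \qquad\text{given}\qquad \E e^{X}\le 2,\quad \E e^{Y^2}\le 2 .
\]
Everything hinges on a pointwise inequality bounding $e^{xy}$ by a convex combination of $e^{x}$ and $e^{y^2}$. Once such an inequality holds, taking expectations gives $\E e^{XY}\le \lambda\cdot 2+(1-\lambda)\cdot 2=2$, which is the claim.

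In order, the steps I would carry out are:
\begin{enumerate}
\item Try a weighted Young's inequality. The natural candidate is $xy\le \lambda x+(1-\lambda)y^2$, followed by convexity of $\exp$. Weighted Young with conjugate exponents $(1,\infty)$ is degenerate, so I would first test the pairing $xy\le x^2/2+y^2/2$.
\item Apply Jensen in the form $e^{x^2/2+y^2/2}\le \tfrac12 e^{x^2}+\tfrac12 e^{y^2}$.
\item Take expectations.
\end{enumerate}
This chain closes only if $X$ is also controlled in $\psi_2$. Then $\norm{XY}_{\psi_1}\le\norm{X}_{\psi_2}\norm{Y}_{\psi_2}$, which is exactly the form invoked in the base case of \Cref{thm:derivative-bounds}, where $X_{ij}$ and $\theta_j$ are both subgaussian.

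The main obstacle is step 1 with only $\E e^{X}\le 2$ available. Note that \Cref{defn:subgaussian-subexponential-norms} as written has no absolute value, so this hypothesis is a one-sided exponential moment bound on $X$. No inequality of the form $e^{xy}\le \lambda e^{x}+(1-\lambda)e^{y^2}$ holds for all real $x,y$: for fixed $y$ the left side grows like $e^{xy}$ in $x$, which beats $e^{x}$ once $|y|>1$.

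More seriously, the statement as printed admits a counterexample that I would check first. Take $X$ standard exponential and $Y$ an independent standard Gaussian. Then $\norm{X}_{\psi_1}<\infty$ and $\norm{Y}_{\psi_2}<\infty$, but $|XY|$ has tail of order $e^{-cu^{2/3}}$. Hence $\E e^{|XY|/t}=\infty$ for every $t$, and even the one-sided $\E e^{XY/t}=\infty$ by the symmetry of $Y$.

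So my plan for the printed version is to verify this counterexample carefully; I expect it to go through. I would then record that the inequality cannot hold with $\norm{X}_{\psi_1}$ on the right, and prove it with $\norm{X}_{\psi_2}$ via steps 1–3. That corrected form is the one the paper's uses actually require.
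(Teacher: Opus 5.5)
Your proposal is correct. The printed statement is false, your counterexample settles it, and the corrected inequality you prove is the one the paper actually needs. For the counterexample, the cleanest verification is by conditioning rather than by tail asymptotics. With $X\sim\mathrm{Exp}(1)$ independent of $Y\sim N(0,1)$ one has $\E[e^{XY/t}\mid X]=e^{X^2/(2t^2)}$, and $\E e^{X^2/(2t^2)}=\int_0^\infty e^{x^2/(2t^2)-x}\,dx=\infty$ for every $t>0$. Hence $\norm{XY}_{\psi_1}=\infty$, whether one reads the definition one-sided, as printed, or with $|XY|$. Meanwhile $\norm{X}_{\psi_1}=2$ and $\norm{Y}_{\psi_2}=\sqrt{8/3}$. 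This also shows that no bound of the form $C\norm{X}_{\psi_1}\norm{Y}_{\psi_2}$ can hold for any constant $C$. It spares you from justifying a lower bound on the $e^{-cu^{2/3}}$ tail.

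The paper gives no proof of its own and simply defers to Vershynin's text. Its use of the lemma in the base case for $b_i^*$ in the proof of \Cref{thm:derivative-bounds}, where $X_{ij}$ and $\theta_j$ are both subgaussian, shows the intended statement is $\norm{XY}_{\psi_1}\le\norm{X}_{\psi_2}\norm{Y}_{\psi_2}$. Your steps 1--3 are the standard proof of that inequality (Vershynin's Lemma 2.7.7). The only difference is that Vershynin applies Young's inequality a second time where you use convexity of $\exp$.

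Three small points make your argument airtight:
\begin{enumerate}
\item Use $|xy|\le\tfrac12(x^2+y^2)$ so the argument also covers the usual two-sided $\psi_1$ norm.
\item Note that $\E e^{X^2/a^2}\le 2$ holds at $t=a$ itself, by monotone convergence, so the normalization is legitimate.
\item Dispose of the degenerate cases separately. If $a$ or $b$ is $0$, then $XY=0$ a.s. and the left side vanishes. If one is infinite and the other positive, there is nothing to prove.
\end{enumerate}
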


\begin{lemma}\label{lem:subgaussian-subexp-moments}
 There exists a universal constant $C>0$ such that the following holds. For every random variable $X$ in $\R$ and $p\geq 1,$ we have 
 \begin{align*}
     \norm{X}_{p} &\leq C \norm{X}_{\psi_1} p \\
     \norm{X}_{p} &\leq C \norm{X}_{\psi_2} \sqrt{p}.
 \end{align*}
\end{lemma}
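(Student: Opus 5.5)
The plan is the standard tail-integration argument, carried out separately for the two norms after normalizing by homogeneity. Throughout, I read \Cref{defn:subgaussian-subexponential-norms} with $|X|$ in place of $X$ inside the exponential. For $\norm{\cdot}_{\psi_1}$ this reading is necessary: as written, the condition $\E e^{X/t}\le 2$ says nothing about the lower tail of $X$.

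\emph{Normalization.} Let $K$ denote the relevant norm. If $K=\infty$ there is nothing to prove. If $K=0$, then $\E e^{|X|/t}\le 2$ (resp. $\E e^{X^2/t^2}\le 2$) for every $t>0$. Letting $t\downarrow 0$ and using Fatou's lemma forces $X=0$ almost surely. If $0<K<\infty$, the infimum in the definition is attained by monotone convergence as $t\downarrow K$. Since both norms and $\norm{\cdot}_p$ are positively homogeneous, we may replace $X$ by $X/K$ and assume $K=1$. That is, $\E e^{|X|}\le 2$ in the subexponential case, or $\E e^{X^2}\le 2$ in the subgaussian case.

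\emph{Tail bounds.} By Markov's inequality applied to $e^{|X|}$ (resp. $e^{X^2}$), we get $\Pr(|X|>s)\le 2e^{-s}$ (resp. $\Pr(|X|>s)\le 2e^{-s^2}$) for all $s\ge 0$.

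\emph{Moment integration.} We use the layer-cake formula $\E|X|^p=\int_0^\infty p s^{p-1}\Pr(|X|>s)\,ds$.

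In the subexponential case this gives
\begin{align*}
\E|X|^p\le 2p\,\Gamma(p)\le 2p\cdot p^{p},
\end{align*}
hence $\norm{X}_p\le (2p)^{1/p}\,p\le 4p$.

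In the subgaussian case, substituting $r=s^2$ gives
\begin{align*}
\E|X|^p\le 2p\int_0^\infty s^{p-1}e^{-s^2}\,ds = p\,\Gamma(p/2).
\end{align*}
For $p\ge 2$ we use $\Gamma(x)\le x^x$ for $x\ge 1$, which gives $\E|X|^p\le p\,(p/2)^{p/2}$. For $1\le p<2$ we use $\Gamma(p/2)\le\Gamma(1/2)=\sqrt{\pi}$. In either case $\norm{X}_p\le p^{1/p}\max\{\sqrt{p/2},\,\pi^{1/(2p)}\}\le C\sqrt p$.

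Undoing the normalization yields $\norm{X}_p\le C\norm{X}_{\psi_1}p$ and $\norm{X}_p\le C\norm{X}_{\psi_2}\sqrt p$, with a single absolute constant $C$, say $C=4$.

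There is no real obstacle. The only points needing care are the interpretation of the $\psi_1$ definition (with $|X|$), the degenerate cases $K\in\{0,\infty\}$, and the elementary Gamma-function bounds for small $p$. Each is handled above.
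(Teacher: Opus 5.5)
Your proof is correct. It is the standard argument (MGF condition, then Markov tail bound, then layer-cake integration) from the text the paper cites in place of a proof, so it matches the intended approach. Your reading of the definition of $\norm{\cdot}_{\psi_1}$ with $|X|$ in the exponent is genuinely necessary: as literally written, $X\equiv -1$ satisfies $\E e^{X/t}\le 1$ for every $t>0$ and so has $\psi_1$-norm $0$, while $\norm{X}_p=1$.
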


\begin{lemma}\label{lem:sum-subgaussian-subexp}
    There exists a universal constant $C>0$ such that the following holds. Suppose $X_1,\dots,X_n$ are independent, centered random variables. Then we have 
    \begin{align*}
        \norm{\sum_{i=1}^n X_i}_{\psi_2}^2 &\leq C\sum_{i=1}^n \norm{X_i}_{\psi_2}^2 \\
        \norm{\sum_{i=1}^n X_i}_{\psi_1}^2 &\leq C\sum_{i=1}^n \norm{X_i}_{\psi_1}^2.
    \end{align*}
\end{lemma}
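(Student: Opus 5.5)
We will prove both inequalities by passing through moment generating functions, using the standard equivalence (up to universal constants) between the Orlicz norms of \Cref{defn:subgaussian-subexponential-norms}, tail bounds, and MGF bounds for centered variables (as in \cite{vershynin2018high}). Throughout, we read the Orlicz norms with $|X|$ in the exponent, which is the intended convention. Universal constants are freely enlarged and all called $C$ (or $c$), consistent with the convention stated at the start of the section.

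\emph{Subgaussian case.} The first step is the MGF bound: if $X$ is centered with $\norm{X}_{\psi_2}=K$, then $\E e^{\lambda X}\leq e^{C\lambda^2K^2}$ for all $\lambda\in\R$. To prove it, expand $e^{\lambda X}$ in a power series and use $\E X=0$ to kill the linear term. The higher moments are controlled by \Cref{lem:subgaussian-subexp-moments}, $\norm{X}_p\leq CK\sqrt p$, together with $p!\geq (p/e)^p$; summing the series gives the bound. Next, by independence,
\[
\E e^{\lambda\sum_i X_i}=\prod_i \E e^{\lambda X_i}\leq e^{C\lambda^2\sigma^2},\qquad \sigma^2:=\sum_i\norm{X_i}_{\psi_2}^2 .
\]
Finally, we convert back. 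A Chernoff bound gives $\Pr(|\sum_i X_i|>t)\leq 2e^{-t^2/(4C\sigma^2)}$, and integrating this tail bound shows $\E e^{(\sum_i X_i)^2/(C'\sigma^2)}\leq 2$. Hence $\norm{\sum_i X_i}_{\psi_2}^2\leq C'\sigma^2$.

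\emph{Subexponential case.} The same scheme applies, but the MGF bound is only local: if $X$ is centered with $\norm{X}_{\psi_1}=K$, then $\E e^{\lambda X}\leq e^{C\lambda^2K^2}$ for $|\lambda|\leq 1/(CK)$. This again follows from the power series together with $\norm{X}_p\leq CKp$, where the restriction on $\lambda$ makes the geometric series converge. Set $\sigma^2=\sum_i\norm{X_i}_{\psi_1}^2$ and $K_*=\max_i\norm{X_i}_{\psi_1}$. Multiplying the MGF bounds gives $\E e^{\lambda S}\leq e^{C\lambda^2\sigma^2}$ for $|\lambda|\leq 1/(CK_*)$, where $S=\sum_i X_i$. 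Optimizing the Chernoff bound over this range of $\lambda$ yields the Bernstein inequality
\[
\Pr(|S|>t)\leq 2\exp\!\bigl(-c\min(t^2/\sigma^2,\;t/K_*)\bigr).
\]

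\emph{Main obstacle.} The delicate step is converting this mixed tail into the bound $\norm{S}_{\psi_1}\lesssim\sigma$ claimed in the lemma, rather than the weaker $\sigma+K_*\log n$ that a naive argument could suggest. The key observation is that $K_*\leq\sigma$. Consequently, for $t\geq\sigma$ we have both $t/K_*\geq t/\sigma$ and $t^2/\sigma^2\geq t/\sigma$, so the tail is at most $2e^{-ct/\sigma}$. For $t<\sigma$, the bound $\Pr(|S|>t)\leq 1\leq e\cdot e^{-t/\sigma}$ holds trivially. Together these give a pure exponential tail at scale $\sigma$ for all $t\geq 0$. Integrating, $\E e^{|S|/(C\sigma)}=1+\int_0^\infty \Pr(|S|>C\sigma s)e^{s}ds\leq 2$ once $C$ is large enough. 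Therefore $\norm{S}_{\psi_1}\leq C\sigma$, which is the second inequality with squares on both sides.
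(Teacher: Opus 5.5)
Your proof is correct and is the standard textbook argument: a centered MGF bound, multiplying the bounds over the independent summands, a Chernoff/Bernstein tail bound, and integrating the tail back into the Orlicz norm. The paper does not write this out and simply defers to \cite{vershynin2018high}. The one step beyond the textbook statements is your observation $K_*\le\sigma$, which correctly turns Bernstein's mixed tail into a pure exponential tail at scale $\sigma$. Your reading of the $\psi_1$ norm with $|X|$ in the exponent is also necessary: under the paper's literal one-sided definition the second inequality fails.
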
